\crefname{equation}{Eq.}{Eqs.}
\Crefname{equation}{Eq.}{Eqs.}
\newtheorem{theorem}{Theorem}[section]
\newtheorem{lemma}[theorem]{Lemma}
\newtheorem{proposition}[theorem]{Proposition}
\newtheorem{fact}[theorem]{Fact}
\newtheorem{corollary}[theorem]{Corollary}
\newtheorem{infthm}[theorem]{Informal Theorem}
\theoremstyle{definition}
\newtheorem{definition}[theorem]{Definition}
\newcommand{\tr}{\mathrm{tr}}
\newcommand{\trace}{\tr}
\newcommand{\Ex}{\mathop{\bf E\/}}
\newcommand{\Es}[1]{\mathop{\bf E\/}_{{\substack{#1}}}}
\newcommand{\stab}{\mathrm{Stab}}
\newcommand{\stabn}{\mathrm{Stab}\left(n\right)}
\newcommand{\Cln}{\mathrm{Cl}\left(n\right)}
\newcommand{\Clnk}{\mathrm{Cl}\left(n,k\right)}
\newcommand{\F}{\mathbb{F}_2}
\newcommand{\Ft}{\mathbb{F}_2^t}
\newcommand{\Fn}{\mathbb{F}_2^n}
\newcommand{\Ftwon}{\mathbb{F}_2^n}
\newcommand{\spanP}{\mathbb{P}_{K}}
\newcommand{\avgspanP}{\overline{\mathbb{P}_{K}}}
\renewcommand{\vec}{\mathbf}
\title{Single-copy stabilizer testing}
 \author{Marcel Hinsche\thanks{Freie Universität Berlin  \& 
 IBM Quantum, IBM Research – Zurich, \texttt{m.hinsche@fu-berlin.de}} \and Jonas Helsen\thanks{Centrum Wiskunde \& Informatica (CWI) and QuSoft, Amsterdam, \texttt{jonas@cwi.nl}}}
\begin{document}

\maketitle

\begin{abstract}
\noindent We consider the problem of testing whether an unknown $n$-qubit quantum state $\ket{\psi}$ is a stabilizer state, with only single-copy access. We give an algorithm solving this problem using $O(n)$ copies, and conversely prove that $\Omega(\sqrt{n})$ copies are required for any algorithm.
The main observation behind our algorithm is that when repeatedly measuring in a randomly chosen stabilizer basis, stabilizer states are the most likely among the set of all pure states to exhibit linear dependencies in measurement outcomes. Our algorithm is designed to probe deviations from this extremal behavior.

For the lower bound, we first reduce stabilizer testing to the task of distinguishing random stabilizer states from the maximally mixed state. We then argue that, without loss of generality, it is sufficient to consider measurement strategies that a) lie in the commutant of the tensor action of the Clifford group and b) satisfy a Positive Partial Transpose (PPT) condition. By leveraging these constraints, together with novel results on the partial transposes of the generators of the Clifford commutant, we derive the lower bound on the sample complexity.
\end{abstract}
\newpage
\tableofcontents

\section{Introduction}\label{sec:introduction}
Understanding properties of unknown quantum systems is crucial for applications ranging from benchmarking quantum devices to exploring fundamental physics. One key property of quantum states is their non-stabilizerness which quantifies their deviation from the set of stabilizer states. This property is intricately linked to the ability to enhance Clifford operations, ultimately enabling universal quantum computation. In this work, we are concerned with characterizing this property. Specifically, we study \textit{stabilizer testing}, which has been extensively studied recently within the field of quantum property testing~\cite{montanaroSurveyQuantumProperty2016, grossSchurWeylDualityClifford2021a, buStabilizerTestingMagic2023, grewalImprovedStabilizerEstimation2024a, arunachalamTolerantTestingStabilizer2024b, chenStabilizerBootstrappingRecipe2024}. 

\begin{definition}[Stabilizer testing]
\label{def:stabilizer-testing-informal}
Given access to copies of an unknown $n$-qubit pure state $\ket\psi$, decide if $\ket\psi$ is a stabilizer states or at least $\epsilon$-far from all stabilizer states.
\end{definition}

Motivated by practical feasibility considerations, recent research has made significant progress in understanding the statistical overheads associated with restricted quantum learning and testing protocols \cite{bubeckEntanglementNecessaryOptimal2020a, chenHierarchyReplicaQuantum2021b, huangQuantumAdvantageLearning2022c, loweLowerBoundsLearning2022, chenEfficientPauliChannel2023, oufkirSampleOptimalQuantumProcess2023, harrowApproximateOrthogonalityPermutation2023, caroLearningQuantumProcesses2024a, kingTriplyEfficientShadow2024, liuRoleSharedRandomness2024, chenOptimalTradeoffsEstimating2024a}.
In learning theory and property testing one often considers restricting the learning/testing algorithm to process only one copy of the unknown state at a time. We refer to such algorithms as \textit{single-copy} algorithms. This in contrast to \textit{multi-copy} algorithms that can jointly operate across multiple copies at once. In practice, this limitation may arise due to a lack of quantum memory to store multiple copies which is why in some works single-copy algorithms are referred to as learning algorithms without quantum memory~\cite{chenExponentialSeparationsLearning2022a,huangQuantumAdvantageLearning2022c}
. It may however also arise due to the difficulty of implementing joint operations across multiple copies. The restriction to single-copy algorithms has been shown to lead to large, often even exponential, increases in the number of copies necessary for certain learning or testing tasks when compared to general multi-copy algorithms.

In the context of \emph{learning} stabilizer states, this separation has been tightly characterized (c.f. \Cref{tab:overview-stabilizer-testing-results}). However, for stabilizer testing, an understanding of the separation between single- and multi-copy algorithms is still missing. While Ref.~\cite{grossSchurWeylDualityClifford2021a} gave a $2$-copy algorithm using in total $6$ copies of the unknown state (in the case of qubits), thereby settling the multi-copy complexity, the single-copy complexity remained open.

\subsection{Results}
In this work, we address this gap: First, we give a single-copy stabilizer testing algorithm using $O(n)$ copies, based on the computational difference sampling primitive introduced in \cite{grewalEfficientLearningQuantum2024d}.

\begin{infthm}[Upper bound]\label{infthm:upper}
    There exists a single-copy algorithm for stabilizer testing that uses $t=O(n)$ copies running in time $O(n^3)$.
\end{infthm}

We note that sample complexity of this algorithm beats a brute-force approach using classical shadows \cite{huangPredictingManyProperties2020c}. This approach would consist of estimating the fidelity with all $2^{O(n^2)}$ many $n$-qubit stabilizer states and would require $t=O(n^2)$ many copies (as well as exponential time complexity).\\

Secondly, we provide a lower bound of $\Omega(\sqrt{n})$ copies on the sample complexity of any single-copy algorithm, using the representation theory of the Clifford group \cite{grossSchurWeylDualityClifford2021a} and a Positive Partial Transpose (PPT) relaxation technique due to Harrow \cite{harrowApproximateOrthogonalityPermutation2023}:

\begin{infthm}[Lower bound]
       Any single-copy algorithm for stabilizer testing requires at least $t=\Omega(\sqrt{n})$ copies.
\end{infthm}

As far as we know, this is the first work showing such a fine-grained single-copy/multi-copy separation for a property testing problem, as opposed to the exponential separations usually seen in the literature. We remark that establishing this lower bound for stabilizer testing requires tools well beyond those used to show the lower bounds for stabilizer state learning. The latter bounds follow from the Holevo bound in the multi-copy case \cite{montanaroLearningStabilizerStates2017a} or
require computing only second moments in the single-copy case, see Theorem 5 in \cite{arunachalamOptimalAlgorithmsLearning2023a}.
In contrast, we use $t=\Omega(\sqrt{n})$-th moments of the Clifford group.

\renewcommand{\arraystretch}{1.5}
\begin{table}[h]
\small
\centering
\begin{tabular}{|>{\centering\arraybackslash}m{3cm}|>{\centering\arraybackslash}m{6cm}|>{\centering\arraybackslash}m{3cm}|}
    \hline
    & \textbf{Single-copy} & \textbf{Multi-copy} \\ \hline
    \textbf{Learning} & $\Theta(n^2)$ \cite{aaronsonIdentifyingStabilizerStates2008,arunachalamOptimalAlgorithmsLearning2023a,grewalEfficientLearningQuantum2024d} & $\Theta(n)$ \cite{montanaroLearningStabilizerStates2017a} \\ \hline
    \textbf{Testing} & $\Omega(\sqrt{n})\leq t\leq O(n)$ (\emph{This work}) & $\Theta(1)$ \cite{grossSchurWeylDualityClifford2021a}\\ \hline
\end{tabular}
 \caption{Upper and lower bounds on the number of copies $t$ required for learning and testing stabilizer states with single-copy and multi-copy access, respectively.\label{tab:overview-stabilizer-testing-results}}
\end{table}

\subsection{Technical overview}
\paragraph{Upper bound:}
The single-copy stabilizer testing algorithm we present is very much inspired by the learning algorithms for stabilizer states from \cite{aaronsonIdentifyingStabilizerStates2008,grewalEfficientLearningQuantum2024d,chiaEfficientLearning$t$doped2024}. In particular, in our algorithm, we utilize the \textit{computational difference sampling} primitive introduced by \cite{grewalEfficientLearningQuantum2024d}. It involves two copies of the unknown state $\ket{\psi}$, sampling each in the computational basis, and adding the outcome bitstrings mod 2 during classical post-processing (see \Cref{ssec:computational-difference-sampling} for more details).

Letting $\ket{\psi}$ be the unknown state we want to test, our single-copy algorithm proceeds as follows:
\begin{enumerate}
    \item Sample a uniformly random Clifford $C$ from the Clifford group $\Cln$.
    \item Use $ K  = O(n)$ (but $K>n)$ copies of $\ket{\psi}$ to repeatedly measure $C\ket{\psi}$ via computational difference sampling obtaining samples $\vec v_{1},\dots, \vec v_{K}$.
    \item Taking the samples as rows of a matrix, compute the rank of the $K\times n$ binary matrix in order to check if the samples span $\Fn$ or only a proper subspace of $\Fn$. In the former case, output 1, in the latter case output 0.
\end{enumerate}
We refer to the expectation value (over the choice of random Clifford) of this random process as the \textit{average spanning probability}:
\begin{equation}
    \avgspanP \left(\ket{\psi}\right):= \Es{C\sim \Cln} \left[ \Pr_{\vec v_{1},\dots, \vec v_{K}\sim r_{C \ket \psi}}\mathrm{span}\{ \vec v_{1},\dots,\vec v_{K} \}= \Fn  \right]\,
\end{equation}
Here, $r_{C \ket \psi}$ denotes the computational difference sampling distribution corresponding to the state $C\ket{\psi}$. By repeating the above 3-step process $O(1/\epsilon^2)$ many times, our algorithm estimates $\avgspanP \left(\ket{\psi}\right)$ up to additive precision $\epsilon$.

As we demonstrate, the average spanning probability $\avgspanP \left(\ket{\psi}\right)$ is extremal for stabilizer states, taking a value of roughly $\avgspanP \left(\ket{S}\right) \approx 0.42$ (for any $n\geq 10)$ and increases beyond this value for non-stabilizer states. Our key technical contribution is to show that for a state that is $\epsilon$-far from all stabilizer states, this increase is sufficiently large, namely of the order $\Omega(\epsilon)$. Hence, this increase can be detected sample-efficiently by estimating $\avgspanP(\ket{\psi})$ and comparing to the value for stabilizer states which can be efficiently computed exactly analytically. This gives rise to a single-copy stabilizer testing algorithm.

Understanding $\avgspanP$ for stabilizer states is easy because their output distributions in the computational basis are well-known to be uniform over affine subspaces of $\mathbb{F}_2^n$ (c.f. \Cref{eq:affine-subspace-stabilizer-states}). The key challenge is to analyze and bound the average spanning probability for non-stabilizer states $\ket{\psi}$ because their output distributions in the computational basis are no longer uniform over a subspace. To overcome this challenge, we make use of a key relation between the computational difference sampling distribution $r_\psi$ and the so-called \textit{characteristic distribution} $p_\psi$. The  characteristic distribution $p_\psi$ associated to a pure $n$-qubit state is the distribution over $\mathbb{F}_2^{2n}$ given by
\begin{equation}
p_{\psi}\left(\vec x\right)=\frac{1}{2^{n}}\left| \bra{\psi}P_{\vec x}\ket{\psi}\right|^{2}\,, 
\end{equation}
where we identify Pauli operators in $\{I,X,Y,Z\}^n$ with bitstrings $\vec x = (\vec a, \vec b)$ via $P_{\vec x} =i^{\vec a \cdot \vec b} X^{\vec a}Z^{\vec b}$.
The relation was established in \cite{grewalEfficientLearningQuantum2024d} and posits that the computational difference sampling distribution $r_\psi$ can be written in terms of $p_\psi$ as follows:
\begin{equation}
r_{\psi}\left(\vec a\right)=\sum_{\vec b\in\mathbb{F}_{2}^{n}}p_{\psi}\left(\vec a,\vec b\right)\,.
\end{equation}
Using this relation, we are able to rewrite $\avgspanP \left(\ket{\psi}\right)$ in terms of $p_\psi$. Lastly, to connect back to stabilizer testing, we use the fact that the weight of $p_\psi$ on any commuting subgroup of $\{I,X,Y,Z\}^n$ can be related to the \textit{stabilizer fidelity} of $\ket{\psi}$, defined as $\max_{\ket S\in\mathrm{Stab}\left(n\right)}\left|\braket{S|\psi}\right|^{2}$. In particular, for any such stabilizer group corresponding to an isotropic subspace $M\subset \mathbb{F}_2^{2n}$, it holds that \cite{grossSchurWeylDualityClifford2021a, grewalImprovedStabilizerEstimation2024a},
\begin{equation}
F_{\mathrm{Stab}}\left(\ket \psi\right)\geq \sum_{\vec x\in M} p_{\psi}\left(\vec x \right)\,.
\end{equation}

\paragraph{Lower bound:}
For the lower bound, we first establish a reduction between stabilizer testing and the task of distinguishing uniformly random stabilizer states from the maximally mixed state. This reduction is analogous to the reduction between purity testing and the task of distinguishing Haar random states from the maximally mixed state, which was previously considered in \cite{chenExponentialSeparationsLearning2022a, harrowApproximateOrthogonalityPermutation2023,chenOptimalTradeoffsEstimating2024a}. 
The difference is that the unitary group $U(n)$ is replaced with the Clifford group $\Cln$. Based on this analogy, we adapt the proof strategy due to Harrow \cite{harrowApproximateOrthogonalityPermutation2023} to the Clifford group. We now outline the main ideas behind this proof strategy in more detail.

Consider a single-copy algorithm with access to $t$ copies of either a random stabilizer state $\ket{S}$ or the maximally mixed state.
The first main insight is that since the states $\Es{\ket{S} \sim \stabn}\left[\ket{S}\!\bra{S}^{\otimes{t}} \right]$ and $\mathds{1}^{\otimes t}/2^{nt}$, which the algorithm is supposed to distinguish, both commute with all $C^{\otimes t}$ where $C\in \Cln$, we can restrict our attention to measurements in the commutant of the $t$-fold tensor action of the Clifford group, defined as
\begin{equation}
        \mathrm{Comm}(\Cln, t) := \{A \in \mathcal{L}((\mathbb{C}^{2})^{\otimes n})^{\otimes t}) \;|\; [A, C^{\otimes t}] = 0 \quad \forall C \in \Cln \} \,.
    \end{equation}
This commutant was fully characterized in the seminal work \cite{grossSchurWeylDualityClifford2021a}. In particular, it is spanned by operators $R(T)$ which are associated with certain subspaces $T \subset \mathbb{F}^{2t}_2$. Similar to the permutation operators spanning the commutant of the tensor action $U^{\otimes t}$ of the unitary group, these operators $R(T)$ are also approximately orthogonal with respect to the Hilbert-Schmidt inner product in the regime where $t\ll n$.

The second main insight is that any distinguishing two-outcome POVM $\{M_0, M_1\}$
corresponding to a single-copy measurement strategy is also \emph{Positive Partial Transpose} (PPT), i.e., it satisfies
\begin{equation}\label{eq:ppt-constraint-in-intro}
     0\preceq M_i^{\Gamma_{S}}\preceq I\quad\forall S\subseteq[t]\,.
\end{equation}
Here, $^{\Gamma_{S}}$ denotes taking the partial transpose with respect to the subset of copies indexed by some $S\subset \{1,\dots, t\}$. Motivated by this insight, we study the partial transposes of the unitary operators $R(O)$ which form an important subgroup of the commutant (see Section 4.2 in \cite{grossSchurWeylDualityClifford2021a}). In particular, we are able to characterize the singular values of the partial transposes $R(O)^{\Gamma_{S}}$.

Finally, using our insights about the partial transposes, we are able to leverage both the PPT constraint in \Cref{eq:ppt-constraint-in-intro} and the approximate orthogonality of the operators $R(T)$ to derive the sample complexity lower bound.

\subsection{Related work}
\paragraph{Learning stabilizer states and their generalizations:}
Our work is closely related to a line of work exploring the learnability in a tomographic sense of stabilizer states and their generalizations. This line started with \cite{aaronsonIdentifyingStabilizerStates2008, montanaroLearningStabilizerStates2017a} giving single- and multi-copy algorithms for learning stabilizer states, respectively. Since then, these works were generalized to learning states with large stabilizer dimension \cite{laiLearningQuantumCircuits2022a,grewalEfficientLearningQuantum2024d, chiaEfficientLearning$t$doped2024, leoneLearningTdopedStabilizer2024a}, meaning states stabilized by a non-maximal abelian subgroup of Pauli operators, and to higher degree binary phase states \cite{arunachalamOptimalAlgorithmsLearning2023a}, where degree 2 corresponds to stabilizer states. In all of these works, the underlying assumption is that the unknown state $\ket{\psi}$ belongs to the restricted class of states that is to be learned. This assumption sets these works apart from a testing scenario where the very goal is to test the validity of such an assumption.

\paragraph{Agnostic tomography of stabilizer states:} Recent works have shifted towards agnostic learning of states (also called agnostic tomography) where $\ket{\psi}$ can be arbitrary and the goal is to learn the best approximation from a given class of states. This agnostic learning framework naturally also gives rise to testing algorithms. For instance, \cite{grewalImprovedStabilizerEstimation2024a} addressed agnostic tomography of stabilizer states, \cite{grewalAgnosticTomographyStabilizer2024} focused on stabilizer product states, and \cite{chenStabilizerBootstrappingRecipe2024} extended the approach to stabilizer states, states with large stabilizer dimensions, and product states.

\paragraph{Tolerant stabilizer testing:}
Another exciting direction is \textit{tolerant} stabilizer testing, where the goal is to determine if an unknown state $\ket{\psi}$ is approximately a stabilizer state, rather than exactly one or far from it. Recent work \cite{grewalImprovedStabilizerEstimation2024a, arunachalamTolerantTestingStabilizer2024b}, building on \cite{grossSchurWeylDualityClifford2021a}, has shown that by repeatedly running the algorithm from \cite{grossSchurWeylDualityClifford2021a}, one can achieve tolerant testing across a broad range of parameters.

\paragraph{Measuring non-stabilizerness:}
Characterizing the amount of non-stabilizerness in a quantum state is also an active topic of research in a more physics-oriented literature. In particular, \cite{haugScalableMeasuresMagic2023a, haugEfficientQuantumAlgorithms2024a} showed how 2-copy Bell measurements can be used to quantify non-stabilizerness in quantum states. However, to the best of our knowledge, there is to date no efficient single-copy protocol for measuring non-stabilizerness. While it has been realized that the so-called \textit{stabilizer entropies}~\cite{leoneStabilizerEnyiEntropy2022b} can also be measured via single-copy protocols \cite{leoneStabilizerEnyiEntropy2022b,olivieroMeasuringMagicQuantum2022a}, the proposed protocols feature an exponential sample complexity.

\subsection{Open Problems}\label{ssec:open-problems}
As a result of this work, we identified multiple interesting open problems:
\begin{enumerate}
    \item \textbf{Matching upper and lower bounds:} Currently, our $\Omega(\sqrt{n})$ lower bound and our $O(n)$ upper bound on the sample complexity are not quite matching. On the one hand the lower bound can potentially be tightened by a sharper analysis of the orthogonality of the operators spanning the commutant $\mathrm{Comm}(\Cln, t)$ as well as an even more refined understanding of their partial transposes. On the other hand, we found a surprising $O(1)$ \emph{post-selective} single-copy algorithm for distinguishing stabilizer states from the maximally mixed state (see \Cref{ssec:tree-representation-framework} for more details). This could be indicative of the existence of single-copy algorithms with sub-linear scaling. 
    \item \textbf{Lower bound via the tree representation framework:} As mentioned before, \cite{chenExponentialSeparationsLearning2022a} obtains a lower bound for purity testing by considering an analogous distinguishing task to the one we consider. However, they bound the sample complexity of this distinguishing task via the so-called tree representation framework instead of the PPT relaxation. We discuss the obstacles to this approach in our case in \Cref{ssec:tree-representation-framework}. It remains an open question, if our lower bound can also be obtained by means of the tree representation framework.
    \item \textbf{Tolerant single-copy stabilizer testing, efficiently measuring of non-stabilizerness:} Can our single-copy stabilizer testing algorithm be made tolerant? A related open question is: Are there efficient single-copy algorithms for measuring some measure of non-stabilizerness? 
    \item \textbf{Multi-copy access but restricted memory:} Ref.~\cite{chenOptimalTradeoffsEstimating2024a} considers an intermediate scenario between single- and multi-copy access. Therein, multi-copy access is allowed, however the overall quantum memory is only $k<2n$ qubits, so one cannot operate across two full copies of the unknown state $\ket{\psi}$. In this setting, they establish a phase transition for the sample complexity of purity testing. Due to the structural similarities between purity testing and stabilizer testing, it is interesting to try to generalize their result to stabilizer testing. 
\end{enumerate}

\section{Preliminaries}\label{sec:preliminaries}
We begin by setting some notation. We often denote pure
states by $\psi=\ket{\psi}\bra{\psi}$. For a positive integer $n$, we define $\left[n\right]:=\left\{ 1,\dots,n\right\} $.
For a set of vectors $\left\{\vec v_{1},\dots, \vec v_{K}\right\} $, we denote
their span by $\langle \vec v_{1},\dots,\vec v_{K}  \rangle$. We denote by $\F$
the finite field of $2$ elements and by $\Fn$ the
$n$-dimensional vector space over this field. We denote by $\vec 0_n\in\Fn$ the zero-vector in this vector space. For $p$ a distribution
over a set $S$, we denote drawing a sample $x\in S$ according to
$p$ by $x\sim p$. For $p$ a distribution over $\Fn$
and $H\subseteq\Fn$ a subset, we denote the
weight of $p$ on the set $H$ by $p\left(H\right):=\sum_{\vec x\in H}p\left(\vec x\right)$. \\

We will make extensive use of the following standard fact of probability theory on linear spaces.
\begin{lemma}
\label{lem:subspace-weight-spanning-probability}
Let $p$ be a distribution
over $\Fn$ and let $H$ be a subspace of $\Fn$
and let $p(H)$ be the probability weight of $p$ on $H$.
Then, for $\vec x_{1},\dots, \vec x_{K}\sim p$,
\begin{equation}
\Pr_{\vec x_{1},\dots, \vec x_{K}\sim p}\,\left(\langle \vec x_{1},\dots, \vec x_{K}\rangle\subseteq H\right)=\left(p\left(H\right)\right)^{K}
\end{equation}
\end{lemma}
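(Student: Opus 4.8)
The plan is to reduce the event about spans to an event about individual samples, using only that $H$ is closed under linear combinations. Concretely, I would first argue that for fixed vectors $\vec x_1,\dots,\vec x_K\in\Fn$, the containment $\langle \vec x_1,\dots,\vec x_K\rangle\subseteq H$ holds if and only if $\vec x_i\in H$ for every $i\in[K]$. The forward direction is immediate since each $\vec x_i$ lies in its own span. The reverse direction is where the subspace hypothesis enters: if all $\vec x_i\in H$, then every $\F$-linear combination of them again lies in $H$ because $H$ is closed under addition and (trivially, over $\F$) scalar multiplication; hence the whole span is contained in $H$.

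Having established this equivalence of events, I would write
\begin{equation}
\Pr_{\vec x_{1},\dots, \vec x_{K}\sim p}\left(\langle \vec x_{1},\dots, \vec x_{K}\rangle\subseteq H\right)=\Pr_{\vec x_{1},\dots, \vec x_{K}\sim p}\left(\vec x_{i}\in H \text{ for all } i\in[K]\right)\,.
\end{equation}
Since the samples $\vec x_1,\dots,\vec x_K$ are drawn independently from $p$, the right-hand side factorizes as $\prod_{i=1}^{K}\Pr_{\vec x_i\sim p}(\vec x_i\in H)$, and each factor equals $p(H)$ by the definition of the weight of $p$ on the set $H$. This yields $(p(H))^K$, as claimed.

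Honestly, there is no real obstacle here: the only thing to be careful about is not to overlook that the equivalence of the two events genuinely uses that $H$ is a linear subspace (it would fail for an arbitrary subset, or even an affine subspace not through the origin once $K\geq 2$), and that the factorization uses the i.i.d. assumption on the samples. I would state the lemma's proof in exactly this two-step form — event rewriting, then independence — keeping it to a few lines.
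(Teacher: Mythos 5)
Your proof is correct: the containment $\langle \vec x_1,\dots,\vec x_K\rangle\subseteq H$ is indeed equivalent to $\vec x_i\in H$ for all $i$ precisely because $H$ is a subspace, and independence of the samples then gives $(p(H))^K$. The paper states this lemma as a standard fact and omits the proof entirely, so your two-step argument is exactly the intended (and only natural) justification.
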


\subsection{Vector spaces over \texorpdfstring{$\mathbb{F}_{2}$}{F2}}

In this work, we will deal repeatedly with the vector spaces $\mathbb{F}_{2}^{n}$
 and $\mathbb{F}_{2}^{2n}$ and two different inner products on them:
\begin{definition}[Standard inner product]
 For $\vec a, \vec b\in\mathbb{F}_{2}^{n}$, we define their \emph{standard
inner product} as 
\begin{equation}
\vec a\cdot \vec b=a_{1}b_{1}+\dots+a_{n}b_{n}
\end{equation}
where operations are performed over $\mathbb{F}_{2}$. 
\end{definition}

Note that the standard inner product is also defined on $\mathbb{F}_{2}^{2n}$.
For instance, for $\vec x=\left(\vec a, \vec b\right), \vec y=\left(\vec v, \vec w\right)\in\mathbb{F}_{2}^{2n}$
we have $\vec x\cdot \vec y= \vec a\cdot \vec v+ \vec b\cdot \vec w$. 
\begin{definition}[Orthogonal complement]
 Let $H\subseteq\mathbb{F}_{2}^{n}$ be a subspace. The \emph{orthogonal
complement }of $H$, denoted by $H^{\perp}$, is defined by
\begin{equation}
H^{\perp}:=\left\{ a\in\mathbb{F}_{2}^{n}: \vec a\cdot \vec b=0\,, \; \forall\,\vec b\in H\,\right\} .
\end{equation}
\end{definition}

\begin{fact}
Let $H$ be a subspace of $\mathbb{F}_{2}^{n}$. Then:
\end{fact}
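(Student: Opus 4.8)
The plan is to prove the three standard dualities for orthogonal complements with respect to the standard inner product on $\mathbb{F}_2^n$: the dimension formula $\dim H + \dim H^\perp = n$, the involution property $(H^\perp)^\perp = H$, and the De Morgan--type identities $(H_1+H_2)^\perp = H_1^\perp \cap H_2^\perp$ and $(H_1 \cap H_2)^\perp = H_1^\perp + H_2^\perp$. The entire argument rests on the single observation that the standard bilinear form on $\mathbb{F}_2^n$ is non-degenerate, since its Gram matrix in the standard basis is the identity.

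First I would establish the dimension formula. Fix a basis $\vec b_1,\dots,\vec b_k$ of $H$ and let $B\in\mathbb{F}_2^{k\times n}$ be the matrix having these vectors as rows. By definition $H^\perp = \{\vec x\in\mathbb{F}_2^n : B\vec x = \vec 0_k\}$, i.e.\ $H^\perp$ is the kernel of the $\mathbb{F}_2$-linear map $\vec x\mapsto B\vec x$. Since the rows of $B$ are linearly independent we have $\mathrm{rank}(B)=k$, so rank--nullity over $\mathbb{F}_2$ gives $\dim H^\perp = n-k = n-\dim H$. For the involution property, the inclusion $H\subseteq (H^\perp)^\perp$ is immediate from the definitions: every $\vec b\in H$ is orthogonal to every $\vec x\in H^\perp$, hence lies in $(H^\perp)^\perp$. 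Applying the dimension formula twice yields $\dim (H^\perp)^\perp = n-\dim H^\perp = n-(n-\dim H) = \dim H$, so the inclusion is in fact an equality.

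Finally, the identity $(H_1+H_2)^\perp = H_1^\perp\cap H_2^\perp$ follows by unwinding definitions: a vector is orthogonal to every element of $H_1+H_2$ if and only if it is orthogonal to every element of $H_1$ and to every element of $H_2$. The companion identity then follows by applying this to the subspaces $H_1^\perp, H_2^\perp$ and using the involution property: $(H_1^\perp+H_2^\perp)^\perp = (H_1^\perp)^\perp\cap(H_2^\perp)^\perp = H_1\cap H_2$, and taking $\perp$ of both sides gives $(H_1\cap H_2)^\perp = H_1^\perp+H_2^\perp$.

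\textbf{Main obstacle.} There is no genuine difficulty here; the statement is routine linear algebra. The only point requiring care is to avoid importing intuition from $\mathbb{R}^n$: over $\mathbb{F}_2$ one can have $H\cap H^\perp \neq \{\vec 0_n\}$ — indeed $H$ may be isotropic, so that $H\subseteq H^\perp$ — which means $H$ and $H^\perp$ are \emph{not} in general complementary subspaces and no orthogonal-projection argument is available. Consequently the dimension formula must be obtained via rank--nullity applied to the matrix $B$ rather than via a direct-sum decomposition, and the involution property via a dimension count rather than via "$H\oplus H^\perp = \mathbb{F}_2^n$".
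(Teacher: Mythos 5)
Your proof is correct, and since the paper states this as a standard fact without proof, your argument (rank--nullity applied to the matrix whose rows are a basis of $H$ for the dimension formula, then the inclusion $H\subseteq (H^\perp)^\perp$ plus a dimension count for the involution, rightly avoiding any orthogonal-decomposition argument since $H\cap H^\perp$ can be nontrivial over $\mathbb{F}_2$) is exactly the standard route the paper implicitly relies on. Note only that the first bullet, that $H^\perp$ is a subspace, is covered implicitly in your proof by the description of $H^\perp$ as the kernel of the linear map $\vec x\mapsto B\vec x$, and that the De Morgan identities you add are not part of the stated fact.
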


\begin{itemize}
\item $H^{\perp}$ is a subspace.
\item $\left(H^{\perp}\right)^{\perp}=H$.
\item $\dim\left(H\right)+\dim\left(H^{\perp}\right)=n$
\end{itemize}
For elements of $\mathbb{F}_{2}^{2n}$, we additionally also introduce
the symplectic inner product.
\begin{definition}[Symplectic inner product]
 For $\vec x,\vec y\in\mathbb{F}_{2}^{2n}$, we define their \emph{symplectic
inner product }as 
\begin{equation}
\left[\vec x, \vec y\right]=x_{1}y_{n+1}+x_{2}y_{n+2}+\dots+x_{2n}y_{n}
\end{equation}
where operations are performed over $\mathbb{F}_{2}$.
\end{definition}

A subspace $T\subset\mathbb{F}_{2}^{2n}$ is said to be \emph{isotropic} when for all $\vec x,\vec y\in T$, $\left[\vec x, \vec y\right]=0$. A \emph{Lagrangian
}subspace $M\subset\mathbb{F}_{2}^{2n}$ is an isotropic subspace
of maximal dimension, namely of dimension $\dim\left(M\right)=n$.

\subsection{Phaseless Pauli operators and the characteristic distribution}
In this section we recall some well-known facts about the Pauli group.
The single-qubit Pauli matrices are denoted by $\left\{ I,X,Y,Z\right\} $.
The $n$-qubit Pauli group $\mathcal{P}_{n}$ is the set $\left\{ \pm1,\pm i\right\} \times\left\{ I,X,Y,Z\right\} ^{\otimes n}$.
The Clifford group is the normalizer of the Pauli group. We denote
the $n$-qubit Clifford group by $\Cln$.

In this work, we will deal almost exclusively with phaseless Pauli
operators (also often referred to as Weyl operators, c.f. \cite{grossSchurWeylDualityClifford2021a, grewalEfficientLearningQuantum2024d}),
i.e., elements of the set $\left\{ I,X,Y,Z\right\} ^{\otimes n}$.
Hence, whenever we refer to a Pauli operator, we refer to its phaseless
version unless explicitly stated otherwise. Since $\left\{ I,X,Y,Z\right\} ^{\otimes n}$
is one-to-one with $\mathbb{F}_{2}^{2n}$, we can label elements of
this set by bitstrings of length $2n$ as follows. Let $\vec x=\left(\vec a, \vec b\right)=\left(a_{1},a_{2},\dots,a_{n},b_{1},b_{2}\dots,b_{n}\right)\in\mathbb{F}_{2}^{2n}$.
We then define
\begin{equation}
P_{\vec x}=i^{\vec a\cdot \vec b}\left(X^{a_{1}}Z^{b_{1}}\right)\otimes\cdots\otimes\left(X^{a_{n}}Z^{b_{n}}\right)=i^{\vec a \cdot \vec b} X^{\vec a}Z^{\vec b}\,.
\end{equation}
Here, as an exception, the inner product $\vec a\cdot \vec b$ on the phase in front is understood as being an integer rather than mod 2.
Throughout this work, we will often identify bitstrings in $\vec x\in\mathbb{F}_{2}^{2n}$
with their corresponding (phaseless) Pauli operator $P_{\vec x}$. For
instance, we will identify the set $\mathcal{X}:=\left\{ I,X\right\} ^{\otimes n}$
of Pauli strings made up only of $I$ and $X$ factors with $\Fn\times \vec 0_n$.
Similarly, we will identify $\mathcal{Z}:=\left\{ I,Z\right\} ^{\otimes n}$
with $\vec 0_n \times\Fn$. We also introduce the notation
$\mathcal{Z}^{\times}=\mathcal{Z}\setminus\left\{ I^{\otimes n}\right\} $
which we identify with $\vec 0_n \times\mathbb{F}_{2}^{n}\setminus\left\{ \vec 0_{2n}\right\} $.

 The commutation relations between Pauli operators are captured by
 the symplectic inner product of their corresponding bitstrings. In
 particular, for $\vec x,\vec y\in\mathbb{F}_{2}^{2n}$, the corresponding Pauli
 operators $P_{\vec x},P_{\vec y}$ commute if $\left[\vec x,\vec y\right]=0$ and anticommute
 if $\left[\vec x,\vec y\right]=1$.

Any $n$-qubit state $\rho$ can be expanded in the Pauli basis as
\begin{equation}
\rho=\frac{1}{2^{n}}\sum_{\vec x\in \Ftwon}\trace\left(\rho P_{\vec x}\right)P_{\vec x}\,.
\end{equation}
The squared coefficients of this expansion form a probability distribution
known as the \textit{characteristic distribution}:
\begin{definition}[Characteristic distribution \cite{grossSchurWeylDualityClifford2021a, grewalEfficientLearningQuantum2024d}]
Let $\ket \psi$ be an $n$-qubit pure state, then its characteristic
distribution $p_{\psi}$ is defined via
\begin{equation}
p_{\psi}\left(\vec x\right)=\frac{1}{2^{n}}\trace\left(\psi P_{\vec x}\right)^{2}\,.
\end{equation}
\end{definition}

\noindent Note that this distribution is automatically normalized
because $\frac{1}{2^{n}}\sum_{\vec x}\trace\left(\psi P_{\vec x}\right)^{2}=\trace\left(\psi^{2}\right)=1$.

\subsection{Computational difference sampling}\label{ssec:computational-difference-sampling}

The core measurement routine of the testing algorithm introduced in this paper (Theorem \ref{infthm:upper}) relies on \emph{computational difference sampling}. This primitive
was introduced in \cite{grewalEfficientLearningQuantum2024d}
and refers to measuring two copies of a state in the
computational basis and adding the obtained outcomes
via bit-wise addition mod 2 in classical post-processing. More formally, we have the following definition:
\begin{definition}[Computational difference sampling]
 Computational difference sampling a quantum state $\ket{\psi}$
corresponds to the following quantum measurement. (1) Measure $\ket{\psi}$
in the computational basis to get outcome $\vec a\in\mathbb{F}_{2}^{n}$
. (2) Measure an additional copy of $\ket{\psi}$ in the computational
basis to get another outcome $\vec b\in\mathbb{F}_{2}^{n}$. (3) Output
$\vec a+\vec b\in\mathbb{F}_{2}^{n}$ (addition mod 2). Let $r_{\psi}\left(\vec a\right)$
denote the probability of sampling $\vec a\in\mathbb{F}_{2}^{n}$ through
this process.
\end{definition}

\noindent Importantly, computational difference sampling only uses
single-copy measurements. In this work, we will repeatedly make use
of the following key correspondence between subspace weights of the
computational difference sampling distribution $r_{\psi}$ and the
characteristic distribution $p_{\psi}$:
\begin{lemma}[Subspace weight correspondence between $r_{\psi}$ and $p_{\psi}$]
\label{lem:relation-r-p}
Let $\ket \psi$ be a pure $n$-qubit quantum
state. Given a subspace $H\subseteq\mathbb{F}_{2}^{n}$, consider
its orthogonal complement $H^{\perp}$ (with respect to the standard
inner product on $\mathbb{F}_{2}^{n}$). Then,
\begin{equation}
r_{\psi}\left(H\right)=\left|H\right|\,p_{\psi}\left(\vec 0 _n\times H^{\perp}\right)\,,
\end{equation}
where $\vec 0_n\times H^{\perp}\subseteq\mathcal{Z}$.
\end{lemma}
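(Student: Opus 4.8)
The plan is to compute the computational difference sampling probability $r_\psi(\vec a)$ for a single outcome $\vec a$ directly from the definition, express it in terms of the characteristic distribution, and then sum over $\vec a \in H$. First I would write $r_\psi(\vec a) = \sum_{\vec b \in \Fn} \langle \vec b | \psi \rangle \langle \psi | \vec b \rangle \langle \vec a + \vec b | \psi \rangle \langle \psi | \vec a + \vec b \rangle$ using that step (1) gives outcome $\vec b$ with probability $|\langle \vec b | \psi\rangle|^2$ and step (2) independently gives $\vec a + \vec b$ (so that the difference is $\vec a$) with probability $|\langle \vec a + \vec b | \psi\rangle|^2$. The next move is to recognize this expression as related to the diagonal matrix elements of $\psi$ in the computational basis, and to connect it to expectation values of $Z$-type Paulis. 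Concretely, for $\vec c \in \Fn$, one has $\tr(\psi \, X^{\vec c} Z^{\vec 0}) $-type sums --- wait, more precisely I want to use that $\langle \vec b | \psi | \vec b \rangle = 2^{-n} \sum_{\vec d \in \Fn} (-1)^{\vec d \cdot \vec b} \tr(\psi Z^{\vec d})$, i.e. the diagonal of $\psi$ is the inverse Fourier transform of the $Z$-Pauli expectation values.

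The key computation is then a Fourier/convolution identity: substituting the Fourier expansions of both diagonal factors into $r_\psi(\vec a)$ and summing over $\vec b \in \Fn$ collapses one of the two Fourier sums (the sum over $\vec b$ of $(-1)^{(\vec d + \vec d')\cdot \vec b}$ forces $\vec d = \vec d'$), yielding
\begin{equation}
r_\psi(\vec a) = \frac{1}{2^n}\sum_{\vec d \in \Fn} (-1)^{\vec d \cdot \vec a}\, \tr(\psi Z^{\vec d})^2 = \sum_{\vec d \in \Fn} (-1)^{\vec d \cdot \vec a}\, p_\psi(\vec 0_n \times \vec d)\,,
\end{equation}
where I identify $Z^{\vec d}$ with $P_{\vec 0_n \times \vec d}$ (these phaseless $Z$-strings have trivial phase). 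This is essentially the relation $r_\psi(\vec a) = \sum_{\vec b} p_\psi(\vec a, \vec b)$ quoted in the introduction, but I actually need the ``dual'' form above; in fact these two are consistent since $p_\psi$ restricted to $\mathcal{Z}$ and $p_\psi$ restricted appropriately are related by the symplectic Fourier transform, but it is cleanest to just prove the boxed identity from scratch rather than invoke the introduction's version.

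Finally, I would sum over $\vec a \in H$ and swap the order of summation:
\begin{equation}
r_\psi(H) = \sum_{\vec a \in H} r_\psi(\vec a) = \sum_{\vec d \in \Fn} p_\psi(\vec 0_n \times \vec d) \sum_{\vec a \in H}(-1)^{\vec d \cdot \vec a}\,.
\end{equation}
The inner sum $\sum_{\vec a \in H}(-1)^{\vec d \cdot \vec a}$ is the standard character-sum over a subspace: it equals $|H|$ if $\vec d \in H^\perp$ and $0$ otherwise. Plugging this in gives $r_\psi(H) = |H| \sum_{\vec d \in H^\perp} p_\psi(\vec 0_n \times \vec d) = |H|\, p_\psi(\vec 0_n \times H^\perp)$, which is the claim. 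I expect the main (minor) obstacle to be bookkeeping the Fourier-analytic manipulations carefully --- in particular making sure the identification of $Z^{\vec d}$ with the bitstring $\vec 0_n \times \vec d$ carries no stray phase factor (it does not, since $\vec a \cdot \vec b = \vec 0 \cdot \vec d = 0$ in the phase convention $P_{\vec x} = i^{\vec a \cdot \vec b} X^{\vec a} Z^{\vec b}$), and double-checking the two character-sum collapses. There is no deep difficulty here; the statement is a clean consequence of discrete Fourier duality on $\Fn$.
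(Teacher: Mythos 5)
Your proof is correct, and it takes a genuinely more self-contained route than the paper's. The paper proves this lemma (in its appendix) by combining two imported results: the marginal relation $r_{\psi}(\vec a)=\sum_{\vec b}p_{\psi}(\vec a,\vec b)$ (Proposition 8.4 of Grewal et al.) and the Fourier self-duality $\hat{p}_{\psi}(\vec v,\vec w)=2^{-n}p_{\psi}(\vec w,\vec v)$ (Proposition 17 of Grewal et al.), then finishes with the same two character-sum collapses you use (the sum over $\vec b\in\mathbb{F}_2^n$ forcing one Fourier variable to $\vec 0_n$, and the sum over $\vec a\in H$ picking out $H^{\perp}$). You instead start from the operational definition of computational difference sampling, write $r_{\psi}(\vec a)=\sum_{\vec b}|\braket{\vec b|\psi}|^{2}|\braket{\vec a+\vec b|\psi}|^{2}$, expand each diagonal element via $|\vec b\rangle\!\langle\vec b|=2^{-n}\sum_{\vec d}(-1)^{\vec d\cdot\vec b}Z^{\vec d}$, and let the $\vec b$-sum diagonalize the double Fourier sum, arriving at the pointwise identity $r_{\psi}(\vec a)=\sum_{\vec d}(-1)^{\vec d\cdot\vec a}\,p_{\psi}(\vec 0_n\times\vec d)$ before summing over $H$; your phase check ($i^{\vec 0\cdot\vec d}=1$, so $P_{(\vec 0_n,\vec d)}=Z^{\vec d}$) is the right thing to verify and is indeed unproblematic. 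The net effect is that you re-derive, in one computation, exactly the content of the two cited propositions specialized to what is needed, so your argument is elementary and does not lean on external references, at the cost of a slightly longer calculation; the paper's version buys brevity by reusing published facts but is otherwise the same discrete Fourier duality argument, and the final step over the subspace $H$ is identical in both.
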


Importantly, \Cref{lem:relation-r-p} says that the weight of
$r_{\psi}$ on any subspace only depends on the weight of $p_{\psi}$
on a corresponding subspace of Paulis in $\mathcal{Z}$. We note that \Cref{lem:relation-r-p} is already implicit in Corollary 8.5. in \cite{grewalEfficientLearningQuantum2024d}.
However, their statement is presented with respect to the symplectic
complement since they work throughout with the symplectic inner product
on $\mathbb{F}_{2}^{2n}$ instead of the standard inner product on
$\mathbb{F}_{2}^{n}$. Here, we find it more convenient to state this
relationship in terms of the orthogonal complement. The proof of \Cref{lem:relation-r-p} is presented in \Cref{subsec:Proof-of-relation-lemma}
for the sake of completeness.

\subsection{Stabilizer states and isotropic subspaces}

A pure $n$-qubit state is a \emph{stabilizer state} if there exists an Abelian
group $S\subset\mathcal{P}_{n}$ of $2^{n}$ Pauli operators $P\in\mathcal{P}_{n}$
(with phase of $+1$ or $-1$) such that
\begin{equation}
S=\left\{ P\in\mathcal{P}_{n}:P\ket{\psi}=\ket{\psi}\right\}.
\end{equation}
This abelian group is the \emph{stabilizer group} of the stabilizer
state and determines it uniquely. Henceforth, we denote stabilizer
states by $\ket S$ and denote the set of all pure $n$-qubit stabilizer states by $\mathrm{Stab}\left(n\right)$. By considering the phaseless (or unsigned) versions
of the Pauli operators forming a stabilizer group, every stabilizer
group can be associated to a Lagrangian subspace $M$. That is, Lagrangian
subspaces are in a one-to-one correspondence with unsigned stabilizer
groups. There are $2^{n}$ many different stabilizer states
corresponding to the same Lagrangian subspace $M$, with each stabilizer
state corresponding to a different choice of signs of the Pauli operators making up the unsigned stabilizer group.
For more background on the connection between symplectic geometry
and the stabilizer formalism, see \cite{grossHudsonTheoremFinitedimensional2006a,grossSchurWeylDualityClifford2021a}.

Every $n$-qubit pure stabilizer state can (up to a global phase)
be written in the form

\begin{equation}
\label{eq:affine-subspace-stabilizer-states}
\frac{1}{\left|A\right|}\sum_{x\in A}\left(-1\right)^{q\left(x\right)}i^{l\left(x\right)}\ket x \,,
\end{equation}
where $A$ is an affine subspace of $\mathbb{F}_{2}^{2n}$ and $l,q:\mathbb{F}_{2}^{2n}\to\mathbb{F}_{2}$
are linear and quadratic (respectively) polynomials over $\mathbb{F}_{2}$
\cite{dehaeneCliffordGroupStabilizer2003a,vandennestClassicalSimulationQuantum2009}.
This implies that when measured in the computational basis, the output
distributions of stabilizer states are uniform over affine subspaces.
Furthermore, computational difference sampling is convenient precisely because it removes any
affine shift and hence leads to distributions that are uniform over
linear subspaces of $\mathbb{F}_{2}^{n}$.

\subsection{Stabilizer fidelity and stabilizer testing}

We denote the set of all pure $n$-qubit stabilizer states by $\mathrm{Stab}\left(n\right)$.
A rich topic in quantum information is concerned with determining
the nonstabilizerness of quantum states, also commonly referred to
as magic. For more background on a range of magic measures, see e.g. \cite{liuManyBodyQuantumMagic2022b}.

Here, we will specifically focus on the so-called stabilizer fidelity:
\begin{definition}[Stabilizer fidelity, \cite{bravyiSimulationQuantumCircuits2019a}]
Let $\ket\psi$ be a pure $n$-qubit quantum state. The \emph{stabilizer
fidelity} of $\ket\psi$ is defined as 
\begin{equation}
F_{\mathrm{Stab}}\left(\ket\psi\right):=\max_{\ket S\in\mathrm{Stab}\left(n\right)}\left|\braket{S|\psi}\right|^{2}
\end{equation}
\end{definition}

\noindent The stabilizer (in-)fidelity serves as a magic measure for pure states. Arguably, a simpler
problem than trying to determine or estimate the non-stabilizerness
of an unknown quantum state $\ket\psi$, is simply to decide if $\ket\psi$
is a stabilizer state or has at least a bit of magic without necessarily
quantifying how much magic. This task is known as \emph{stabilizer
testing}, and it has received attention in the past in the context
of quantum property testing \cite{grossSchurWeylDualityClifford2021a, buStabilizerTestingMagic2023,grewalEfficientLearningQuantum2024d, arunachalamTolerantTestingStabilizer2024b}.
Formally, the task is the following:
\begin{definition}[Stabilizer testing]
\label{def:stabilizer-testing}
Given $\epsilon>0$ and access
to copies of an unknown $n$-qubit pure state $\ket\psi$, decide if $\ket\psi$
is 
\begin{enumerate}
\item either a stabilizer state, i.e., $F_{\mathrm{Stab}}\left(\ket \psi\right)=1$,
\item or at least $\epsilon$ far from any stabilizer state, i.e., $F_{\mathrm{Stab}}\left(\ket 
 \psi\right)\leq1-\epsilon$.
\end{enumerate}
It is promised that $\ket \psi$ satisfies one of the two cases.
\end{definition}

\noindent In this work, we propose and analyze a single-copy algorithm
for stabilizer testing. Our analysis relies on the fact that the stabilizer
fidelity can be related to the characteristic
distribution.
To state this relation, we start with the following fact:
\begin{fact}[\cite{grossSchurWeylDualityClifford2021a,grewalEfficientLearningQuantum2024d}]
\label{fact:m-psi-isotropic} Let $\ket \psi$ be a pure $n$-qubit quantum
state. Then, consider the set
$M_{\psi}:=\{ \vec y\in\mathbb{F}_{2}^{2n}:\trace\left(\psi P_{\vec y}\right)^{2}>\tfrac{1}{2}\}$
corresponding to the state $\ket\psi$. Then, $M_{\psi}$ is isotropic.
That is, for all $\vec x, \vec y\in M_{\psi}$, $\left[\vec x, \vec y\right]=0$. Furthermore,
$\left|M_{\psi}\right|\leq2^{n}$.
\end{fact}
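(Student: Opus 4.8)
The plan is to treat the two assertions — pairwise isotropy of $M_\psi$, and $|M_\psi|\le 2^n$ — separately, the first being the substantive one and the second following from it by a short symplectic‑linear‑algebra argument. I will only use that each phaseless Pauli $P_{\vec x}=i^{\vec a\cdot\vec b}X^{\vec a}Z^{\vec b}$ is a Hermitian involution: the chosen phase convention is exactly the one making $P_{\vec x}^{\dagger}=P_{\vec x}$, and $P_{\vec x}^{2}=I$, so that $\tr(\psi P_{\vec x})$ is real and $|\tr(\psi P_{\vec x})|\le 1$ for any state $\psi$; I also use the standard dictionary that $P_{\vec x}$ and $P_{\vec y}$ anticommute iff $[\vec x,\vec y]=1$. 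It is worth noting at the outset that, since $p_{\psi}(\vec y)=2^{-n}\tr(\psi P_{\vec y})^{2}$, we have $M_\psi=\{\vec y: p_{\psi}(\vec y)>2^{-(n+1)}\}$, so normalization of $p_{\psi}$ already gives $|M_\psi|<2^{n+1}$; but this crude bound is off by a factor of two from the claimed $2^{n}$, so the isotropy route really is needed.

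For isotropy I would argue by contradiction. Suppose $\vec x,\vec y\in M_\psi$ with $[\vec x,\vec y]=1$, i.e.\ $P_{\vec x}P_{\vec y}+P_{\vec y}P_{\vec x}=0$. Write $a:=\tr(\psi P_{\vec x})$ and $b:=\tr(\psi P_{\vec y})$, so $a^{2}>\tfrac12$ and $b^{2}>\tfrac12$ by membership in $M_\psi$, and in particular $a^{2}+b^{2}>0$. Consider the Hermitian operator
\[
O:=\frac{a\,P_{\vec x}+b\,P_{\vec y}}{\sqrt{a^{2}+b^{2}}}.
\]
Expanding $O^{2}$ and using $P_{\vec x}^{2}=P_{\vec y}^{2}=I$ together with the anticommutation relation yields $O^{2}=\tfrac{a^{2}+b^{2}}{a^{2}+b^{2}}I=I$; hence $O$ is again a Hermitian involution, so $|\tr(\psi O)|\le 1$. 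On the other hand $\tr(\psi O)=(a^{2}+b^{2})/\sqrt{a^{2}+b^{2}}=\sqrt{a^{2}+b^{2}}$, whence $a^{2}+b^{2}\le 1$, contradicting $a^{2}+b^{2}>\tfrac12+\tfrac12=1$. Therefore no anticommuting pair lies in $M_\psi$, which is precisely the claim that $M_\psi$ is isotropic. (This argument uses nothing about purity and goes through verbatim for mixed $\psi$.)

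For the cardinality bound I would pass to the $\mathbb{F}_2$‑linear span $V:=\langle M_\psi\rangle\subseteq\mathbb{F}_2^{2n}$. Using bilinearity of the symplectic form together with the pairwise isotropy of $M_\psi$ just proved, expanding any two vectors of $V$ as $\mathbb{F}_2$‑sums of elements of $M_\psi$ shows $[\vec u,\vec v]=0$ for all $\vec u,\vec v\in V$; that is, $V$ is an isotropic subspace. Since the symplectic form is non‑degenerate on a space of dimension $2n$, an isotropic subspace sits inside its own symplectic complement, forcing $\dim V\le 2n-\dim V$, i.e.\ $\dim V\le n$; hence $|M_\psi|\le|V|\le 2^{n}$.

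I do not expect a real obstacle here: the only step demanding care is the bookkeeping in the first paragraph — confirming that the phase convention makes the $P_{\vec x}$ Hermitian involutions and that the arithmetic defining $O$ goes through — and that is entirely standard. The single genuine idea is forming the normalized combination $O=(a P_{\vec x}+b P_{\vec y})/\sqrt{a^{2}+b^{2}}$, which ``rotates'' an anticommuting pair into a single effective Pauli direction; once one thinks to write this operator down, the rest is routine.
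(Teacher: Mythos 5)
Your proof is correct: the anticommuting-pair argument (the Hermitian involution $O=(aP_{\vec x}+bP_{\vec y})/\sqrt{a^{2}+b^{2}}$ giving $\tr(\psi P_{\vec x})^{2}+\tr(\psi P_{\vec y})^{2}\le 1$) plus the observation that the span of a pairwise-isotropic set is an isotropic subspace of dimension at most $n$ is exactly the standard argument. The paper itself states this as a fact and defers to \cite{grossSchurWeylDualityClifford2021a,grewalEfficientLearningQuantum2024d}, where essentially the same reasoning appears, so there is nothing to add.
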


\noindent Note that, in general, the set $M_{\psi}$ is not necessarily a
subspace, in particular, it need not be closed under addition. However, for a stabilizer
state $\ket S$, its corresponding set $M_{S}$ corresponds precisely
to the Lagrangian subspace associated with the stabilizer state. We
now state a lower bound on the stabilizer fidelity $F_{\mathrm{Stab}}\left(\ket \psi\right)$
of a state $\ket \psi$ in terms of its characteristic distribution $p_{\psi}$
and the corresponding set $M_{\psi}$:

\begin{fact}[Proof of Theorem 3.3 in \cite{grossSchurWeylDualityClifford2021a}, Corollary 7.4. in \cite{grewalImprovedStabilizerEstimation2024a}]
\label{fact:lower-bound-stabilizer-fidelity}Let $\ket \psi$ be an $n$-qubit
pure state and $M$ be a Lagrangian subspace of $\mathbb{F}_{2}^{2n}$,
then
\begin{equation}
F_{\mathrm{Stab}}\left(\ket \psi\right)\geq p_{\psi}\left(M\right)\,.
\end{equation}
In particular, it follows that
\begin{equation}
F_{\mathrm{Stab}}\left(\ket \psi\right)\geq p_{\psi}\left(M_{\psi}\right)\,,
\end{equation}
since $M_{\psi}$ can always be completed to a Lagrangian subspace.
\end{fact}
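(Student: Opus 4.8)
The plan is to use the fact that a Lagrangian subspace $M$ carries an associated orthonormal basis of stabilizer states, which are simultaneously the common eigenvectors of the mutually commuting phaseless Paulis indexed by $M$, and then to recognize $p_\psi(M)$ as the sum of the fourth powers of the overlaps of $\ket\psi$ with this basis.

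First I would diagonalize. Since $M$ is isotropic, the operators $\{P_{\vec x}\}_{\vec x\in M}$ pairwise commute, so they admit a common eigenbasis; as recalled in the discussion of stabilizer states above, this basis consists of exactly $2^n$ states $\{\ket v\}_v$ forming an orthonormal basis of $(\mathbb{C}^2)^{\otimes n}$, each of which is a stabilizer state whose unsigned stabilizer group is $M$. Write $P_{\vec x}\ket v=\epsilon_v(\vec x)\ket v$ with $\epsilon_v(\vec x)\in\{\pm1\}$. The key structural input is that $\epsilon_v\,\epsilon_w\colon M\to\{\pm1\}$ is a group homomorphism for every pair $v,w$ — the $\pm1$ cocycle $\mu$ defined by $P_{\vec x}P_{\vec y}=\mu(\vec x,\vec y)P_{\vec x+\vec y}$ on $M$ cancels in the product $\epsilon_v\epsilon_w$ — and that this character is trivial precisely when $v=w$, since distinct joint eigenvectors carry distinct sign patterns (the joint spectrum is multiplicity free). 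Orthogonality of characters of the finite group $M$ then gives $\sum_{\vec x\in M}\epsilon_v(\vec x)\epsilon_w(\vec x)=2^n\,[v=w]$.

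With this in hand I would expand $\tr(\psi P_{\vec x})=\langle\psi|P_{\vec x}|\psi\rangle=\sum_v c_v\,\epsilon_v(\vec x)$, where $c_v:=|\langle v|\psi\rangle|^2\ge0$ and $\sum_v c_v=1$. Squaring, summing over $\vec x\in M$, and using the orthogonality relation to collapse the double sum to its diagonal yields
\begin{equation}
p_\psi(M)=\frac{1}{2^n}\sum_{\vec x\in M}\tr(\psi P_{\vec x})^2=\sum_v c_v^{\,2}\,.
\end{equation}
Since every $\ket v$ is a stabilizer state, $F_{\mathrm{Stab}}(\ket\psi)\ge\max_v c_v$, and because $(c_v)_v$ is a probability vector, $\sum_v c_v^{\,2}\le(\max_v c_v)\sum_v c_v=\max_v c_v$. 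Chaining these inequalities gives $F_{\mathrm{Stab}}(\ket\psi)\ge p_\psi(M)$, the first claim. For the ``in particular'' statement, \Cref{fact:m-psi-isotropic} says $M_\psi$ is isotropic; by bilinearity of the symplectic form its linear span is then also isotropic, hence of dimension at most $n$, so it completes to a Lagrangian subspace $M\supseteq M_\psi$. Then $p_\psi(M)\ge p_\psi(M_\psi)$ since $p_\psi\ge0$ and $M_\psi\subseteq M$, and combining with the first part finishes the proof.

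I expect the only delicate point to be the bookkeeping in the diagonalization step: establishing that the common eigenbasis of $\{P_{\vec x}\}_{\vec x\in M}$ is multiplicity free (so there are exactly $2^n$ eigenvectors and they are genuine stabilizer states with unsigned group $M$) and that $\epsilon_v\epsilon_w$ is a bona fide character, which requires tracking the sign cocycle $\mu$. Everything after that is routine — a character-orthogonality sum followed by the inequality $\max_v c_v\ge\sum_v c_v^{\,2}$ for probability vectors — so the proof should be short once the stabilizer-formalism setup is in place.
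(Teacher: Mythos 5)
Your proof is correct. The paper does not prove \Cref{fact:lower-bound-stabilizer-fidelity} itself but imports it from the cited references, and your argument is essentially the standard one given there: diagonalize the commuting Paulis $\{P_{\vec x}\}_{\vec x\in M}$ in the orthonormal basis of $2^n$ stabilizer states with unsigned group $M$, use character orthogonality to get $p_\psi(M)=\sum_v c_v^2\leq \max_v c_v\leq F_{\mathrm{Stab}}(\ket\psi)$, and complete the isotropic set $M_\psi$ (via its span) to a Lagrangian subspace for the second claim. No gaps; the multiplicity-freeness you flag is exactly the standard stabilizer-formalism fact that each Lagrangian subspace supports precisely $2^n$ joint eigenstates, one per consistent sign character, as recalled in the paper's preliminaries.
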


\subsection{The action of (random) Cliffords on \texorpdfstring{$\mathbb{F}_{2}^{2n}$}{F22n}}

Clifford circuits map Pauli operators to Pauli operators under conjugation.
In particular, when applying a Clifford circuit $C$ to a phaseless
Pauli operator $P_{\vec x}$, the resulting Pauli is not necessarily a
phaseless one. However, when identifying Pauli operators with their
phaseless versions, we can define the Clifford action on $\mathbb{F}_{2}^{2n}$:
We have that $C\left(\vec x\right)=\vec y$ if
\begin{equation}
CP_{\vec x}C^{\dagger}=\pm P_{\vec y}\,.
\end{equation}
Similarly, for a subset $S\subseteq\mathbb{F}_{2}^{2n}$, we write
$C\left(S\right)$ to denote $C\left(S\right)=\left\{ C\left(\vec x\right): \vec x\in S\right\} $.
In fact, the action of the Clifford group on $\mathbb{F}_{2}^{2n}$
is precisely that of the symplectic group $\mathrm{Sp}\left(2n,\mathbb{F}_{2}\right)$.
Hence, in particular, the Clifford action preserves the symplectic
inner product, $\left[C\left(\vec x\right),C\left(\vec y\right)\right]=\left[\vec x,\vec y\right]$.

When we apply a Clifford unitary to a state $\ket \psi$, we shuffle the
individual probabilities of the characteristic distribution $p_{\psi}$
around. In particular, the mapping is precisely described by the Clifford
action on $\mathbb{F}_{2}^{2n}$ since the characteristic distribution
$p_{\psi}$ only depends on the phaseless Pauli operators. Formally,
we have:
\begin{fact}
\label{fact:Clifford-mapping-dagger}Let $\ket{\psi}$ be an $n$-qubit
pure state, let $C\in\Cln$ be a Clifford unitary, then
for all $\vec x\in\mathbb{F}_{2}^{2n}$
\begin{equation}
p_{C\ket{\psi}}\left(\vec x\right)=p_{\psi}\left(C^{\dagger}\left(\vec x\right)\right)\,.
\end{equation}
\end{fact}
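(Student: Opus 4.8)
The plan is to prove \Cref{fact:Clifford-mapping-dagger} directly from the definitions, tracking how conjugation by $C$ permutes phaseless Paulis. The starting point is the definition $p_{C\ket\psi}(\vec x) = \frac{1}{2^n}\tr(C\psi C^\dagger P_{\vec x})^2$. By cyclicity of the trace this equals $\frac{1}{2^n}\tr(\psi\, C^\dagger P_{\vec x} C)^2$, so everything hinges on understanding $C^\dagger P_{\vec x} C$ as a (signed) Pauli operator.

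The key step is to use the fact, recalled just before the statement, that the Clifford action on $\mathbb{F}_2^{2n}$ is well-defined: for each $\vec x$ there is a unique $\vec y = C^\dagger(\vec x)$ (I would apply the action of $C^\dagger$, noting the action is defined via $CP_{\vec x}C^\dagger = \pm P_{C(\vec x)}$, hence $C^\dagger P_{\vec x} C = \pm P_{C^\dagger(\vec x)}$ with some sign $\pm 1$ depending on $\vec x$ and $C$). Substituting, $p_{C\ket\psi}(\vec x) = \frac{1}{2^n}\tr(\psi\,(\pm P_{C^\dagger(\vec x)}))^2 = \frac{1}{2^n}(\pm 1)^2 \tr(\psi P_{C^\dagger(\vec x)})^2 = \frac{1}{2^n}\tr(\psi P_{C^\dagger(\vec x)})^2 = p_\psi(C^\dagger(\vec x))$. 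The crucial point, and the only subtlety worth spelling out, is that the sign $\pm 1$ is irrelevant because the characteristic distribution is built from the \emph{square} of the Pauli expectation value; this is precisely why $p_\psi$ depends only on the phaseless Pauli operators, as the text emphasizes.

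Thus the proof is essentially a one-line computation once the bookkeeping of signs is made explicit, and I do not anticipate a genuine obstacle. If anything, the only thing to be careful about is making sure the direction of the action is correct — i.e. that conjugating $P_{\vec x}$ by $C^\dagger$ (rather than $C$) yields $P_{C^\dagger(\vec x)}$ up to sign — which follows by applying the defining relation $C P_{\vec y} C^\dagger = \pm P_{C(\vec y)}$ with $\vec y = C^\dagger(\vec x)$ and using that the Clifford action on $\mathbb{F}_2^{2n}$ is a group homomorphism into $\mathrm{Sp}(2n,\mathbb{F}_2)$, so $C(C^\dagger(\vec x)) = \vec x$. I would present the argument in three short displayed lines: rewrite using cyclicity, substitute $C^\dagger P_{\vec x} C = \pm P_{C^\dagger(\vec x)}$, and observe the sign squares away.
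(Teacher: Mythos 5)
Your proof is correct and is exactly the standard argument; the paper states this as a Fact without proof, and your computation (cyclicity of the trace, $C^\dagger P_{\vec x} C = \pm P_{C^\dagger(\vec x)}$ since $C^\dagger$ is itself a Clifford, and the sign vanishing upon squaring) supplies precisely the reasoning the paper leaves implicit. The point you flag — that the characteristic distribution depends only on the phaseless Pauli because of the square — is indeed the only subtlety, and you handle the direction of the action correctly.
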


\paragraph{Random Cliffords:}

In this paper, we are mainly interested in applying uniformly random
Clifford unitaries to pure states. In particular, a core part of the
analysis of our algorithm relies on the action of random Cliffords,
and hence random symplectic transformations, on isotropic or even
Lagrangian subspaces of $\mathbb{F}_{2}^{2n}$. As demonstrated in
several works \cite{koenigHowEfficientlySelect2014,bravyiHadamardfreeCircuitsExpose2021,bergSimpleMethodSampling2021b},
random Clifford unitaries can be sampled and compiled into Clifford
circuits efficiently:
\begin{fact}
There is a classical algorithm that samples a uniformly random element
$C$ of the $n$-qubit Clifford group $\Cln$ and outputs
a Clifford circuit implementation of $C$ in time $O\left(n^{2}\right)$.
\end{fact}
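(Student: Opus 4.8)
The plan is to reduce the problem to uniformly sampling a symplectic matrix and then to exploit the recursive structure of $\mathrm{Sp}(2n,\mathbb{F}_{2})$. Recall from the discussion above that conjugation by Clifford unitaries realizes the full symplectic group on $\mathbb{F}_{2}^{2n}$, and that a Clifford unitary is determined --- up to a global phase, which is irrelevant here --- by a symplectic matrix $S\in\mathrm{Sp}(2n,\mathbb{F}_{2})$ together with a ``phase vector'' $\vec h\in\mathbb{F}_{2}^{2n}$ encoding the signs in $CP_{\vec x}C^{\dagger}=\pm P_{S(\vec x)}$. For a fixed $S$ the Cliffords with that symplectic action form a single coset of the phaseless Pauli group, bijectively parametrized by $\vec h\in\mathbb{F}_{2}^{2n}$; hence sampling $C$ uniformly from $\Cln$ is the same as sampling $S$ uniformly from $\mathrm{Sp}(2n,\mathbb{F}_{2})$ and $\vec h$ uniformly from $\mathbb{F}_{2}^{2n}$, and the real work is the uniform sampling of $S$.

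For that I would peel off the images of a symplectic basis two vectors at a time. Fix $\vec e_{1},\vec f_{1},\dots,\vec e_{n},\vec f_{n}$ with $[\vec e_{i},\vec f_{j}]=\delta_{ij}$ and all remaining symplectic products zero. The symplectic group acts transitively on ordered hyperbolic pairs $(\vec v,\vec w)$ (those with $[\vec v,\vec w]=1$), and the pointwise stabilizer of such a pair is isomorphic to $\mathrm{Sp}(2(n-1),\mathbb{F}_{2})$ acting on the symplectic complement of $\langle\vec v,\vec w\rangle$. Orbit--stabilizer then yields simultaneously the order $|\mathrm{Sp}(2n,\mathbb{F}_{2})|=\prod_{j=1}^{n}(2^{2j}-1)\,2^{2j-1}$ and a recursive uniform sampler: draw $S(\vec e_{1})$ uniformly among the $2^{2n}-1$ nonzero vectors, draw $S(\vec f_{1})$ uniformly among the $2^{2n-1}$ vectors $\vec w$ with $[S(\vec e_{1}),\vec w]=1$, and recurse on the symplectic complement of $\langle S(\vec e_{1}),S(\vec f_{1})\rangle$.

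To make each level cost only $O(n)$ time I would implement the change of frame with symplectic transvections $T_{\vec v}:\vec w\mapsto\vec w+[\vec w,\vec v]\,\vec v$, each of which is conjugation by the Clifford $e^{i\pi P_{\vec v}/4}$ and hence compiles to $O(n)$ elementary gates. Using the elementary fact that any nonzero vector can be sent to any other by a product of at most two transvections (a short case split on whether their symplectic product is $0$ or $1$), one can, with a bounded number of transvections, build a symplectic map $g_{1}$ sending $\vec e_{1},\vec f_{1}$ to the sampled images; writing $S=g_{1}\cdot\tilde{S}$ with $\tilde{S}$ a recursively sampled symplectic transformation fixing $\vec e_{1}$ and $\vec f_{1}$, and unrolling the recursion, expresses $S$ as a product of $O(n)$ transvections. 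Since each transvection acts on a length-$2n$ vector in $O(n)$ operations over $\mathbb{F}_{2}$, the full procedure --- sampling $S$, sampling $\vec h$ (realized by an $O(n)$-gate Pauli correction), and emitting the circuit --- runs in $O(n^{2})$ time and outputs an $O(n^{2})$-gate circuit. (Alternatively, one can sample directly in the Hadamard-free canonical form of Bravyi--Maslov --- two Hadamard-free Clifford layers, a Hadamard layer and a qubit permutation --- with the layer parameters sampled from the distribution that makes the overall Clifford uniform, again an $O(n^{2})$-gate circuit in $O(n^{2})$ time.)

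I expect the only genuine difficulty to be bookkeeping rather than any conceptual obstacle: one must check that the recursive peeling (resp. the canonical-form sampling) reproduces the uniform measure on $\Cln$ exactly --- i.e., that the product of the step probabilities equals $1/|\mathrm{Sp}(2n,\mathbb{F}_{2})|$ and that the coset structure of the phase vector is as claimed --- and one must verify that the transvection decomposition never uses more than $O(n)$ factors, so that the runtime genuinely stays $O(n^{2})$ rather than degrading to $O(n^{3})$.
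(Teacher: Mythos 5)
Your proposal is correct and amounts to the same approach the paper relies on: the paper states this fact without proof, citing exactly the constructions you reconstruct --- your recursive orbit--stabilizer sampler built from symplectic transvections (two per hyperbolic pair, each compiling to an $O(n)$-gate Clifford) is the König--Smolin/van den Berg method, and your alternative via the Hadamard-free canonical form is Bravyi--Maslov. The remaining work you flag (checking the step probabilities multiply to $1/\left|\mathrm{Sp}(2n,\mathbb{F}_2)\right|$ times the Pauli-coset factor, and that only $O(n)$ transvections are used) is indeed just the bookkeeping carried out in those references.
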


A uniformly random Clifford $C$ corresponds to a uniformly random
symplectic transformation. When such a transformation acts on an isotropic
subspace $T$ of $\mathbb{F}_{2}^{2n}$, it results in a uniformly
random isotropic subspace of the same dimension. In particular, a
random Clifford $C$ acting on a Lagrangian subspace $M$ results
in a uniformly random Lagrangian subspace $C\left(M\right)$. In this
work, the Lagrangian subspace $\mathcal{Z}=\vec 0_{n}\times\mathbb{F}_{2}^{n}$
plays a distinguished role because of its relation to computational
difference sampling. In particular, let us now answer the following
question: What is the probability that a uniformly random Lagrangian
subspace $C\left(M\right)$ has a $k$-dimensional intersection with
the fixed Lagrangian subspace $\mathcal{Z}$? This is captured by
the following definition:
\begin{definition}[Probability of $k$-dimensional intersection]
\label{def:p-n-k-def}Let $M$ be a Lagrangian subspace of $\mathbb{F}_{2}^{2n}$
and let $0\leq k\leq n$. Then, we define
\begin{equation}
Q\left(n,k\right):=\Pr_{C\sim\Cln}\left(\dim\left(C\left(M\right)\cap\mathcal{Z}\right)=k\right).
\end{equation}
This probability can be expressed as follows,
\begin{equation}
Q\left(n,k\right)=\frac{\left|\mathrm{Cl}\left(n,k\right)\right|}{\left|\mathrm{Cl}\left(n\right)\right|}\,,
\end{equation}
where we introduced 
\begin{equation}
\mathrm{Cl}\left(n,k\right):=\left\{ C\in\mathrm{Cl}\left(n\right):\dim\left(C\left(M\right)\cap\mathcal{Z}\right)=k\right\} \,.
\end{equation}
\end{definition}

To characterize $Q\left(n,k\right)$, we can count the relevant sets
of Lagrangian subspaces. First, we start with the total number of
them:
\begin{fact}
The total number of Lagrangian subspaces of $\mathbb{F}_{2}^{2n}$
is given by
\begin{equation}
\mathcal{T}\left(n\right)=\prod_{i=1}^{n}\left(2^{i}+1\right)=2^{\frac{1}{2}n\left(n+1\right)}\prod_{i=1}^{n}\left(1+\frac{1}{2^{i}}\right).
\end{equation}
\end{fact}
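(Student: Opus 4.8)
The plan is a standard double-count: enumerate the pairs $(M,(v_1,\dots,v_n))$ where $M$ is a Lagrangian subspace of $\mathbb{F}_2^{2n}$ and $(v_1,\dots,v_n)$ is an ordered basis of $M$. On one hand, every $n$-dimensional $\mathbb{F}_2$-vector space has exactly $\prod_{i=0}^{n-1}(2^n-2^i)$ ordered bases (this is $|\mathrm{GL}(n,\mathbb{F}_2)|$), so the number of such pairs is $\mathcal{T}(n)\cdot\prod_{i=0}^{n-1}(2^n-2^i)$. On the other hand, I would count the tuples directly: an ordered tuple $(v_1,\dots,v_n)$ occurs in exactly one pair iff the $v_i$ are linearly independent and $\langle v_1,\dots,v_n\rangle$ is isotropic, i.e.\ $[v_i,v_j]=0$ for all $i,j$.

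To count these ``isotropic ordered bases'' I build the tuple greedily. Suppose $v_1,\dots,v_{k-1}$ have been chosen, spanning an isotropic subspace $W$ of dimension $k-1$. A vector $v_k$ is admissible precisely when $v_k$ lies in the symplectic complement $W^{\perp}:=\{\,y:[y,w]=0\ \forall w\in W\,\}$ and $v_k\notin W$. Since the symplectic form on $\mathbb{F}_2^{2n}$ is nondegenerate we have $\dim W^{\perp}=2n-(k-1)$, and since $W$ is isotropic we have $W\subseteq W^{\perp}$. Hence there are $2^{2n-k+1}-2^{k-1}$ choices for $v_k$, a positive number for every $k\le n$ because $2n-k+1\ge n+1>k-1$. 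Multiplying over $k=1,\dots,n$ gives $\prod_{k=1}^{n}\bigl(2^{2n-k+1}-2^{k-1}\bigr)$ isotropic ordered bases.

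Equating the two counts,
\begin{equation*}
\mathcal{T}(n)=\frac{\prod_{k=1}^{n}\bigl(2^{2n-k+1}-2^{k-1}\bigr)}{\prod_{i=0}^{n-1}\bigl(2^{n}-2^{i}\bigr)}\,,
\end{equation*}
and it remains to simplify. Pulling the common power of $2$ out of each factor, the numerator becomes $2^{n(n-1)/2}\prod_{j=1}^{n}(2^{2j}-1)$ after reindexing $j=n-k+1$, and the denominator becomes $2^{n(n-1)/2}\prod_{m=1}^{n}(2^{m}-1)$ after reindexing $m=n-i$. The powers of $2$ cancel, and using $2^{2j}-1=(2^j-1)(2^j+1)$ the quotient collapses to $\prod_{j=1}^{n}(2^{j}+1)$, which is the claimed formula; the alternative expression follows by factoring $2^{j}$ out of each term and using $\sum_{j=1}^n j=\tfrac12 n(n+1)$.

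I expect the only real obstacle to be bookkeeping rather than any genuine difficulty: one must justify carefully that $\dim W^{\perp}=2n-\dim W$ and $W\subseteq W^{\perp}$ for isotropic $W$ (both immediate from nondegeneracy of $[\cdot,\cdot]$), and then carry out the two index shifts in the final simplification without arithmetic slips. Beyond that, the argument is the routine ``count ordered flags, then divide by the number of bases'' technique.
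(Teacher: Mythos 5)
Your proof is correct and complete. The paper states this count as a known fact without supplying any proof (it is standard, appearing e.g.\ in the symplectic-geometry background of \cite{grossSchurWeylDualityClifford2021a} and \cite{kuengQubitStabilizerStates2015a}), so there is no in-paper argument to compare against; your double count of pairs $(M,\text{ordered basis})$ is the standard derivation. All the steps check out: the greedy count is valid because the number of admissible choices at step $k$ depends only on $\dim W=k-1$ and not on the particular isotropic subspace built so far; admissibility of $v_k$ is indeed equivalent to $v_k\in W^{\perp}\setminus W$ since the form is alternating over $\mathbb{F}_2$ (so $[v_k,v_k]=0$ is automatic) and nondegenerate (so $\dim W^{\perp}=2n-(k-1)$); and the final simplification via $2^{2j}-1=(2^j-1)(2^j+1)$ reproduces $\prod_{j=1}^{n}(2^j+1)$, which one can sanity-check at $n=1$ (three Lagrangians in $\mathbb{F}_2^2$) and $n=2$ (fifteen).
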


Next, the number of Lagrangian subspaces whose intersection with $\mathcal{Z}$
is $k$-dimensional was obtained already in Corollary 2 in Ref.
\cite{kuengQubitStabilizerStates2015a}:
\begin{lemma}[Corollary 2 in \cite{kuengQubitStabilizerStates2015a}]
 Let $M$ be a fixed Lagrangian subspace of $\mathbb{F}_{2}^{2n}$.
The number of Lagrangian subspaces $N$ whose intersection with $M$
is $k$-dimensional is given by
\begin{equation}
\kappa\left(n,k\right):=\binom{n}{k}_{2}\cdot 2^{\frac{1}{2}\left(n-k\right)\left(n-k+1\right)}\, ,
\end{equation}
where $\binom{n}{k}_{2}$ is the Gaussian binomial coefficient given
by (for $k\leq n$)
\begin{equation}
\binom{n}{k}_{2}=\prod_{i=0}^{k-1}\frac{2^{n-i}-1}{2^{k-i}-1} \,.
\end{equation}
In particular, for $k=0,1,2$, we have
\begin{align}
\kappa\left(n,0\right) & =2^{\frac{1} {2}n\left(n+1\right)} \, ,\\
\kappa\left(n,1\right) & =\left(2^{n}-1\right)\,2^{\frac{1}{2}\left(n-1\right)n} \, ,\\
\kappa\left(n,2\right) & =\frac{\left(2^{n}-1\right)\left(2^{n-1}-1\right)}{3}2^{\frac{1}{2}\left(n-1\right)\left(n-2\right)} \, .
\end{align}
\end{lemma}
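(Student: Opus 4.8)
The plan is to count these Lagrangians by first fixing their intersection with $M$ and then reducing to a count of \emph{transverse} Lagrangians in a symplectic quotient. First I would observe that every Lagrangian $N$ with $\dim(N\cap M)=k$ determines the subspace $W:=N\cap M\subseteq M$, which is $k$-dimensional and isotropic (isotropic since $W\subseteq M$ and $M$ is isotropic), and that conversely $N\cap M=W$ forces $W\subseteq N$. Partitioning the Lagrangians of interest according to the value of $W$ gives
\begin{equation}
\kappa(n,k)=\big(\#\{W\subseteq M:\dim W=k\}\big)\cdot\big(\#\{N\text{ Lagrangian}:N\cap M=W\}\big),
\end{equation}
once the second factor is shown to be independent of the choice of $k$-dimensional $W\subseteq M$. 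The first factor is the number of $k$-dimensional subspaces of the $n$-dimensional space $M$, i.e.\ the Gaussian binomial coefficient $\binom{n}{k}_{2}$.

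For the second factor I would pass to the symplectic quotient. Let $W^{\perp}$ denote the symplectic complement of $W$ with respect to $[\cdot,\cdot]$ (not to be confused with the standard-inner-product complement used elsewhere in the paper); it has dimension $2n-k$ and contains $W$, so $V':=W^{\perp}/W$ is a symplectic $\mathbb{F}_{2}$-space of dimension $2(n-k)$. Any Lagrangian $N$ with $N\cap M=W$ contains $W$, hence, as $N=N^{\perp}$, also satisfies $N\subseteq W^{\perp}$; and since $W\subseteq M=M^{\perp}$ implies $M\subseteq W^{\perp}$, the image $M':=M/W$ is a Lagrangian subspace of $V'$. Quotienting by $W$ is then a bijection between Lagrangians of $\mathbb{F}_{2}^{2n}$ containing $W$ and Lagrangians of $V'$, under which $(N\cap M)/W$ corresponds to $N'\cap M'$; therefore $N\cap M=W$ if and only if $N'$ is transverse to $M'$, i.e.\ $N'\cap M'=0$. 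Thus the second factor equals the number of Lagrangians of a $2(n-k)$-dimensional symplectic $\mathbb{F}_{2}$-space transverse to a fixed Lagrangian — manifestly independent of $W$.

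It remains to count transverse Lagrangians. Writing $m:=n-k$ and choosing a symplectic basis $e_{1},\dots,e_{m},f_{1},\dots,f_{m}$ with $M'=\langle e_{1},\dots,e_{m}\rangle$, every $m$-dimensional subspace transverse to $M'$ is the graph $\langle f_{j}+\sum_{i}a_{ij}e_{i}:j\in[m]\rangle$ of a unique matrix $A=(a_{ij})\in\mathbb{F}_{2}^{m\times m}$, and a one-line computation with the symplectic form shows that such a graph is isotropic (hence Lagrangian, by dimension) exactly when $A=A^{\top}$, the condition $[v,v]=0$ on the diagonal being automatic over $\mathbb{F}_{2}$. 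Since there are $2^{m(m+1)/2}$ symmetric $m\times m$ matrices over $\mathbb{F}_{2}$, the second factor equals $2^{\frac{1}{2}(n-k)(n-k+1)}$, and combining gives $\kappa(n,k)=\binom{n}{k}_{2}\cdot 2^{\frac{1}{2}(n-k)(n-k+1)}$. The displayed special cases then follow by substituting $\binom{n}{0}_{2}=1$, $\binom{n}{1}_{2}=2^{n}-1$, and $\binom{n}{2}_{2}=(2^{n}-1)(2^{n-1}-1)/3$. The step I would be most careful about is insisting on \emph{exact} equality $N\cap M=W$ rather than the weaker $W\subseteq N\cap M$: this is precisely what replaces ``all Lagrangians of $V'$ containing the image of $W$'', which overcounts, by ``Lagrangians of $V'$ transverse to $M'$'', producing the factor $2^{\frac{1}{2}(n-k)(n-k+1)}$ rather than the total Lagrangian count $\prod_{i=1}^{n-k}(2^{i}+1)$ in the quotient. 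As a consistency check one can verify $\sum_{k=0}^{n}\kappa(n,k)=\mathcal{T}(n)$.
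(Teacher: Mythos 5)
Your proof is correct. Note that the paper itself offers no argument for this lemma at all --- it is imported verbatim as Corollary 2 of the cited Kueng--Gross reference --- so what you have written is a self-contained derivation of a black-boxed fact rather than a variant of anything in the text. The argument is the standard symplectic-reduction count and all the steps check out: the partition over $W=N\cap M$ is a genuine disjoint union, so $\kappa(n,k)$ factors as (number of $k$-dimensional $W\subseteq M$) times (number of Lagrangian $N$ with $N\cap M=W$) once the second factor is shown to be independent of $W$; the first factor is $\binom{n}{k}_2$; the passage to $V'=W^{\perp}/W$ correctly converts the exact-intersection condition $N\cap M=W$ into transversality $N'\cap M'=0$ (using $(N/W)\cap(M/W)=(N\cap M)/W$ for subspaces containing $W$); and the graph parametrization of complements of $M'$ by $m\times m$ matrices, with isotropy equivalent to symmetry of $A$, gives exactly $2^{m(m+1)/2}$ with $m=n-k$. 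Your emphasis on exact equality $N\cap M=W$ versus mere containment is indeed the crux --- it is what produces $2^{\frac{1}{2}(n-k)(n-k+1)}$ instead of the full Lagrangian count $\prod_{i=1}^{n-k}(2^i+1)$ of the quotient --- and the special cases and the consistency check $\sum_k\kappa(n,k)=\mathcal{T}(n)$ both come out right.
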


Hence, we can characterize $Q\left(n,k\right)$ as follows:
\begin{corollary}
\label{cor:p-n-k-values}Let $0\leq k\leq n$. Then,
\begin{equation}
Q\left(n,k\right)=\frac{\kappa\left(n,k\right)}{\mathcal{T}\left(n\right)} \, .
\end{equation}
In particular, for $k=0,1,2$, we have
\begin{align}
Q\left(n,0\right) & =\prod_{i=1}^{n}\left(\frac{1}{1+2^{-i}}\right)\, ,\\
Q\left(n,1\right) & =\frac{2^{n}-1}{2^{n}}\:Q\left(n,0\right)\, ,\\
Q\left(n,2\right) & =\frac{1}{3}\frac{\left(2^{n}-1\right)\left(2^{n}-2\right)}{2^{2n}}\:Q\left(n,0\right) \, .
\end{align}
\end{corollary}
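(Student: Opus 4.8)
The plan is to assemble the corollary from three facts already in hand: transitivity of the symplectic action on Lagrangian subspaces, the count $\kappa(n,k)$ of Lagrangians meeting a fixed one in dimension $k$, and the total number $\mathcal{T}(n)$ of Lagrangians.

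First I would observe, as recalled just above, that a uniformly random Clifford $C\sim\Cln$ acts on $\mathbb{F}_2^{2n}$ as a uniformly random element of $\mathrm{Sp}(2n,\mathbb{F}_2)$, and that the symplectic group acts transitively on the set of Lagrangian subspaces (Witt's extension theorem). Hence for any fixed Lagrangian $M$ the image $C(M)$ is a uniformly random Lagrangian subspace. Taking $\mathcal{Z}=\vec 0_n\times\Fn$ as the reference Lagrangian, \Cref{def:p-n-k-def} then gives
\[
Q(n,k)=\Pr_{C\sim\Cln}\bigl(\dim(C(M)\cap\mathcal{Z})=k\bigr)=\frac{\#\{N\text{ Lagrangian}:\dim(N\cap\mathcal{Z})=k\}}{\#\{N\text{ Lagrangian}\}}=\frac{\kappa(n,k)}{\mathcal{T}(n)},
\]
where the numerator is $\kappa(n,k)$ by the cited Corollary 2 of \cite{kuengQubitStabilizerStates2015a} (applied with the fixed Lagrangian there taken to be $\mathcal{Z}$) and the denominator is $\mathcal{T}(n)$, the total number of Lagrangian subspaces. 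This establishes the first displayed identity.

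For the explicit small-$k$ values I would simply substitute. Since $\mathcal{T}(n)=2^{n(n+1)/2}\prod_{i=1}^n(1+2^{-i})$ and $\kappa(n,0)=2^{n(n+1)/2}$, the powers of two cancel and $Q(n,0)=\prod_{i=1}^n\tfrac{1}{1+2^{-i}}$. For $k=1$ and $k=2$ it is cleanest to route through $Q(n,0)$: cancelling the powers of two in the Gaussian-binomial expressions for $\kappa(n,1)$ and $\kappa(n,2)$ against $\kappa(n,0)$ yields $\kappa(n,1)/\kappa(n,0)=(2^n-1)2^{-n}$ and $\kappa(n,2)/\kappa(n,0)=(2^n-1)(2^n-2)/(3\cdot 2^{2n})$, so that $Q(n,1)=\tfrac{2^n-1}{2^n}Q(n,0)$ and $Q(n,2)=\tfrac13\tfrac{(2^n-1)(2^n-2)}{2^{2n}}Q(n,0)$.

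There is no real obstacle here: the content is entirely a combination of the cited counting results. The only two points that deserve a line of care are (i) the transitivity claim, i.e.\ that $C(M)$ is genuinely \emph{uniform} over all Lagrangians --- equivalently that the $\mathrm{Sp}(2n,\mathbb{F}_2)$-stabilizer of $M$ has the same cardinality for every Lagrangian $M$, which is immediate from transitivity --- and (ii) the elementary exponent bookkeeping when dividing $\kappa(n,1)$ and $\kappa(n,2)$ by $\kappa(n,0)$, which is routine.
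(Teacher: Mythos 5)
Your proposal is correct and follows exactly the route the paper intends: the uniformity of $C(M)$ over Lagrangian subspaces (from transitivity of the symplectic action) gives $Q(n,k)=\kappa(n,k)/\mathcal{T}(n)$, and the explicit values for $k=0,1,2$ follow by the same exponent bookkeeping against $\kappa(n,0)$ that the paper leaves implicit. Your arithmetic checks out, including the factor $2(2^{n-1}-1)=2^n-2$ in the $k=2$ case.
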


Note that for fixed $n$, $Q\left(n,k\right)$ forms a probability
distribution over $k\in\left\{ 0,\dots,n\right\} $ and is normalized as
$\sum_{k=0}^{n}Q\left(n,k\right)=1$. 
We remark two important aspects about the distribution $Q\left(n,k\right)$:
First, the distribution converges quickly for increasing $n$, so
that the values of $Q\left(n,k\right)$ do not change significantly
above $n=10$. Secondly, the distribution decays very quickly and,
regardless of the value of $n$, the primary contributions come from
$k=0,1,2,3$ .
The largest contribution is always $Q\left(n,0\right)$ whose limiting
value is given by
\begin{equation}
Q\left(n,0\right)\overset{n\to\infty}{\to}0.41942244...
\end{equation}
It approaches this value monotonously from above. For large enough $n$,
say $n\geq7$, it is useful to keep the following approximate values
in mind $Q\left(n,0\right)\approx0.42$ and $Q\left(n,1\right)\approx Q\left(n,0\right)$
and $Q\left(n,2\right)\approx\frac{Q\left(n,1\right)}{3}$.

\subsection{Commutant of Clifford tensor powers}\label{ssec:commutant}
A key ingredient for deriving our lower bound  is the commutant of the $t$-fold tensor power action of the Clifford group $\Cln$, i.e., the linear space of operators on $(\mathbb{C}^{2})^{\otimes n})^{\otimes t}$ that commute with $C^{\otimes t}$ for all $C \in \Cln$. Formally, we define it as follows:
\begin{definition}[Commutant of $t$-th Clifford tensor power action]
    We define $\mathrm{Comm}(\Cln, t)$ as follows
    \begin{equation}
        \mathrm{Comm}(\Cln, t) := \{A \in \mathcal{L}((\mathbb{C}^{2})^{\otimes n})^{\otimes t}) \;|\; [A, C^{\otimes t}] = 0 \quad \forall C \in \Cln \}.
    \end{equation}
\end{definition}

The seminal work \cite{grossSchurWeylDualityClifford2021a} fully characterized this commutant in terms of so-called stochastic Lagrangian subspaces defined as follows:

\begin{definition}[Stochastic Lagrangian subspaces]
The set $\Sigma_{t,t}$ denotes the set of all subspaces $T\subseteq \mathbb{F}^{2t}_2$ with the following properties:
\begin{enumerate}
    \item $\vec x \cdot \vec x = \vec y \cdot \vec y \mod 4 $ for all $(\vec x, \vec y) \in T$,
    \item $\dim (T) = t$,
    \item $\vec 1_t = (1,\dots,1) \in T$.
\end{enumerate}
We refer to elements in $\Sigma_{t,t}$ as \textit{stochastic Lagrangian subspaces}.
    
\end{definition}

In particular, the key result of \cite{grossSchurWeylDualityClifford2021a} is that the commutant $\mathrm{Comm}(\Cln, t)$ is spanned by operators $R(T)$ associated with the stochastic Lagrangian subspaces $T\in \Sigma_{t,t}$.

\begin{theorem}[Theorem 4.3 in \cite{grossSchurWeylDualityClifford2021a}]
If $n\geq t-1$, then $\mathrm{Comm}(\Cln, t)$ is spanned by the linearly independent operators $R(T) := r(T)^{\otimes n}$, where $T\in\Sigma_{t,t}$ and 
\begin{equation}
    r(T) := \sum_{(\vec x, \vec y) \in T} \ket{x}\bra{y}\,.
\end{equation}
\end{theorem}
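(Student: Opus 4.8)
The plan is to prove the three constituent claims in turn: (i) every $R(T)$ with $T\in\Sigma_{t,t}$ lies in $\mathrm{Comm}(\Cln,t)$; (ii) the family $\{R(T)\}_{T\in\Sigma_{t,t}}$ is linearly independent once $n\geq t-1$; and (iii) it spans the whole commutant. Together these show that $\{R(T)\}_{T\in\Sigma_{t,t}}$ is a basis, as claimed.

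\textbf{Step 1: membership in the commutant.} After the canonical reordering of tensor factors $((\mathbb{C}^{2})^{\otimes n})^{\otimes t}\cong((\mathbb{C}^{2})^{\otimes t})^{\otimes n}$, the operator $R(T)=r(T)^{\otimes n}$ acts as the same single-qubit-block operator $r(T)$ on each of the $n$ qubit positions, while $C^{\otimes t}$, for $C$ a one- or two-qubit generator of $\Cln$, acts nontrivially on only one or two of those positions. Since $\Cln$ is generated up to global phases by the Hadamard $H$, the phase gate $S=\mathrm{diag}(1,i)$, and the $\mathrm{CNOT}$, and global phases act as scalars on $C^{\otimes t}$, it suffices to verify
\[
[\,r(T),H^{\otimes t}\,]=0,\qquad [\,r(T),S^{\otimes t}\,]=0,\qquad [\,r(T)^{\otimes 2},\mathrm{CNOT}^{\otimes t}\,]=0 .
\]
Each is a one-line computation on $r(T)=\sum_{(\vec x,\vec y)\in T}\ket{x}\!\bra{y}$: conjugation by $S^{\otimes t}$ multiplies the $(\vec x,\vec y)$-term by $i^{\,|\vec x|-|\vec y|}$ (Hamming weights), which is trivial exactly under property~1 of $\Sigma_{t,t}$; conjugation by $\mathrm{CNOT}^{\otimes t}$ induces a linear reshuffling of the index pairs of $r(T)^{\otimes 2}$ that fixes it exactly when $T$ is a linear subspace; and conjugation by $H^{\otimes t}$ sends $r(T)$ to $r(T^{\perp})$ (here $^{\perp}$ is the standard inner product on $\mathbb{F}_{2}^{2t}$, and we use $\dim T=t$), so invariance amounts to $T^{\perp}=T$. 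Finally, applying property~1 to sums of elements of $T$ and expanding $|\vec x\oplus\vec x'|$ in terms of the overlap $N(\vec x,\vec x')$ shows $\vec x\cdot\vec x'\equiv\vec y\cdot\vec y'\pmod 2$, i.e.\ $T\subseteq T^{\perp}$, and equality follows by comparing dimensions. Hence all three commutators vanish and $R(T)\in\mathrm{Comm}(\Cln,t)$.

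\textbf{Step 2: linear independence.} A direct computation of the Hilbert--Schmidt Gram matrix gives
\[
\tr\!\big(R(T)^{\dagger}R(T')\big)=\tr\!\big(r(T)^{\dagger}r(T')\big)^{n}=|T\cap T'|^{\,n}=2^{\,n\dim(T\cap T')}.
\]
The diagonal entries equal $2^{nt}$, while two distinct $t$-dimensional subspaces meet in dimension at most $t-1$, so the off-diagonal entries are at most $2^{n(t-1)}$; linear independence is equivalent to non-singularity of this matrix. Establishing this at the sharp threshold $n\geq t-1$ --- far beyond what naive diagonal dominance (which would need $2^{n}>|\Sigma_{t,t}|$) delivers --- is the heart of the argument. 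The approach is to order $\Sigma_{t,t}$ by a partial order refining containment of pairwise intersections and exhibit the rescaled matrix $2^{-nt}\,\tr(R(T)^{\dagger}R(T'))$ as a triangular-dominant perturbation of the identity, equivalently to bound, for fixed $T$, the number of $T'\in\Sigma_{t,t}$ with a prescribed intersection dimension with $T$ and feed this into a Gershgorin-type estimate. This is the step I expect to be the main obstacle.

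\textbf{Step 3: spanning.} By Step~1 we have $\mathrm{span}\{R(T):T\in\Sigma_{t,t}\}\subseteq\mathrm{Comm}(\Cln,t)$, and by Step~2 this span has dimension $|\Sigma_{t,t}|$, so it remains to show $\dim\mathrm{Comm}(\Cln,t)\leq|\Sigma_{t,t}|$. For this I would use the standard identity $\dim\mathrm{Comm}(\Cln,t)=\Es{C\sim\Cln}\!\left[|\tr(C)|^{2t}\right]$, which holds because the commutant is the space of conjugation-invariants in $\mathcal{L}(((\mathbb{C}^{2})^{\otimes n})^{\otimes t})$, whose character at $C$ equals $|\tr(C^{\otimes t})|^{2}=|\tr(C)|^{2t}$, together with an evaluation of this Clifford moment --- by a Gauss-sum count over $\mathbb{F}_{2}$ or by induction on $t$ --- showing $\Es{C\sim\Cln}[|\tr(C)|^{2t}]\le|\Sigma_{t,t}|$, with equality precisely when $n\geq t-1$. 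Combining the three steps shows that $\{R(T)\}_{T\in\Sigma_{t,t}}$ is a basis of $\mathrm{Comm}(\Cln,t)$.
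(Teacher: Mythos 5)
This statement is imported verbatim from Gross--Nezami--Walter (Theorem 4.3 of \cite{grossSchurWeylDualityClifford2021a}); the paper gives no proof of it, so your proposal can only be judged on its own merits. Your Step 1 is correct and complete in outline: reducing to the generators $H$, $S$, $\mathrm{CNOT}$ after the block reordering, matching the mod-$4$ condition to $S^{\otimes t}$-invariance, linearity of $T$ to $\mathrm{CNOT}^{\otimes t}$-invariance, and self-duality $T=T^{\perp}$ (derived from polarizing the mod-$4$ condition) to $H^{\otimes t}$-invariance is exactly the standard argument, and the Gram-matrix computation $\tr(R(T)^{\dagger}R(T'))=2^{n\dim(T\cap T')}$ is right.

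However, Steps 2 and 3 are genuine gaps, and you have essentially deferred the entire substance of the theorem to them. For Step 2, the Gershgorin/diagonal-dominance route you sketch cannot reach the claimed threshold: as you yourself note, it requires $2^{n}\gtrsim|\Sigma_{t,t}|=\prod_{k=0}^{t-2}(2^{k}+1)\approx 2^{(t-1)(t-2)/2}$, i.e.\ $n=\Omega(t^{2})$, whereas the theorem asserts independence already at $n\geq t-1$. The Gram matrix is symmetric, so ``triangular dominance'' with respect to a partial order is not a meaningful strengthening, and no bound on the number of $T'$ with a prescribed intersection dimension is supplied; the actual proof of independence at the sharp threshold proceeds by a different mechanism (evaluating the $R(T)$ against suitably chosen product vectors), not by perturbing the Gram matrix. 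For Step 3, the identity $\dim\mathrm{Comm}(\Cln,t)=\Es{C}[|\tr C|^{2t}]$ is fine, but the asserted evaluation $\Es{C}[|\tr C|^{2t}]\leq|\Sigma_{t,t}|$ \emph{is} the deep content of the Gross--Nezami--Walter duality theorem; it is not obtainable by ``a Gauss-sum count or induction on $t$'' in a few lines, and without an independent handle on the commutant the moment computation is circular. As written, the proposal establishes only that the $R(T)$ lie in the commutant, which is the easy inclusion.
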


Alongside this central characterization, \cite{grossSchurWeylDualityClifford2021a} proved several 
 results about the operators $R(T)$:

\begin{fact}[Eq. (4.10) in \cite{grossSchurWeylDualityClifford2021a}]
\label{fact:stabilizer-powers-sandwich}
For all $T\in\Sigma_{t,t}$ and all stabilizer states $\ket{S}$, it holds that
\begin{equation}
    \bra{S}^{\otimes t} R(T)\ket{S}^{\otimes t} = 1\,.
\end{equation}
\end{fact}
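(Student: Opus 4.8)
The plan is to use the Clifford-covariance of $R(T)$ to collapse the statement to the single stabilizer state $\ket{0^n}$, and then to evaluate the resulting matrix element one qubit-block at a time. First I would invoke transitivity of the Clifford group on stabilizer states: every $\ket{S}\in\stabn$ can be written as $\ket{S}=C\ket{0^n}$ for some $C\in\Cln$. Since the quoted theorem of \cite{grossSchurWeylDualityClifford2021a} guarantees $R(T)\in\mathrm{Comm}(\Cln,t)$, we have $R(T)C^{\otimes t}=C^{\otimes t}R(T)$, and hence
\begin{equation}
\bra{S}^{\otimes t}R(T)\ket{S}^{\otimes t}=\bra{0^n}^{\otimes t}\left(C^{\dagger}\right)^{\otimes t}R(T)\,C^{\otimes t}\ket{0^n}^{\otimes t}=\bra{0^n}^{\otimes t}R(T)\ket{0^n}^{\otimes t}\,.
\end{equation}
So it suffices to prove the claim for $\ket{S}=\ket{0^n}$.

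Next I would make the tensor structure explicit. Under the reordering $((\mathbb{C}^{2})^{\otimes n})^{\otimes t}\cong((\mathbb{C}^{2})^{\otimes t})^{\otimes n}$ that groups the $t$ copies of each qubit together, the vector $\ket{0^n}^{\otimes t}$ becomes $\left(\ket{0^t}\right)^{\otimes n}$, and $R(T)=r(T)^{\otimes n}$ acts as a single copy of $r(T)$ on each of the $n$ blocks. Therefore
\begin{equation}
\bra{0^n}^{\otimes t}R(T)\ket{0^n}^{\otimes t}=\left(\bra{0^t}r(T)\ket{0^t}\right)^{n}\,.
\end{equation}
Finally, using $r(T)=\sum_{(\vec x,\vec y)\in T}\ket{x}\bra{y}$ together with $\braket{0^t|x}=1$ iff $\vec x=\vec 0_t$ (and $0$ otherwise), the only surviving term of $\bra{0^t}r(T)\ket{0^t}$ is the one with $(\vec x,\vec y)=(\vec 0_t,\vec 0_t)$, which lies in $T$ because $T$ is a linear subspace and hence contains the origin. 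Thus $\bra{0^t}r(T)\ket{0^t}=1$, and plugging back in gives $\bra{S}^{\otimes t}R(T)\ket{S}^{\otimes t}=1^{n}=1$.

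There is no genuinely difficult step here; the one place requiring care is the bookkeeping of the tensor-factor reordering, namely checking that the $n$-fold tensor power in $R(T)=r(T)^{\otimes n}$ acts on the $t$ copy-registers belonging to a \emph{single} qubit (so that it is compatible with the factorization $\ket{0^n}^{\otimes t}=\left(\ket{0^t}\right)^{\otimes n}$), rather than across the qubits of a single copy. It is also worth remarking that this identity uses only that $T$ is a subspace: neither the congruence condition $\vec x\cdot\vec x=\vec y\cdot\vec y\bmod 4$ nor the requirement $\vec 1_t\in T$ in the definition of $\Sigma_{t,t}$ enters the argument.
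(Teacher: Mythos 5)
Your proof is correct. The paper itself imports this statement as a fact from \cite{grossSchurWeylDualityClifford2021a} without reproving it, so there is no in-paper argument to compare against; your two-step derivation (reduce to $\ket{0^n}$ by transitivity of $\Cln$ on $\stabn$ together with $[R(T),C^{\otimes t}]=0$, then evaluate $\bigl(\bra{0^t}r(T)\ket{0^t}\bigr)^n=1$ using $(\vec 0,\vec 0)\in T$) is a clean, self-contained route, and the tensor-factor bookkeeping you flag is handled correctly: $r(T)$ acts on the $t$ copy-registers of a single qubit, which matches the factorization $\ket{0^n}^{\otimes t}\cong(\ket{0^t})^{\otimes n}$.

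One caveat about your closing remark: it is not accurate to say that the identity ``uses only that $T$ is a subspace.'' That is true of the final one-block evaluation, but your first step --- replacing an arbitrary $\ket{S}$ by $\ket{0^n}$ --- relies on $R(T)$ commuting with all $C^{\otimes t}$, and that is precisely where the defining conditions of $\Sigma_{t,t}$ (in particular the congruence $\vec x\cdot\vec x=\vec y\cdot\vec y\bmod 4$) enter. For a generic $t$-dimensional subspace $T\subseteq\mathbb{F}_2^{2t}$ the operator $R(T)$ is not in the commutant and the claim fails for some stabilizer states; already for $t=1$ and $T=\{(0,0),(1,0)\}$ one has $\bra{1}r(T)\ket{1}=0$. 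Relatedly, the commutation $[R(T),C^{\otimes t}]=0$ is asserted by the quoted theorem in the regime $n\geq t-1$; it does hold for all $n$ in \cite{grossSchurWeylDualityClifford2021a}, but it is worth being explicit that this is the ingredient you are invoking rather than mere linearity of $T$.
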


\begin{fact}[Traces of $R(T)$, Remark 5.1 in \cite{grossSchurWeylDualityClifford2021a}]
\label{fact:traces-of-RT}
Let $T\in\Sigma_{t,t} $ and let $\Delta = \{ (\vec x, \vec x) \,| \, \vec x \in \mathbb{F}_2^t \}$ be the diagonal subspace. Let $l = t - \dim ( T \cap \Delta)$, then 
    \begin{equation}
        \tr \, R(T) = (\tr \, r(T) )^n = 2^{n(t-l)}\,.
    \end{equation}
    Furthermore, 
    \begin{equation}
        \sum_{T\in\Sigma_{t,t}}\tr \,  R(T)  = 2^{nt} (-2^{-n};2)_{t-1} \, ,
    \end{equation}
     where the $q$-Pochhammer symbol $(-2^{-n};2)_{t-1}$ is given by $(-2^{-n};2)_{t-1} = \prod_{k=0}^{t-2} (1+2^{-n+k} )$.

\end{fact}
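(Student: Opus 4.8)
The first identity is immediate. Since the trace is multiplicative over tensor products, $\tr R(T) = \tr(r(T)^{\otimes n}) = (\tr r(T))^n$, and
\begin{equation}
\tr r(T) = \sum_{(\vec x, \vec y) \in T} \braket{y|x} = \bigl|\{\vec x \in \mathbb{F}_2^t : (\vec x, \vec x) \in T\}\bigr| = |T \cap \Delta|\,.
\end{equation}
As $T \cap \Delta$ is a subspace of $\Delta \cong \mathbb{F}_2^t$ of dimension $t - l$, its size is $2^{t-l}$, so $\tr R(T) = 2^{n(t-l)}$.

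For the summation formula I would first pass to the \emph{defect subspace} $N_T := \{\vec x + \vec y : (\vec x, \vec y) \in T\} \subseteq \mathbb{F}_2^t$. Applying rank--nullity to the linear map $T \to \mathbb{F}_2^t$, $(\vec x, \vec y) \mapsto \vec x + \vec y$, whose kernel is $T \cap \Delta$ and whose image is $N_T$, gives $l(T) = t - \dim(T \cap \Delta) = \dim N_T$, so $\tr R(T) = 2^{nt}\,(2^{-n})^{\dim N_T}$. Writing $z = 2^{-n}$, the claim reduces to the purely combinatorial identity $\sum_{T \in \Sigma_{t,t}} z^{\dim N_T} = \prod_{k=0}^{t-2}(1 + 2^k z)$, since substituting back $z = 2^{-n}$ and factoring out $2^{nt}$ recovers $2^{nt}(-2^{-n};2)_{t-1}$.

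To prove this identity I would count stochastic Lagrangians by their defect dimension. One direction is elementary: because $(\vec 1_t, \vec 1_t) \in T$ by property~3, a short computation using property~1 — expanding the Hamming weight of an XOR modulo $4$ — shows that the $\vec x$-components of $T \cap \Delta$ are precisely $N_T^{\perp}$, so $\vec 1_t \in N_T^{\perp}$ and hence every defect subspace lies in the hyperplane $\vec 1_t^{\perp} \cong \mathbb{F}_2^{t-1}$. The classification of $\Sigma_{t,t}$ in \cite{grossSchurWeylDualityClifford2021a} supplies the converse together with multiplicities: each subspace $N \subseteq \vec 1_t^{\perp}$ of dimension $l$ is the defect of exactly $2^{\binom{l}{2}}$ stochastic Lagrangians. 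Since $\mathbb{F}_2^{t-1}$ has $\binom{t-1}{l}_2$ subspaces of dimension $l$, grouping by $l = \dim N_T$ yields
\begin{equation}
\sum_{T \in \Sigma_{t,t}} z^{\dim N_T} = \sum_{l=0}^{t-1} \binom{t-1}{l}_2\, 2^{\binom{l}{2}}\, z^l\,,
\end{equation}
and the $q$-binomial theorem (with $q = 2$) identifies the right-hand side with $\prod_{k=0}^{t-2}(1 + 2^k z)$, as required.

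The main obstacle is exactly the structural input invoked above: verifying that every subspace of $\vec 1_t^{\perp}$ is realised as a defect and with multiplicity precisely $2^{\binom{\dim N}{2}}$ — the latter should emerge from parametrising the ``twist'' of a pure-defect subspace by an alternating form on $N$, but making this precise is where one genuinely leans on the detailed analysis of $\Sigma_{t,t}$ in \cite{grossSchurWeylDualityClifford2021a}; the necessary-condition direction is routine. A seemingly more self-contained alternative is to swap the order of summation,
\begin{equation}
\sum_{T \in \Sigma_{t,t}} |T \cap \Delta|^n = \sum_{\vec v_1, \dots, \vec v_n \in \mathbb{F}_2^t} \#\bigl\{T \in \Sigma_{t,t} : (\vec v_i, \vec v_i) \in T \ \text{for all } i\bigr\}\,,
\end{equation}
reducing to counting stochastic Lagrangians that contain a prescribed diagonal subspace; but this still requires the same classification as a black box, so I would not expect it to be shorter.
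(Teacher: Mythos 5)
The paper itself offers no proof of this statement --- it is imported verbatim as Remark 5.1 of \cite{grossSchurWeylDualityClifford2021a} --- so the only meaningful comparison is with what you would need to make your argument self-contained. Your proof of the first identity is complete and correct: $\tr r(T)=|T\cap\Delta|=2^{t-l}$, hence $\tr R(T)=2^{n(t-l)}$. For the summation formula, your reduction is also sound: rank--nullity gives $l(T)=\dim N_T$ for $N_T=\{\vec x+\vec y:(\vec x,\vec y)\in T\}$; the mod-$4$ expansion of $|\vec x\oplus\vec u|$ does show that the $\vec x$-components of $T\cap\Delta$ lie in $N_T^{\perp}$, and equality of dimensions upgrades this to equality, so $N_T\subseteq\vec 1_t^{\perp}$; and the $q$-binomial step is correct. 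Setting $z=1$ in your refined count also reproduces \Cref{fact:cardinality-sigma-tt}, and the count checks out for $t\leq 4$, so the combinatorial identity you reduce to is true.

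The genuine gap is exactly the one you flag, and it is bigger than your phrasing suggests: the multiplicity claim --- that every $N\subseteq\vec 1_t^{\perp}$ with $\dim N=l$ is the XOR-defect of exactly $2^{\binom{l}{2}}$ elements of $\Sigma_{t,t}$ --- carries the entire content of the computation, and it is not ``supplied'' by the classification in \cite{grossSchurWeylDualityClifford2021a} in this form. Their structure theory is organized around the left/right defect subspaces $\{\vec x:(\vec x,\vec 0)\in T\}$ and $\{\vec y:(\vec 0,\vec y)\in T\}$, which are constrained to have all weights $\equiv 0 \bmod 4$; these are different objects from your image subspace $N_T$, so turning their classification into your count (e.g.\ by showing that the admissible $T$ with $\{(\vec x,\vec x):\vec x\in N^{\perp}\}\subseteq T\subseteq\{(\vec x,\vec y):\vec x+\vec y\in N\}$ are parametrized by alternating forms on $N$) is real work that your sketch leaves entirely open. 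If you are content to quote \cite{grossSchurWeylDualityClifford2021a} as a black box anyway, a shorter route is to take the trace of their stabilizer moment formula $\mathbb{E}_{\ket S\sim\stabn}\left[(\ket S\bra S)^{\otimes t}\right]=\bigl(2^{n}\prod_{k=0}^{t-2}(2^{n}+2^{k})\bigr)^{-1}\sum_{T\in\Sigma_{t,t}}R(T)$: unit trace of the left-hand side gives $\sum_{T}\tr R(T)=2^{n}\prod_{k=0}^{t-2}(2^{n}+2^{k})=2^{nt}(-2^{-n};2)_{t-1}$ in one line, with no refined counting --- though one should check how that normalization is derived in the reference to avoid circularity.
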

We note that $\tr\left(R\left(T\right)\right)=2^{nt}$, i.e. $l=0$, only for the identity element $T=e$.

\begin{fact}[Cardinality of $\Sigma_{t,t}$, Theorem 4.10 in \cite{grossSchurWeylDualityClifford2021a}]\label{fact:cardinality-sigma-tt}
\begin{equation}
        \left|\Sigma_{t,t}\right|=\prod_{k=0}^{t-2}\left(2^{k}+1\right)\leq2^{\frac{1}{2}(t^{2}+5t)} \,.
\end{equation}
\end{fact}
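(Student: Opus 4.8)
My plan is to extract the cardinality directly from the trace identities of \Cref{fact:traces-of-RT} by a short polynomial‑interpolation argument, and then deduce the stated upper bound by a crude estimate. Write $l(T) := t - \dim(T\cap\Delta)$ as in \Cref{fact:traces-of-RT}, so that $\tr R(T) = 2^{n(t-l(T))}$ for every $T\in\Sigma_{t,t}$, and $\sum_{T\in\Sigma_{t,t}} \tr R(T) = 2^{nt}\prod_{k=0}^{t-2}(1+2^{k-n})$ for all $n\geq t-1$. Since $2^{-nt}\tr R(T) = 2^{-n\,l(T)}$, dividing the summed identity by $2^{nt}$ and writing $z := 2^{-n}$ yields
\begin{equation}
\sum_{T\in\Sigma_{t,t}} z^{l(T)} \;=\; \prod_{k=0}^{t-2}\left(1+2^{k}z\right).
\end{equation}
Both sides are genuine polynomials in $z$: the left-hand side has degree $\max_T l(T) \leq t-1$ (at most $t-1$ since the all-ones vector lies in $T\cap\Delta$ for every $T$, by property~3 in the definition of $\Sigma_{t,t}$), and the right-hand side has degree $t-1$. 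They agree at the infinitely many points $z\in\{2^{-(t-1)},2^{-t},2^{-(t+1)},\dots\}$, so they agree as polynomials in $z$.

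Evaluating this polynomial identity at $z=1$, the left-hand side collapses to $\sum_{T\in\Sigma_{t,t}} 1 = \left|\Sigma_{t,t}\right|$ and the right-hand side to $\prod_{k=0}^{t-2}(1+2^{k}) = \prod_{k=0}^{t-2}(2^{k}+1)$, which is the claimed formula. For the inequality, I use $1+2^{k}\leq 2\cdot 2^{k}=2^{k+1}$ for $k\geq 0$ to get
\begin{equation}
\left|\Sigma_{t,t}\right| \;\leq\; \prod_{k=0}^{t-2}2^{k+1} \;=\; 2^{\sum_{k=0}^{t-2}(k+1)} \;=\; 2^{\frac{1}{2}t(t-1)} \;\leq\; 2^{\frac{1}{2}(t^{2}+5t)},
\end{equation}
the last step being the (very loose) bound $t(t-1)\leq t^{2}+5t$.

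The only point requiring care is circularity: this argument is valid precisely when the trace identities of \Cref{fact:traces-of-RT} have been established without already invoking $\left|\Sigma_{t,t}\right|$. If instead one wants a self-contained count — the route of \cite{grossSchurWeylDualityClifford2021a} — the work is structural rather than algebraic: one checks that every $T\in\Sigma_{t,t}$ is a Lagrangian subspace for the symmetric bilinear form $((\vec x,\vec y),(\vec x',\vec y'))\mapsto \vec x\cdot\vec x'+\vec y\cdot\vec y'$ on $\mathbb{F}_{2}^{2t}$ (the polarized form of property~1), extracts its \emph{defect} $N:=\{\vec x : (\vec x,\vec 0)\in T\}$, which is a self-orthogonal, doubly-even subspace, and shows that $T$ is then determined by $N$ together with a weight-preserving isomorphism relating the quotients $N^{\perp}/N$ on the two sides; summing over all admissible defects $N$ the number of such isomorphisms reproduces $\prod_{k=0}^{t-2}(2^{k}+1)$. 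In that approach the real obstacle is the combinatorial bookkeeping — identifying exactly which subspaces can occur as defects and counting the associated orthogonal transformations over $\mathbb{F}_{2}$ — which the polynomial trick above avoids entirely.
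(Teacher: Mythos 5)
Your derivation is correct as a piece of algebra, but note that the paper does not prove this statement at all: it is imported verbatim as Theorem 4.10 of the cited work, so any comparison is really with that reference, where the count is obtained by the structural argument you sketch at the end (classifying the subspaces via their defect and counting the associated weight-preserving isomorphisms), not by interpolation. Your route — dividing the summed trace identity of \Cref{fact:traces-of-RT} by $2^{nt}$, observing that $\sum_{T} z^{l(T)}$ and $\prod_{k=0}^{t-2}(1+2^{k}z)$ are polynomials of degree at most $t-1$ (your use of property 3 to force $\dim(T\cap\Delta)\geq 1$, hence $l(T)\leq t-1$, is the right observation, keeping in mind that the all-ones vector of $\mathbb{F}_2^{2t}$ lies in $\Delta$), matching them at the points $z=2^{-n}$ for all admissible $n$, and evaluating at $z=1$ — is clean, and the crude bound $1+2^k\leq 2^{k+1}$ giving $2^{t(t-1)/2}\leq 2^{(t^2+5t)/2}$ is fine. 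What it buys is brevity: the entire combinatorial bookkeeping of defects is replaced by a two-line generating-function evaluation. What it costs is exactly the dependency issue you flag yourself: the summed trace formula in \Cref{fact:traces-of-RT} (Remark 5.1 of the same source) is established there \emph{after}, and in the standard treatment with the help of, the characterization and enumeration of $\Sigma_{t,t}$ — it arises from the normalization of $\Es{\ket{S}}[\ket{S}\!\bra{S}^{\otimes t}]$ expanded in the $R(T)$ basis — so as a self-contained proof of the cardinality your argument is circular unless one first re-derives that identity by an independent route; within the present paper, where both ingredients are themselves unproved citations, your argument should be read as a consistency check relating the two cited facts rather than as an independent proof. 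With that caveat made explicit (as you do), the write-up is sound.
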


\cite{grossSchurWeylDualityClifford2021a} also characterized the commutant further by uncovering an important group structure within. To this end, need the following definition.

\begin{definition}[Stochastic orthogonal group]
The stochastic orthogonal group, denoted $O_t$, is defined as the group of $t\times t$ binary matrices $O$ such that 
\begin{equation}
    O \vec x \cdot O\vec x = \vec x \cdot \vec x \quad \forall \vec x \in \Ft \,.
\end{equation}
\end{definition}
Note that, for any $O\in O_t$, the subspace $T_O = \{(O\vec x, \vec x) \, | \, \vec x \in \Ft \}$ is a stochastic Lagrangian subspace. That is, $T_O \in \Sigma_{t,t}$ for all $O \in O_t$. In the following, we will thus view $O_t$ as a subset of $\Sigma_{t,t}$, i.e., $O_t \subset \Sigma_{t,t}$. We will denote the identity element in $O_t$ by $e$, it corresponds to the diagonal subspace $\Delta = \{ (\vec x, \vec x) \,| \, \vec x \in \mathbb{F}_2^t \}$. Notice also that the symmetric group on $t$ elements, denoted $\mathcal{S}_t$, can be viewed as a subgroup of $O_t$ by considering its matrix representation on $\mathbb{F}_2^t$.

While for $O\in O_t$, the corresponding operators $R(O)$ are unitary, this is not the case for the operators $R(T)$ for $T \in\Sigma_{t,t}\setminus O_t$. Here, we record a bound on the trace-norm of these operators which we will require later.

\begin{fact}[c.f. Lemma 1 in \cite{haferkampEfficientUnitaryDesigns2023}]\label{fact:1-norm-bound-homeopathy}
Let $T\in \Sigma_{t,t}\setminus O_t$, then
    \begin{equation}
        \left\Vert R\left(T\right)\right\Vert _{1}\leq2^{n\left(t-1\right)}\,.
    \end{equation}
\end{fact}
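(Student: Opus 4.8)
The plan is to reduce the trace-norm bound to a statement about the single-tensor-factor operator $r(T)$, using the multiplicativity of the trace norm under tensor products: since $R(T) = r(T)^{\otimes n}$, we have $\|R(T)\|_1 = \|r(T)\|_1^n$. So it suffices to show $\|r(T)\|_1 \leq 2^{t-1}$ for every $T \in \Sigma_{t,t} \setminus O_t$. Recall $r(T) = \sum_{(\vec x,\vec y)\in T}\ket{x}\bra{y}$ is a $0/1$ matrix of size $2^t \times 2^t$, namely the "incidence matrix" of the subspace $T$ viewed as a relation on $\mathbb{F}_2^t \times \mathbb{F}_2^t$.

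The key structural observation I would use is that $T$, being a subspace of $\mathbb{F}_2^{2t}$ containing $\vec 1_{2t}$ with $\dim T = t$, has well-defined left and right projections $\pi_L(T), \pi_R(T) \subseteq \mathbb{F}_2^t$ (the sets of $\vec x$, resp. $\vec y$, appearing in some pair), each of which is a subspace, and correspondingly left and right kernels $K_L = \{\vec x : (\vec x, \vec 0)\in T\}$ and $K_R = \{\vec y : (\vec 0,\vec y)\in T\}$. A dimension count gives $\dim \pi_L(T) = t - \dim K_R$ and similarly on the right; moreover $r(T)$, after a change of basis on both sides adapted to these subspaces, becomes block-structured: it is (a permutation of) $J_{|K_R|} \otimes B$ where $J_{|K_R|}$ is the all-ones matrix on $K_R$ and $B$ is the incidence matrix of a \emph{graph} (partial bijection) between complementary subspaces — more precisely, $r(T)$ has constant row-sums $|K_R|$ and constant column-sums $|K_L|$, and $r(T)\, r(T)^\dagger = |K_R| \cdot r(T T^{-1})$-type identity reflecting that $T$ composed with its transpose is again a stochastic Lagrangian subspace. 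Concretely, $\|r(T)\|_1$ can be computed from $r(T) r(T)^\dagger$: one shows $r(T) r(T)^\dagger = |K_R| \cdot r(O')$ for a suitable $O' \in O_t$ (the "defect" group element), hence $r(T)$ has exactly $|\pi_L(T)| = 2^{t - \dim K_R}$ nonzero singular values, each equal to $\sqrt{|K_R| \cdot 2^{t}/|\pi_L(T)|} = \sqrt{|K_R|\cdot |K_R|} \cdot$ (normalization) — so $\|r(T)\|_1 = |\pi_L(T)| \cdot \sqrt{|K_R|\cdot|K_R|}$ reorganizes to $\|r(T)\|_1 = 2^{t}\, 2^{-\dim K_R/2}\cdot 2^{\dim K_R}= 2^{t + \dim K_R/2}$... which I would need to recompute carefully; the clean endpoint is that $\|r(T)\|_1 = 2^{t - 1}$ exactly when $\dim K_R = $ the minimal nonzero value, and is smaller otherwise, with the maximum over $\Sigma_{t,t}\setminus O_t$ attained at the "smallest" non-orthogonal $T$. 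The crucial input distinguishing the two cases is that $T \in O_t$ iff $K_L = K_R = \{\vec 0\}$ iff $r(T)$ is a permutation matrix (1-norm $= 2^t$); as soon as $T \notin O_t$, the kernels are nontrivial, forcing at least a factor-of-$2$ loss and giving $\|r(T)\|_1 \le 2^{t-1}$.

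The main obstacle I anticipate is getting the singular-value bookkeeping exactly right — tracking how $\dim K_L$, $\dim K_R$, $\dim \pi_L(T)$, $\dim \pi_R(T)$ interact with the constraint $\vec x \cdot \vec x = \vec y\cdot\vec y \bmod 4$ and the containment $\vec 1 \in T$, and verifying that the worst case over all $T \in \Sigma_{t,t}\setminus O_t$ really is $2^{t-1}$ rather than something slightly larger. A cleaner route that sidesteps the case analysis is to cite/adapt the computation in Haferkamp's Lemma 1 (as the statement already signals): there $\|r(T)\|_1$ is related to $\tr(r(T)r(T)^\dagger)^{1/2}$-type quantities and to the defect $l$ from Fact 2.22, and one shows $\|R(T)\|_1 = 2^{n(t-1)}$ is only saturated in a boundary case while $T \notin O_t$ forces the strict structure. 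I would therefore: (i) establish $\|R(T)\|_1 = \|r(T)\|_1^n$; (ii) decompose $r(T)$ via the left/right kernel subspaces into an all-ones block times a partial-permutation block; (iii) read off the singular values and hence $\|r(T)\|_1$ in terms of $\dim K_R$ (equivalently the defect); (iv) observe $T\notin O_t \Rightarrow \dim K_R \ge 1$ and conclude $\|r(T)\|_1 \le 2^{t-1}$, hence $\|R(T)\|_1 \le 2^{n(t-1)}$.
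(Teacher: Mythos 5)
First, note that the paper does not actually prove this statement: it is imported as a Fact from Lemma~1 of Haferkamp's work on efficient unitary designs, so there is no in-paper argument to compare against. Your overall strategy --- use $\|R(T)\|_1=\|r(T)\|_1^n$ and compute the singular values of $r(T)$ from the left/right defect subspaces $K_L=\{\vec x:(\vec x,\vec 0)\in T\}$, $K_R=\{\vec y:(\vec 0,\vec y)\in T\}$ --- is indeed the standard route and does work. The clean version is: $r(T)r(T)^\dagger = |K_L|\,|K_R|\sum_C \ket{u_C}\!\bra{u_C}$, where $C$ ranges over the $2^{t-\dim K_L-\dim K_R}$ distinct cosets of $K_L$ arising as fibres of $\pi_R$ and $\ket{u_C}$ is the normalized uniform superposition over $C$; hence $r(T)$ has $2^{t-\dim K_L-\dim K_R}$ nonzero singular values, each equal to $2^{(\dim K_L+\dim K_R)/2}$, and $\|r(T)\|_1 = 2^{\,t-(\dim K_L+\dim K_R)/2}$. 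Your stated count of nonzero singular values ($2^{t-\dim K_R}$) and your final formula ($2^{t+\dim K_R/2}$, which exceeds $2^t$ and is therefore impossible) are both wrong as written --- you flag this yourself, but it is not merely bookkeeping to be tidied later; the exponent is exactly what the fact is about.

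The genuine gap is the step from $T\notin O_t$ to $\|r(T)\|_1\le 2^{t-1}$. From the correct formula you need $\dim K_L+\dim K_R\ge 2$, i.e.\ \emph{both} defect subspaces nontrivial, and nothing in your sketch rules out, say, $K_R=\{\vec 0\}$ with $K_L\neq\{\vec 0\}$, which would only give $2^{t-1/2}$ and hence the weaker bound $2^{n(t-1/2)}$. The missing ingredient is a polarization of the defining condition: for $(\vec x,\vec y),(\vec x',\vec y')\in T$, expanding $(\vec x+\vec x')\cdot(\vec x+\vec x')$ over the integers shows $\vec x\cdot\vec x=\vec y\cdot\vec y \bmod 4$ on $T$ implies $\vec x\cdot\vec x'=\vec y\cdot\vec y' \bmod 2$ for all pairs in $T$. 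This gives $K_L\subseteq(\mathrm{im}\,\pi_L)^{\perp}$, hence $\dim K_L\le t-\dim(\mathrm{im}\,\pi_L)=\dim K_R$, and symmetrically $\dim K_R\le\dim K_L$, so the two defect dimensions are equal; since $T\in O_t$ iff both defects vanish, $T\notin O_t$ forces $\dim K_L=\dim K_R\ge 1$ and the bound $\|r(T)\|_1\le 2^{t-1}$ follows. With that lemma added and the singular-value computation corrected as above, your proof closes; without it, the claimed exponent $n(t-1)$ is not reached.
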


Lastly, similar to \cite{harrowApproximateOrthogonalityPermutation2023}, we want to quantify the orthogonality of the operators $R(T)$ spanning the commutant $\mathrm{Comm}(\Cln, t)$. To this end, we define their corresponding Gram matrix as follows:

\begin{definition}[Gram matrix $G$ corresponding to $\Sigma_{t,t}$]
We define the Gram matrix corresponding to $\{R(T)\}_{T\in\Sigma_{t,t}}$ as the $|\Sigma_{t,t}| \times |\Sigma_{t,t}|$-matrix with entries given by
\begin{equation}
        G^{(n,t)}_{T,T'} := \tr \left( R(T)^{\dagger} R(T') \right) \quad \text{for }T,T'\in\Sigma_{t,t} \,.
\end{equation}
\end{definition}
For convenience, we will often drop the superscript $(n,t)$ on $G^{(n,t)}$. Next, we state a straightforward bound on the off-diagonal row-sums of the Gram matrix:

\begin{fact}[Bound on the off-diagonal row-sum of the Gram matrix]\label{fact:row-sum-bound}
Let $T\in\Sigma_{t,t}$, then
\begin{equation}
     \sum_{T'\in\Sigma_{t,t}\setminus \{ T \}}G_{T,T'} \leq |\Sigma_{t,t}|\cdot 2^{n(t-1)} \leq 2^{n(t-1) + \frac{1}{2}(t^2+ 5t)}.
\end{equation}
\end{fact}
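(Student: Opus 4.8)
The plan is to bound each off-diagonal Gram entry $G_{T,T'}$ for $T \neq T'$ by $2^{n(t-1)}$, and then sum over the at most $|\Sigma_{t,t}|$ many terms $T' \in \Sigma_{t,t}\setminus\{T\}$, finally invoking Fact 2.20 to replace $|\Sigma_{t,t}|$ by $2^{\frac{1}{2}(t^2+5t)}$. So the whole statement reduces to the single-entry estimate
\begin{equation}
    \bigl| \tr\bigl( R(T)^\dagger R(T') \bigr) \bigr| \leq 2^{n(t-1)} \quad \text{for all } T \neq T' \in \Sigma_{t,t}\,.
\end{equation}

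To prove this, first observe that $R(T)^\dagger R(T') = r(T)^{\dagger\otimes n} r(T')^{\otimes n} = \bigl(r(T)^\dagger r(T')\bigr)^{\otimes n}$, so $\tr\bigl(R(T)^\dagger R(T')\bigr) = \bigl(\tr\bigl(r(T)^\dagger r(T')\bigr)\bigr)^n$. Writing $r(T) = \sum_{(\vec x, \vec y)\in T} \ket{x}\bra{y}$, one computes $r(T)^\dagger r(T') = \sum_{(\vec x, \vec y)\in T, (\vec x', \vec y')\in T'} \ket{y}\bra{x}\ket{x'}\bra{y'}$, whose trace picks out the terms with $\vec x = \vec x'$ and $\vec y = \vec y'$, i.e. the pairs $(\vec x, \vec y) \in T \cap T'$. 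Hence $\tr\bigl(r(T)^\dagger r(T')\bigr) = |T \cap T'|$, and therefore $G_{T,T'} = |T\cap T'|^n$. Since $T \neq T'$ are both $t$-dimensional subspaces of $\mathbb{F}_2^{2t}$, their intersection $T \cap T'$ is a proper subspace of each, so $\dim(T\cap T') \leq t-1$, giving $|T\cap T'| \leq 2^{t-1}$. This is not quite what we want — it would yield $G_{T,T'} \leq 2^{n(t-1)}$ only after noting $|T\cap T'|^n \leq (2^{t-1})^n = 2^{n(t-1)}$, which is in fact exactly the claim. (In particular the Gram entries are all nonnegative, so the absolute values are superfluous.)

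Assembling the pieces: for fixed $T$,
\begin{equation}
    \sum_{T'\in\Sigma_{t,t}\setminus\{T\}} G_{T,T'} = \sum_{T'\in\Sigma_{t,t}\setminus\{T\}} |T\cap T'|^n \leq \sum_{T'\in\Sigma_{t,t}\setminus\{T\}} 2^{n(t-1)} \leq |\Sigma_{t,t}| \cdot 2^{n(t-1)}\,,
\end{equation}
and the final inequality $|\Sigma_{t,t}| \cdot 2^{n(t-1)} \leq 2^{n(t-1)+\frac{1}{2}(t^2+5t)}$ follows directly from Fact 2.20. There is no real obstacle here: the only mild subtlety is the clean identification $G_{T,T'} = |T\cap T'|^n$, which is a short index-chasing computation and is standard in this literature (it is the $\mathbb{F}_2$-analogue of the fact that overlaps of permutation operators count common cycles); everything else is a one-line dimension count plus the cardinality bound already recorded in Fact 2.20. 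A cosmetic point worth flagging is that the bound is genuinely loose — most pairs $T, T'$ have intersection far smaller than $2^{t-1}$ — which is precisely why the open-problems section anticipates that a sharper analysis of this Gram matrix could tighten the final sample-complexity lower bound.
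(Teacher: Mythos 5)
Your proof is correct and follows essentially the same route as the paper's: identify $G_{T,T'} = |T\cap T'|^n$ via the explicit form of $r(T)$, bound $|T\cap T'|\leq 2^{t-1}$ for distinct $t$-dimensional subspaces, sum over at most $|\Sigma_{t,t}|$ terms, and invoke the cardinality bound on $\Sigma_{t,t}$. No issues.
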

\begin{proof}
    \begin{equation}
    \sum_{T'\in\Sigma_{t,t}\setminus \{ T \}}G_{T,T'} = \sum_{T'\in\Sigma_{t,t}\setminus \{ T \}} \tr \left( R(T)^{\dagger} R(T') \right) = \sum_{T'\in\Sigma_{t,t}\setminus \{ T \}} \tr \left( r(T)^{\dagger} r(T') \right)^n,
    \end{equation}
    and using $r(T) = \sum_{(\vec x, \vec y) \in T} \ket{x}\bra{y}$, we find
    \begin{equation}
        \tr \left( r(T)^{\dagger} r(T') \right) = | T \cap T '| =\begin{cases}
2^{t} & T=T',\\
\leq2^{t-1} & T\neq T'.
\end{cases}
    \end{equation}
    Hence, 
    \begin{equation}
        \sum_{T'\in\Sigma_{t,t}\setminus \{ T \}}G_{T,T'} \leq |\Sigma_{t,t}|\cdot 2^{n(t-1)}.
    \end{equation}
\end{proof}

\section{Single-copy algorithm for stabilizer testing}\label{sec:upper-bound}
In this section, we will prove the following theorem:

\begin{theorem}[Upper bound for single-copy testing]\label{thm:main-result-upper-bound}
    Let $n\geq3, \epsilon>3\cdot2^{-n}$ and let $\ket{\psi}$ be an $n$-qubit pure state. There exists a single-copy algorithm for stabilizer testing an unknown state $\ket\psi$ that uses $t = O\left(n/\epsilon^{2}\right)$ copies of $\ket \psi$, runs in time $O(n^3)$, and succeeds with high probability.
\end{theorem}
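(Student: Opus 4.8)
The plan is to show that the success probability of the three‑step procedure from the introduction, $\avgspanP(\ket\psi)$, is a faithful proxy for stabilizer‑ness: it equals the \emph{same} classically computable value on every stabilizer state, call it $P_0:=\avgspanP(\ket S)$, and satisfies $\avgspanP(\ket\psi)\ge P_0+\Omega(\epsilon)$ whenever $F_{\mathrm{Stab}}(\ket\psi)\le 1-\epsilon$. Granting this, the algorithm runs the procedure $O(1/\epsilon^2)$ times — each run is a single $\{0,1\}$‑valued experiment with mean exactly $\avgspanP(\ket\psi)$ — averages to estimate $\avgspanP(\ket\psi)$ within $\epsilon/C$ with probability $2/3$ (Hoeffding), and outputs ``stabilizer'' iff the estimate is below $P_0+\tfrac{\epsilon}{2C}$. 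Each run uses $K=O(n)$ copies (so $t=O(n/\epsilon^2)$ total), and its classical cost is dominated by sampling a Clifford ($O(n^2)$) and one Gaussian elimination on a $\Theta(n)\times n$ matrix over $\F$ ($O(n^3)$).

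To analyse $\avgspanP$, I would first note that, conditioned on the random Clifford $C$, \Cref{lem:relation-r-p} and \Cref{fact:Clifford-mapping-dagger} imply that the difference‑sampling distribution $r_{C\ket\psi}$ depends only on the values $\{p_\psi(\vec y)\}_{\vec y\in M'}$, where $M':=C^{-1}(\mathcal Z)$ is a uniformly random Lagrangian subspace; concretely $r_{C\ket\psi}(H)=2^{\dim H}\,p_\psi(W)$ for the subspace $W\le M'$ dual to $H$, so each hyperplane $H$ carries weight $\tfrac12(1+\theta_{\vec y_H})$ with $\theta_{\vec y}:=2^n p_\psi(\vec y)\in[0,1]$ and $\vec y_H$ the nonzero vector of the corresponding line $W\le M'$. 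Hence $\avgspanP(\ket\psi)=\Es{M'}\left[\Pr_{\vec v_1,\dots,\vec v_K\sim r}(\langle\vec v_1,\dots,\vec v_K\rangle=\Fn)\right]$ (one may also expand this term‑by‑term by Möbius inversion over the subspace lattice via \Cref{lem:subspace-weight-spanning-probability}). For a stabilizer state $\ket S$, $p_S$ is $2^{-n}$ on its Lagrangian $M_S$ and $0$ elsewhere, so $r$ is uniform on a subspace of $\Fn$ of codimension $\dim(M'\cap M_S)$; the span equals $\Fn$ with probability $\prod_{i=K-n+1}^K(1-2^{-i})$ if $M'\cap M_S=\{\vec 0\}$ and $0$ otherwise, which by \Cref{cor:p-n-k-values} gives the closed form $P_0=Q(n,0)\prod_{i=K-n+1}^K(1-2^{-i})$, independent of $\ket S$ and $\approx Q(n,0)\approx 0.42$ for $n\ge 10$ and $K=O(n)$.

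The heart of the argument is $\avgspanP(\ket\psi)\ge P_0+\Omega(\epsilon)$ for far states. Fix a Lagrangian $M\supseteq M_\psi$ (available by \Cref{fact:m-psi-isotropic}); then $\sum_{\vec y\in M}\theta_{\vec y}=2^n p_\psi(M)\le 2^n F_{\mathrm{Stab}}(\ket\psi)\le (1-\epsilon)2^n$ by \Cref{fact:lower-bound-stabilizer-fidelity}, while $\theta_{\vec y}\le\tfrac12$ for $\vec y\notin M$ by definition of $M_\psi$. I would split $\Es{M'}[\cdot]$ according to $k:=\dim(M'\cap M)$, whose law is $Q(n,k)$. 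On the stratum $k=0$, every $\vec y_H$ lies off $M_\psi$, so every hyperplane has weight $\le\tfrac34$ and a union bound gives span probability $\ge 1-(2^n-1)(3/4)^K=:1-\eta$, with $\eta\le\epsilon/100$ once $K=O(n)$ is large enough; thus this stratum contributes $\ge Q(n,0)(1-\eta)$, already essentially all of $P_0$. The extra $\Omega(\epsilon)$ comes from $k=1$ (which contributes $0$ for stabilizer states): there is one ``heavy'' hyperplane $H_0$ of weight $\tfrac12(1+\theta_{\vec y_0})$ with $\vec y_0$ uniform in $M\setminus\{\vec 0\}$, so span probability $\ge 1-(\tfrac{1+\theta_{\vec y_0}}{2})^K-\eta$; averaging over $\vec y_0$ and using the elementary inequality $(\tfrac{1+\theta}{2})^K\le\theta$ on $[\tfrac12,1]$ (valid for $K\ge3$) together with $\sum_{\vec y\in M\setminus\{\vec 0\}}\theta_{\vec y}\le(1-\epsilon)2^n-1$ shows the conditional span probability is $\ge\epsilon-O(2^n(3/4)^K)\ge\tfrac{49}{50}\epsilon$. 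Multiplying by $Q(n,1)$ and absorbing the exponentially small losses yields $\avgspanP(\ket\psi)\ge Q(n,0)+0.38\epsilon\ge P_0+\tfrac{\epsilon}{3}$.

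I expect the main obstacle to be exactly this last step: a union bound over hyperplanes applied \emph{before} averaging over $C$ is useless, since for a constant fraction of $C$ the distribution $r$ lives on a proper subspace of $\Fn$ and $\sum_H r(H)^K\gg 1$. One therefore has to stratify by $k=\dim(M'\cap M)$, observe that the $k=0$ stratum already reproduces $P_0$, isolate the single dangerous hyperplane on the $k=1$ stratum and bound its ``leakage'' using only the aggregate constraint $p_\psi(M)\le 1-\epsilon$ (rather than any pointwise control of $p_\psi$), and discard the strata $k\ge 2$, which are nonnegative and hence harmless. Finally one must verify that all the $2^{-\Omega(K)}$‑type error terms are $\le\epsilon/100$; this is where the hypotheses $\epsilon>3\cdot 2^{-n}$ (so that $\log(1/\epsilon)=O(n)$) and $K=\Theta(n)$ with a sufficiently large constant enter, and it is also what forces $K>n$, as in the algorithm statement.
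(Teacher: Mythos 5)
Your proposal is correct and follows essentially the same route as the paper: estimate the average spanning probability $\avgspanP$, compute its exact stabilizer value $Q(n,0)\prod_{j=0}^{n-1}(1-2^{j-K})$, then stratify the Clifford average by $k=\dim(C(M)\cap\mathcal{Z})$ for a Lagrangian $M\supseteq M_\psi$, with the $k=0$ stratum recovering the stabilizer value up to exponentially small error, the $k=1$ stratum supplying the $\Omega(\epsilon)$ gap via $p_\psi(M)\le F_{\mathrm{Stab}}(\ket\psi)\le 1-\epsilon$, and the $k\ge 2$ strata discarded. The only deviations are cosmetic, e.g. your elementary bound $\left(\tfrac{1+\theta}{2}\right)^K\le\theta$ on $[1/2,1]$ versus the paper's $\left(\tfrac{1+x}{2}\right)^K\le x+2^{-K}$, and slightly different numerical constants.
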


This section is organized as follows: In \Cref{ssec:average-spanning-probability}, we introduce the main quantity of interest $\avgspanP\left(\ket\psi\right)$, which we call \textit{average spanning probability}, alongside a single-copy algorithm for estimating $\avgspanP\left(\ket\psi\right)$ to additive precision $\epsilon$. The average spanning probability $\avgspanP\left(\ket\psi\right)$ captures the probability that $K=O(n)$ samples drawn randomly according to the computational difference sampling distribution $r_{C\ket\psi}$ of $C \ket \psi$ span the full space $\Fn$ for $C$ drawn uniformly randomly from the Clifford group. The movation behind this quantity is that the stabilizer states are extremal with respect to this quantity.
In \Cref{ssec:stabilizer-value}, we derive an exact expression for this extremal value of the average spanning probability for stabilizer states. In \Cref{ssec:beyond-stabilizer-states}, we discuss how the relation between $r_{\psi}$ and $p_{\psi}$ from \Cref{lem:relation-r-p} allows us to extend these calculations to non-stabilizer states. Lastly, in \Cref{ssec:bound-on-avg-spanning-prob}, we show that for a state $\ket{\psi}$ that is $\epsilon$-far from all stabilizer states, the value of $\avgspanP\left(\ket\psi\right)$ deviates by at least $\Omega(\epsilon)$ from the extremal value attained by the stabilizer states. This immediately implies the main result \Cref{thm:main-result-upper-bound}, because we have an algorithm to estimate $\avgspanP\left(\ket\psi\right)$ and can efficiently compute the exact extremal stabilizer value.

\subsection{Average spanning probability}
\label{ssec:average-spanning-probability}

\begin{definition}[Spanning probability]
\label{def:spanning-probability}
    Let $\ket{\psi}$ be an $n$-qubit pure state and let $r_{\psi}$ be its computational difference sampling distribution.  Let $\vec v_{1},\dots,\vec v_{K}\sim r_{\psi}$ and consider the event that $\vec v_{1},\dots,\vec v_{K}\in\Fn$ span the full space $\Fn$, in short $\langle \vec v_{1},\dots,\vec v_{K} \rangle =\Fn$. The $K$-\textit{spanning probability} of $\ket{\psi}$, denoted $\spanP \left(\ket{\psi}\right)$ is defined as the probability of this event, i.e., 
    \begin{equation}
\spanP \left(\ket{\psi}\right):=\Pr_{\vec v_{1},\dots, \vec v_{K}\sim r_{\psi}}\left(\langle \vec v_{1},\dots,\vec v_{K} \rangle = \Fn \right)\,.
    \end{equation}
\end{definition}

Note that $\spanP\left(\ket\psi\right)$ depends on the (positive integer) parameter $K$ and we have that $\spanP\left(\ket\psi\right)=0$ for $K<n$. Throughout this work, we will choose $K
$ such that $K\geq n$ but also $K=O(n)$. For this canonical choice, we will sometimes drop the $K$ and refer to $\spanP\left(\ket{\psi}\right)$ simply as the spanning probability of $\ket{\psi}$.

\begin{definition}[Average spanning probability]
\label{def:average-spanning-probability}
    Let $\ket{\psi}$ be an $n$-qubit pure state. The \textit{average spanning probability} $\avgspanP \left(\ket\psi\right)$ of $\ket{\psi}$ is defined as
    \begin{equation}
\avgspanP\left(\ket\psi\right):=\Es{C\sim\Cln}\left[\spanP\left(C\ket{\psi}\right)\right]\,.
    \end{equation}
\end{definition}

The motivation for defining the average spanning probability is that it precisely captures the expectation value of the following simple random process consuming single copies of an $n$-qubit pure state $\ket{\psi}$.

\begin{enumerate}
    \item Draw a uniformly random Clifford $C \sim \Cln$.
    \item Sample $\vec v_{1},\dots,\vec v_{K}\sim r_{C\ket\psi}$ by performing computational difference sampling on $C\ket{\psi}$.
    \item If $\langle \vec v_{1},\dots, \vec v_{K}\rangle=\Fn$, output 1, else output 0.
\end{enumerate}
 
Note that the third step involves computing the rank of the $K\times n$ binary matrix $M$ constructed by taking the samples $\vec v_{1},\dots, \vec v_{K}$ as the rows. We have that $ \langle \vec v_{1},\dots, \vec v_{K} \rangle = \Fn$ is equivalent to $\mathrm{rank}\left(M\right)=n$. By repeating the above 3-step process and averaging the outcomes, one can estimate $\avgspanP(\ket{\psi})$. Hence, we record the following lemma:

\begin{lemma}[Estimating $\avgspanP(\ket{\psi})$ with single copies]
\label{lem:estimation}
    Let $\ket{\psi}$ be an $n$-qubit pure state and let $K\geq n$. Then, there exists a single-copy algorithm that, with probability $1-\delta$, produces an estimate $\hat{r}$ such that 
    \begin{equation}
        \left|\hat{r}-\avgspanP\left(\psi\right)\right|\leq\epsilon.
    \end{equation}
     The algorithm consumes $ O\left(\frac{K\log\left(1/\delta\right)}{\epsilon^{2}}\right)$ copies of $\ket{\psi}$ and runs in time $O\left(Kn^{2}\right)$.
\end{lemma}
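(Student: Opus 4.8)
The plan is to analyze the natural estimator that repeats the three-step random process from \Cref{def:average-spanning-probability} a number of times and reports the empirical mean. First I would note that a single run of the process produces a Bernoulli random variable $X_j \in \{0,1\}$ whose expectation is exactly $\avgspanP(\ket\psi)$: this is immediate from the definitions, since conditioning on the random Clifford $C$ gives expectation $\spanP(C\ket\psi)$ by \Cref{def:spanning-probability}, and averaging over $C\sim\Cln$ yields $\avgspanP(\ket\psi)$ by \Cref{def:average-spanning-probability}. Running $m$ independent repetitions and setting $\hat r = \frac{1}{m}\sum_{j=1}^m X_j$, a Hoeffding (or Chernoff) bound gives $\Pr[|\hat r - \avgspanP(\ket\psi)| > \epsilon] \le 2\exp(-2m\epsilon^2)$, so choosing $m = O(\epsilon^{-2}\log(1/\delta))$ suffices for the claimed $1-\delta$ success probability.

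Next I would account for the resources. Each repetition draws one random Clifford $C$ (which by the fact cited in the preliminaries costs $O(n^2)$ classical time and produces a circuit implementation of $C$), then performs computational difference sampling $K$ times on $C\ket\psi$; since each computational difference sample consumes two single copies of $\ket\psi$ (measure each in the computational basis after applying $C$, then add mod $2$), one repetition consumes $2K = O(K)$ copies. Over all $m$ repetitions the total copy count is $O(mK) = O\!\left(\frac{K\log(1/\delta)}{\epsilon^2}\right)$, and crucially every measurement is single-copy, so the algorithm is a valid single-copy algorithm.

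For the time complexity, each repetition requires: sampling and compiling $C$ in $O(n^2)$; applying $C$ and measuring, which is subsumed by the circuit-application cost; and the classical post-processing of step 3, namely forming the $K\times n$ binary matrix with the samples $\vec v_1,\dots,\vec v_K$ as rows and computing its rank over $\mathbb{F}_2$ via Gaussian elimination, which takes $O(Kn^2)$ (or $O(Kn \cdot \min(K,n))$, in any case $O(Kn^2)$ since $K = O(n)$). Summing over $m$ repetitions gives total classical time $O(mKn^2)$; however the lemma states $O(Kn^2)$, so I read the stated runtime as the \emph{per-repetition} cost (equivalently, absorbing the $m$ into the sample-complexity statement), and I would phrase the proof so that this is unambiguous, or simply report the total $O(mKn^2)$ and remark it matches the stated bound up to the repetition factor already exhibited in the copy count.

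This proof is essentially routine, so there is no serious obstacle; the only thing to be careful about is bookkeeping, specifically (i) that the Bernoulli variables across repetitions are genuinely i.i.d.\ — which holds because each repetition uses fresh copies of $\ket\psi$ and an independently sampled Clifford — so that a concentration inequality applies, and (ii) that the factor-of-two from computational difference sampling consuming two copies per sample is correctly propagated into the $O(K\log(1/\delta)/\epsilon^2)$ copy bound. I would also remark that the condition $K \ge n$ is needed only to ensure $\avgspanP(\ket\psi)$ is not trivially zero (by \Cref{def:spanning-probability}, $\spanP = 0$ whenever $K < n$), and plays no role in the estimation argument itself.
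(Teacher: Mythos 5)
Your proposal is correct and matches the paper's proof essentially verbatim: Hoeffding's inequality applied to the $\{0,1\}$-valued outcome of each repetition gives the sample complexity, and the runtime is accounted for by Clifford sampling, computational difference sampling, and Gaussian elimination on the $K\times n$ matrix. Your reading of the stated $O(Kn^2)$ runtime as the per-repetition cost is consistent with how the paper presents it.
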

\begin{proof}
    The sample complexity follows immediately from Hoeffding's inequality since the output of the above 3-step process takes values in  $\left\{ 0,1\right\}$. The run-time consists of the time to generate a random $n$-qubit Clifford, the time to perform computational difference sampling to generate $K$ samples and the time to compute the rank of a $K\times n$-matrix which can be done by naive Gaussian elimination in time $O\left(Kn^{2}\right)$.
\end{proof}

As we will demonstrate through the course of this section, taking $K=O(n)$ and estimating $\avgspanP(\ket{\psi})$ to additive precision $\epsilon$ is sufficient for stabilizer testing. In particular, the time and sample complexities stated in \Cref{thm:main-result-upper-bound} follow directly from \Cref{lem:estimation}.

\subsection{The stabilizer value of \texorpdfstring{$\avgspanP(\ket\psi)$}{Pketpsi}}\label{ssec:stabilizer-value}

Because of the average over the Clifford group $\Cln$ in \Cref{def:average-spanning-probability}, $\avgspanP(\ket\psi)$ takes the same value for all stabilizer states $\ket S \in \stabn$. We call this value the \textit{stabilizer value}, and define it formally as follows:
\begin{definition}[Stabilizer value]
    Let $\ket S\in\stabn$ be an $n$-qubit pure stabilizer state. We define the stabilizer value of the average spanning probability as
    \begin{equation}
        \avgspanP \left(\mathrm{Stab}\left(n\right)\right):=\avgspanP\left(\ket S\right)\,.
    \end{equation}
\end{definition}

Next, we derive an expression for the stabilizer value in terms of $n,K$ which can be efficiently computed exactly. To this end, we first characterize the $K$-spanning probability for stabilizer states. 

\begin{lemma}[Spanning probability for stabilizer states]\label{lem:spanning-prob-stabilizer-states}
    Let $K\geq n$ and let $\ket S$ be a stabilizer state. Furthermore, let $M_{S}$ be the Lagrangian subspace corresponding to the stabilizer group of $\ket S$. Then, the $K$-spanning probability is given by

    \begin{equation}
        \spanP\left(\ket S\right)=\begin{cases}
\prod_{j=0}^{n-1}\left(1-2^{j-K}\right) & \dim\left(M_{S}\cap\mathcal{Z}\right)=0,\\
0 & \mathrm{otherwise.}
\end{cases}
    \end{equation}
\end{lemma}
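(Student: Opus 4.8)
The plan is to reduce the computation of $\spanP(\ket{S})$ to a statement purely about subspace weights of the computational difference sampling distribution $r_S$, and then to use the fact that for a stabilizer state this distribution is uniform over a linear subspace of $\Fn$. Concretely, recall from \Cref{eq:affine-subspace-stabilizer-states} that measuring $\ket S$ in the computational basis gives the uniform distribution over an affine subspace $A$ of $\Fn$; after the difference trick the affine shift cancels, so $r_S$ is exactly uniform over the linear subspace $V := A - A \subseteq \Fn$. In fact, invoking \Cref{lem:relation-r-p}, the support of $r_S$ is determined by $M_S$: one has $r_S(H) = |H|\, p_S(\vec 0_n \times H^\perp)$, and since $\ket S$ is a stabilizer state $p_S$ is the uniform distribution over the Lagrangian subspace $M_S$ (each of its $2^n$ elements carrying weight $2^{-n}$). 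Hence $r_S(H)$ counts, up to normalization, how much of $M_S$ lies in $\mathcal{Z} \cap (\vec 0_n \times H^\perp)$. Taking $H = \Fn$ gives $r_S(\Fn) = 1$ as it must; taking $H^\perp$ small shows that the support $V$ of $r_S$ is precisely the projection onto the $X$-coordinates of $M_S \cap (\text{something})$ — more cleanly, $V^\perp = \{\vec b : (\vec 0_n, \vec b) \in M_S\}$, i.e. $V^\perp$ is identified with $M_S \cap \mathcal{Z}$. Therefore $\dim V = n - \dim(M_S \cap \mathcal{Z})$.

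With this structural fact in hand the proof splits into the two advertised cases. If $\dim(M_S \cap \mathcal{Z}) = k > 0$, then $V = \mathrm{supp}(r_S)$ is a proper subspace of dimension $n-k < n$, so any number of samples $\vec v_1,\dots,\vec v_K$ drawn from $r_S$ lie in $V$ and cannot span $\Fn$; hence $\spanP(\ket S) = 0$, giving the "otherwise" branch. If $\dim(M_S \cap \mathcal{Z}) = 0$, then $r_S$ is the uniform distribution on all of $\Fn$. So it remains to compute the probability that $K$ uniformly random vectors in $\Fn$ span $\Fn$. This is the classical count: the first vector must be nonzero (probability $1 - 2^{-K}$ if we think of it the other way — actually it is cleanest to build the spanning set incrementally), and in general the probability that $K$ i.i.d.\ uniform vectors span $\Fn$ equals $\prod_{j=0}^{n-1}\bigl(1 - 2^{j-K}\bigr)$. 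The standard argument: condition on the event and use inclusion over the "flag" of partial spans, or equivalently compute $\Pr[\langle \vec v_1,\dots,\vec v_K\rangle = \Fn]$ via the identity that a random $K \times n$ matrix over $\F$ has full column rank $n$ with probability $\prod_{j=0}^{n-1}(1 - 2^{-(K-j)})$, which is a textbook fact about random matrices over finite fields (it is the probability that successively chosen columns avoid the span of the previous ones, read in the transposed picture). Alternatively one can derive it directly from \Cref{lem:subspace-weight-spanning-probability} combined with Möbius inversion over the subspace lattice, but the incremental rank computation is shortest.

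I would actually phrase the full-rank computation using \Cref{lem:subspace-weight-spanning-probability} to stay self-contained: for each proper subspace $H \subsetneq \Fn$ one has $\Pr(\langle \vec v_1,\dots,\vec v_K\rangle \subseteq H) = (|H|/2^n)^K = 2^{(\dim H - n)K}$, and then inclusion--exclusion over hyperplanes (or Möbius inversion on the lattice of subspaces) recovers $\spanP(\ket S) = \sum_{H \subseteq \Fn} \mu(H, \Fn)\, 2^{(\dim H - n)K}$, which telescopes to the product $\prod_{j=0}^{n-1}(1 - 2^{j-K})$. Either route is routine; the reference to \Cref{lem:subspace-weight-spanning-probability} makes the dependence on $K$ transparent.

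The main obstacle, such as it is, is not the combinatorics but pinning down the structural claim that $\mathrm{supp}(r_S)$ has dimension exactly $n - \dim(M_S \cap \mathcal{Z})$ and that $r_S$ is \emph{uniform} on this support. This needs the precise form of $p_S$ for stabilizer states (uniform on the Lagrangian $M_S$) together with \Cref{lem:relation-r-p}; one must be careful that the orthogonal-complement bookkeeping in \Cref{lem:relation-r-p} correctly translates "$M_S$ meets $\mathcal{Z}$ in dimension $k$" into "$r_S$ is supported on a codimension-$k$ subspace", and that no affine shift survives the difference sampling. Once that translation is done cleanly, the two cases and the product formula follow immediately.
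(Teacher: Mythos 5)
Your proposal is correct and follows essentially the same route as the paper: the paper's proof likewise observes that $\dim(M_S\cap\mathcal{Z})\neq 0$ confines $r_S$ to a proper subspace (forcing $\spanP(\ket S)=0$), and that otherwise $r_S$ is uniform over $\Fn$ so the claim reduces to the textbook full-rank probability $\prod_{j=0}^{n-1}(1-2^{j-K})$. Your additional bookkeeping via \Cref{lem:relation-r-p}, identifying $V^\perp$ with $M_S\cap\mathcal{Z}$, is a correct elaboration of the structural facts the paper simply asserts.
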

\begin{proof}
    If $\dim\left(M_{S}\cap\mathcal{Z}\right)\neq0$, then $r_{\ket{S}}$ is fully confined to a subspace of $\Fn$ and hence the spanning probability is zero.
    Otherwise, if $\dim\left(M_{S}\cap\mathcal{Z}\right)=0$, then $r_{\ket{S}}$ is the uniform distribution over $\Fn$. In that case, the claimed expression is well-known; it is the probability that a random binary $K\times n$-matrix is full rank. 
\end{proof}

With \Cref{lem:spanning-prob-stabilizer-states} established, we can now obtain the stabilizer value for any combination of $K$ and $n$. Recall, that for $K<n$, the spanning probability is always trivially zero, hence the following theorem focuses on the case $K\geq n$. For convenience, we point the reader to the definition of $Q(n,k)$ given in \Cref{def:p-n-k-def}.
\begin{theorem}[Stabilizer value in terms of $n,K$]\label{thm:stabilizer-value}
Let $K\geq n$, then
\begin{equation}
    \avgspanP\left(\stabn \right) =Q\left(n,0\right)\times\prod_{j=0}^{n-1}\left(1-2^{j-K}\right)=\prod_{j=1}^{n}\frac{\left(1-2^{j-1-K}\right)}{\left(1+2^{-j}\right)}.
\end{equation}
    
\end{theorem}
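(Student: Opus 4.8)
The plan is to combine the two preceding results, \Cref{lem:spanning-prob-stabilizer-states} and \Cref{def:p-n-k-def} (together with \Cref{cor:p-n-k-values}), by conditioning on the value of $\dim(C(M_S)\cap\mathcal{Z})$. First I would fix any stabilizer state $\ket S$ with corresponding Lagrangian subspace $M_S$. Since $\avgspanP(\ket S)$ does not depend on the choice of $\ket S$ (as noted before the statement, because we average over the full Clifford group), it suffices to compute $\Es{C\sim\Cln}[\spanP(C\ket S)]$ for this fixed $\ket S$. The key observation is that $C\ket S$ is again a stabilizer state, and its Lagrangian subspace is exactly $C(M_S)$; therefore, by \Cref{lem:spanning-prob-stabilizer-states}, $\spanP(C\ket S)$ equals $\prod_{j=0}^{n-1}(1-2^{j-K})$ when $\dim(C(M_S)\cap\mathcal{Z})=0$ and equals $0$ otherwise.

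Next I would take the expectation over $C\sim\Cln$. Writing $F(n,K):=\prod_{j=0}^{n-1}(1-2^{j-K})$ for the full-rank probability, we get
\begin{equation}
\avgspanP(\stabn)=\Es{C\sim\Cln}\left[\spanP(C\ket S)\right]=F(n,K)\cdot\Pr_{C\sim\Cln}\left(\dim(C(M_S)\cap\mathcal{Z})=0\right)=F(n,K)\cdot Q(n,0)\,,
\end{equation}
using \Cref{def:p-n-k-def} for the last equality. This already gives the first claimed form $\avgspanP(\stabn)=Q(n,0)\times\prod_{j=0}^{n-1}(1-2^{j-K})$.

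Finally, to obtain the product form $\prod_{j=1}^{n}\frac{1-2^{j-1-K}}{1+2^{-j}}$, I would substitute the closed expression $Q(n,0)=\prod_{i=1}^{n}\frac{1}{1+2^{-i}}$ from \Cref{cor:p-n-k-values} and re-index the full-rank product: replacing $j$ by $j-1$ turns $\prod_{j=0}^{n-1}(1-2^{j-K})$ into $\prod_{j=1}^{n}(1-2^{j-1-K})$. Pairing the two products term by term over $j=1,\dots,n$ then yields the stated expression. None of these steps is a genuine obstacle; the only thing to be slightly careful about is the index shift in the full-rank product and confirming that $C(M_S)$ is indeed the Lagrangian subspace of $C\ket S$ — but this is exactly the content of the ``Random Cliffords'' discussion preceding \Cref{def:p-n-k-def}, where it is recorded that a random Clifford maps a Lagrangian subspace to a uniformly random Lagrangian subspace.
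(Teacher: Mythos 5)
Your proposal is correct and follows essentially the same route as the paper: fix a stabilizer state $\ket S$, partition (equivalently, condition) over $\dim(C(M_S)\cap\mathcal{Z})$, invoke \Cref{lem:spanning-prob-stabilizer-states} so that only the $k=0$ part contributes, and identify the resulting probability with $Q(n,0)$. The final re-indexing and substitution of the closed form of $Q(n,0)$ are also as in the paper.
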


\begin{proof}
    Let $\ket S$ be any n-qubit pure stabilizer state and let $M_{S}$ be the Lagrangian subspace corresponding to the stabilizer group of $\ket S$. We partition the Clifford group as $\Cln = \bigcup_{k=0}^n \Clnk$ where
    \begin{equation}\label{eq:Clifford-partition}
        \Clnk:=\left\{ C\in\Cln:\dim\left(C\left(M_{S}\right)\cap\mathcal{Z}\right)=k\right\} \,.
    \end{equation}
    Then,
    \begin{align}
\Es{C \sim \Cln}\left[\spanP\left(C\ket S\right)\right]
&=\frac{1}{\left|\mathrm{Cl}\left(n\right)\right|}\sum_{k=0}^{n}\sum_{C\in\mathrm{Cl}\left(n,k\right)}\spanP\left(C\ket S\right)\, ,\\
&=\frac{1}{\left|\mathrm{Cl}\left(n\right)\right|}\sum_{C\in\mathrm{Cl}\left(n,0\right)}\bigg[\prod_{j=0}^{n-1}\left(1-2^{j-K}\right)\bigg] \, ,\\
&=\underbrace{\frac{\left|\mathrm{Cl}\left(n,0\right)\right|}{\left|\mathrm{Cl}\left(n\right)\right|}}_{=Q(n,0)}\times\prod_{j=0}^{n-1}\left(1-2^{j-K}\right)\, .
    \end{align}
\end{proof}

\subsection{Going beyond stabilizer states}\label{ssec:beyond-stabilizer-states}
In the previous subsection, we found that the $K$-spanning probability
$\spanP\left(\ket{\psi}\right)$ for a state $\ket{\psi}$
is easy to treat in the case of stabilizer states $\ket S$ because
the distribution $r_{\ket S}$ is either uniform over $\mathbb{F}_{2}^{n}$
or confined to some proper subspace of $\mathbb{F}_{2}^{n}$. However,
how can we obtain $\spanP\left(\ket{\psi}\right)$ for general
pure states $\ket{\psi}$ where $r_{\psi}$ does not take such a simple form?

Instead of calculating $\spanP\left(\ket{\psi}\right)$ exactly,
in this section, we outline our approach for obtaining bounds on $\spanP\left(\ket{\psi}\right)$.
Firstly, note that we can rewrite the $K$-spanning probability as
follows,
\begin{align}
\spanP\left(\ket{\psi}\right) & =1-\Pr_{\vec v_{1}, \dots, \vec v_{K}\sim r_{\psi}}\left(\langle \vec v_{1}, \dots, \vec v_{K} \rangle\neq\mathbb{F}_{2}^{n}\right)\\
 & =1-\Pr_{\vec v_{1}, \dots, \vec v_{K}\sim r_{\psi}}\left(\bigcup_{L\in\mathcal{L}\setminus\left\{ \mathbb{F}_{2}^{n}\right\} }\left( \langle \vec v_{1}, \dots, \vec v_{K}\rangle=L\right)\right)\,,\label{eq:union-over-subspaces}
\end{align}
where the union goes over all subspaces $L\in\mathcal{L}\setminus\left\{ \mathbb{F}_{2}^{n}\right\} $
and $\mathcal{L}$ is the set of all subspaces of $\mathbb{F}_{2}^{n}$.
We can further rewrite this as 
\begin{equation}
\spanP\left(\ket{\psi}\right)=1-\Pr_{\vec v_{1}, \dots, \vec v_{K}\sim r_{\psi}}\left(\bigcup_{L\in\mathcal{L}\setminus\left\{ \mathbb{F}_{2}^{n}\right\} }\left(\langle \vec v_{1}, \dots, \vec v_{K} \rangle\subseteq L\right)\right),\label{eq:union-over-subspaces-contained}
\end{equation}
with the only difference between \Cref{eq:union-over-subspaces}
and \Cref{eq:union-over-subspaces-contained} being that we exchanged
$\langle \vec v_{1}, \dots, \vec v_{K} \rangle=L$ for $\langle \vec v_{1}, \dots, \vec v_{K} \rangle\subseteq L$.
Note that we can restrict the union in \Cref{eq:union-over-subspaces-contained}
to range only over the $\left(n-1\right)$-dimensional subspaces of
$\mathbb{F}_{2}^{n}$ because these contain all other smaller subspaces
in $\mathcal{L}$. So, let us define $\mathcal{L}^{n-1}$ to be the
set of all $2^{n}-1$ many $\left(n-1\right)$-dimensional subspaces
of $\mathbb{F}_{2}^{n}$. Then, we can rewrite $\spanP\left(\ket \psi\right)$
as
\begin{equation}
\spanP\left(\ket\psi\right)=1-\Pr_{\vec v_{1}, \dots, \vec v_{K}\sim r_{\psi}}\left(\bigcup_{L\in\mathcal{L}^{n-1}}\left(\langle \vec v_{1}, \dots, \vec v_{K} \rangle\subseteq L\right)\right).\label{eq:union-over-n-1-subspaces-contained}
\end{equation}
Using a union bound and the general relation between $r_\psi$ and $p_\psi$ from \Cref{lem:relation-r-p}, we find the following lower bound to the spanning
probability:
\begin{lemma}[Lower bound on the spanning probability from union bound]
\label{lem:lower-bound-from-union-bound}
Let $\ket{\psi}$ be an
$n$-qubit pure state. Then, the spanning probability $\spanP\left(\ket\psi\right)$
can be lower bounded as follows,
\begin{align}
\spanP\left( \ket \psi\right) & \geq1-\sum_{\vec y\in\mathcal{Z}^{\times}}\left[\frac{1}{2}+2^{n-1}p_{\psi}\left(\vec y\right)\right]^{K}\,=1-\sum_{\vec y\in\mathcal{Z}^{\times}}\,\left[\frac{1+\trace\left(\psi P_{\vec y}\right)^{2}}{2}\right]^{K}\,.\label{eq:union-bound-spanning-probability}
\end{align}
\end{lemma}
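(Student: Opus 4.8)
The plan is to start from the reformulation in \Cref{eq:union-over-n-1-subspaces-contained}, which expresses $1-\spanP(\ket\psi)$ as the probability of a union of events over the $(n-1)$-dimensional subspaces $L\in\mathcal{L}^{n-1}$. Applying the union bound immediately gives
\begin{equation}
\spanP(\ket\psi)\geq 1-\sum_{L\in\mathcal{L}^{n-1}}\Pr_{\vec v_1,\dots,\vec v_K\sim r_\psi}\left(\langle \vec v_1,\dots,\vec v_K\rangle\subseteq L\right)\,,
\end{equation}
and by \Cref{lem:subspace-weight-spanning-probability} each summand equals $\left(r_\psi(L)\right)^K$. So it remains to bound $r_\psi(L)$ for each hyperplane $L$ and to reparametrize the sum over hyperplanes as a sum over $\mathcal{Z}^\times$.

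Next I would use the correspondence \Cref{lem:relation-r-p}: for a subspace $H\subseteq\Fn$, $r_\psi(H)=|H|\,p_\psi(\vec 0_n\times H^\perp)$. Taking $H=L$ an $(n-1)$-dimensional subspace, we have $|L|=2^{n-1}$ and $L^\perp$ is a $1$-dimensional subspace of $\Fn$, i.e. $L^\perp=\{\vec 0_n,\vec c\}$ for a unique nonzero $\vec c\in\Fn$. Hence $\vec 0_n\times L^\perp=\{\vec 0_{2n},(\vec 0_n,\vec c)\}\subseteq\mathcal{Z}$, and
\begin{equation}
r_\psi(L)=2^{n-1}\left(p_\psi(\vec 0_{2n})+p_\psi(\vec 0_n,\vec c)\right)=2^{n-1}\left(2^{-n}+p_\psi(\vec y)\right)=\frac12+2^{n-1}p_\psi(\vec y)\,,
\end{equation}
where $\vec y=(\vec 0_n,\vec c)\in\mathcal{Z}^\times$ and we used $p_\psi(\vec 0_{2n})=\frac{1}{2^n}\trace(\psi)^2=2^{-n}$. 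The map $L\mapsto L^\perp\mapsto\vec c$ is a bijection between $\mathcal{L}^{n-1}$ and the $2^n-1$ nonzero vectors of $\Fn$, equivalently between $\mathcal{L}^{n-1}$ and $\mathcal{Z}^\times$. Substituting into the union bound yields
\begin{equation}
\spanP(\ket\psi)\geq 1-\sum_{\vec y\in\mathcal{Z}^\times}\left[\frac12+2^{n-1}p_\psi(\vec y)\right]^K\,,
\end{equation}
which is the first form in \Cref{eq:union-bound-spanning-probability}.

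Finally, the second form follows by unwinding $p_\psi(\vec y)=\frac{1}{2^n}\trace(\psi P_{\vec y})^2$, so that $2^{n-1}p_\psi(\vec y)=\frac12\trace(\psi P_{\vec y})^2$ and $\frac12+2^{n-1}p_\psi(\vec y)=\frac{1+\trace(\psi P_{\vec y})^2}{2}$. This is a purely cosmetic rewriting. I do not anticipate a serious obstacle here: the only points requiring a little care are (i) verifying that $L\mapsto \vec c$ with $\{\vec 0_n,\vec c\}=L^\perp$ is indeed a bijection $\mathcal{L}^{n-1}\to\Fn\setminus\{\vec 0_n\}$ (this is standard orthogonal-complement duality over $\F$, using the \textbf{Fact} that $\dim H+\dim H^\perp=n$ and $(H^\perp)^\perp=H$), and (ii) checking that restricting the union to $\mathcal{L}^{n-1}$ loses nothing, which was already justified in the text preceding the lemma since every proper subspace is contained in some hyperplane. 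Everything else is a direct application of \Cref{lem:subspace-weight-spanning-probability} and \Cref{lem:relation-r-p}.
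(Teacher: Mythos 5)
Your proposal is correct and follows essentially the same route as the paper's proof: union bound over the hyperplanes $\mathcal{L}^{n-1}$, \Cref{lem:subspace-weight-spanning-probability} to get $(r_\psi(L))^K$, and \Cref{lem:relation-r-p} together with $p_\psi(\vec 0_{2n})=2^{-n}$ and the bijection $L\mapsto L^{\perp}$ to reindex the sum over $\mathcal{Z}^{\times}$. No gaps.
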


\begin{proof}
Using the union bound, it follows from \Cref{eq:union-over-n-1-subspaces-contained},
that
\begin{align}
\spanP\left(\ket \psi\right) & \geq1-\sum_{L\in\mathcal{L}^{n-1}}\,\Pr_{ \vec v_{1},\dots, \vec v_{K}\sim r_{\psi}}\left(\langle \vec v_{1}, \dots, \vec v_{K} \rangle\subseteq L\right)\\
 & =1-\sum_{L\in\mathcal{L}^{n-1}}\left(r_{\psi}\left(L\right)\right)^{K}
\end{align}
where the equality follows from Lemma \ref{lem:subspace-weight-spanning-probability}.
Now, using the relationship laid out in \Cref{lem:relation-r-p},
we can relate $r_{\psi}\left(L\right)$ to $p_{\psi}\left(\vec 0_n \times L^{\perp}\right)$
as follows,
\begin{align}
r_{\psi}\left(L\right) 
& =\left|L\right|\,p\left(\vec 0_n\times L^{\perp}\right)\\
& =2^{n-1}p\left(\vec 0_n\times L^{\perp}\right)\,.
\end{align}
Note that since $\dim\left(L\right)=n-1$, we have that $\dim\left(L^{\perp}\right)=1$
and so each $L^{\perp}$ is of the form $\langle \vec b \rangle=\left\{ \vec 0_n, \vec b\right\}  $
for some $\vec b\in\mathbb{F}_{2}^{n}\setminus\left\{ \vec 0_n\right\} $
and $\vec 0_n\times L^{\perp}$ is of the form $\langle \vec y \rangle =\left\{ \vec 0_{2n},\vec y\right\} $
where $\vec y=\left(\vec 0_n, \vec b\right)$. Hence, we have that
\begin{equation}
r_{\psi}\left(L\right)=2^{n-1}\left(p_{\psi}\left(\vec 0_{2n} \right)+p_{\psi}\left(\vec y\right)\right)=\frac{1}{2}+2^{n-1}p_{\psi}\left(\vec y\right).
\end{equation}
Finally, when summing over all $L\in\mathcal{L}^{n-1}$, we find
\begin{align}
\spanP\left(\ket \psi\right) & \geq1-\sum_{L\in\mathcal{L}^{n-1}}\left(r_{\psi}\left(L\right)\right)^{K}=1-\sum_{\vec y\in\mathcal{Z}^{\times}}\left[\frac{1}{2}+2^{n-1}p_{\psi}\left(\vec y\right)\right]^{K}\,.\label{eq:lower-bound-in-proof}
\end{align}
\end{proof}

In the next section, we will make use of \Cref{lem:lower-bound-from-union-bound}
in order to lower bound the average spanning probability for arbitrary states .

\subsection{Bound in terms of stabilizer fidelity}\label{ssec:bound-on-avg-spanning-prob}

Below, we will prove our main technical theorem establishing a lower bound on $\avgspanP(\ket \psi)$ in terms of $p_{\psi}$. However, let us briefly remark that the following direct approach fails: Apply
the union bound from \Cref{lem:lower-bound-from-union-bound} and take the average over all Cliffords to obtain
\begin{align}
\avgspanP\left(\ket\psi\right)=\Es{C \sim \Cln}\left[\mathbb{P}_{K}\left(C\ket{\psi}\right)\right] & \geq1-\sum_{\vec y\in\mathcal{Z}^{\times}}\,\Es{C \sim \Cln}\left[\frac{1+\trace\left(\ket{\psi}\bra{\psi}C^{\dagger}P_{\vec y}C\right)^{2}}{2}\right]^{K}\\
 & =1-\frac{2^{n}-1}{4^{n}-1}\sum_{\vec y\in\mathbb{F}_{2}^{2n}\setminus\left\{ \vec 0_{2n} \right\} }\left[\frac{1+\trace\left(\ket{\psi}\bra{\psi}P_{\vec y}\right)^{2}}{2}\right]^{K}.
\end{align}
Here, in the second line, we used that for a random Clifford $C$, the rotated Pauli $C^{\dagger}P_{\vec y}C$
is just a uniformly random Pauli operator.

To see that this bound is not useful, we can e.g. evaluate it for any stabilizer state which, after some calculation, yields an exponentially small lower bound. This bound would not at all capture the stabilizer value which we know is the correct bound from \Cref{ssec:stabilizer-value} for any stabilizer state. The underlying issue is that the union bound from \Cref{lem:lower-bound-from-union-bound} is far too loose to be useful when applied like this.

Instead, to circumvent this issue, our approach is to again first partition the Clifford group similar to \Cref{eq:Clifford-partition}. Crucially, after conditioning on $C$ belonging to $ \mathrm{Cl}\left(n,k\right)$, the union bound actually gives rise to sharp bounds as we will now show.

\begin{theorem}[Lower bound on $\avgspanP(\ket \psi)$
in terms of $p_{\psi}$]\label{thm:main-result-lower-bound}
Let $n\geq3,K\geq5n$
and let $\ket{\psi}$ be an $n$-qubit pure state. Furthermore, let
$M_{\psi}=\{\vec y\in\mathbb{F}_{2}^{2n}:\trace\left(\psi P_{\vec y}\right)^{2}>\frac{1}{2}\} $
and let $M$ be any Lagrangian subspace of $\mathbb{F}_{2}^{2n}$
such that $M_{\psi}\subseteq M$. Note that such an $M$ always exists.
Then,
\begin{equation}
\avgspanP\left(\psi\right)-\avgspanP\left(\mathrm{Stab}\left(n\right)\right)\geq Q\left(n,1\right)\left[1-p_{\psi}\left(M\right)\right]-2^{-n}.
\end{equation}
\end{theorem}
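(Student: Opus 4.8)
The plan is to split the Clifford average in \Cref{def:average-spanning-probability} according to the intersection dimension $k:=\dim\bigl(C^{\dagger}(\mathcal{Z})\cap M\bigr)$, keep only the $k=0$ and $k=1$ contributions, and arrange that the $k=0$ part cancels the stabilizer value of \Cref{thm:stabilizer-value} up to an additive $O(2^{-n})$ while the $k=1$ part produces the advertised term $Q(n,1)[1-p_{\psi}(M)]$. Since $C^{\dagger}(\mathcal{Z})$ is a uniformly random Lagrangian subspace when $C\sim\Cln$, and $Q(n,k)$ is exactly the probability that a uniformly random Lagrangian meets the fixed Lagrangian $M$ in a $k$-dimensional subspace, partitioning $\Cln=\bigcup_{k=0}^{n}\Clnk$ with $\Clnk:=\{C\in\Cln:\dim(C^{\dagger}(\mathcal{Z})\cap M)=k\}$ gives $|\Clnk|/|\Cln|=Q(n,k)$, so
\begin{equation}
\avgspanP(\ket\psi)=\sum_{k=0}^{n}Q(n,k)\,\Es{C\sim\Clnk}\!\left[\spanP(C\ket\psi)\right]\;\geq\;Q(n,0)\,\Es{C\sim\mathrm{Cl}(n,0)}\!\left[\spanP(C\ket\psi)\right]+Q(n,1)\,\Es{C\sim\mathrm{Cl}(n,1)}\!\left[\spanP(C\ket\psi)\right],
\end{equation}
dropping the $k\geq2$ terms since $\spanP\geq0$; and $\avgspanP(\stabn)=Q(n,0)\prod_{j=0}^{n-1}(1-2^{j-K})\leq Q(n,0)$. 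It thus remains to lower-bound the two expectations.

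For both I would transport the union bound \Cref{lem:lower-bound-from-union-bound} through $C$ via \Cref{fact:Clifford-mapping-dagger}:
\begin{equation}
\spanP(C\ket\psi)\;\geq\;1-\sum_{\vec w\in C^{\dagger}(\mathcal{Z})\setminus\{\vec 0_{2n}\}}\left[\frac{1+\tr(\psi P_{\vec w})^{2}}{2}\right]^{K}.
\end{equation}
If $\dim(C^{\dagger}(\mathcal{Z})\cap M)=0$ then every label $\vec w$ lies outside $M\supseteq M_{\psi}$, so $\tr(\psi P_{\vec w})^{2}\leq\tfrac12$, each of the $2^{n}-1$ summands is at most $(3/4)^{K}$, and by the hypothesis $K\geq5n$ their total lies below $2^{-n}$ (this is precisely why $5n$ suffices: $2^{n}(3/4)^{5n}=\bigl(2(3/4)^{5}\bigr)^{n}<2^{-n}$). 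Hence $\Es{C\sim\mathrm{Cl}(n,0)}[\spanP(C\ket\psi)]\geq1-2^{-n}$, which together with $\avgspanP(\stabn)\leq Q(n,0)$ contributes at least $-Q(n,0)2^{-n}$ to the final difference.

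The crux is the $k=1$ term. When $\dim(C^{\dagger}(\mathcal{Z})\cap M)=1$, write $C^{\dagger}(\mathcal{Z})\cap M=\langle\vec z(C)\rangle$; then every label in the union bound other than $\vec z(C)$ lies outside $M\supseteq M_{\psi}$ and contributes at most $(3/4)^{K}$, so $\spanP(C\ket\psi)\geq1-\bigl(\tfrac{1+\tr(\psi P_{\vec z(C)})^{2}}{2}\bigr)^{K}-(2^{n}-2)(3/4)^{K}$. The key structural input I would prove is that $\vec z(C)$ is \emph{uniformly distributed over $M\setminus\{\vec 0_{2n}\}$} as $C$ ranges uniformly over $\mathrm{Cl}(n,1)$: since $C^{\dagger}(\mathcal{Z})$ is uniform over Lagrangians, acting by an element of the stabilizer of $M$ in $\mathrm{Sp}(2n,\mathbb{F}_{2})$ preserves both the uniform measure and the event $\{\dim(\,\cdot\,\cap M)=1\}$, while that stabilizer surjects by restriction onto $GL(M)$, which is transitive on the nonzero vectors of $M$. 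Averaging then reduces the bad term to $\tfrac{1}{2^{n}-1}\sum_{\vec z\in M\setminus\{\vec 0_{2n}\}}\bigl(\tfrac{1+\tr(\psi P_{\vec z})^{2}}{2}\bigr)^{K}$, which I would control using the elementary bound $\bigl(\tfrac{1+s}{2}\bigr)^{K}\leq s$ for $s\in[\tfrac12,1]$, $K\geq3$ (the map $s\mapsto s-(\tfrac{1+s}{2})^{K}$ is concave on $[\tfrac12,1]$, positive at $s=\tfrac12$, and zero at $s=1$): the terms with $\vec z\in M_{\psi}$ then sum to at most $\sum_{\vec z\in M_{\psi}}\tr(\psi P_{\vec z})^{2}-1=2^{n}p_{\psi}(M_{\psi})-1\leq2^{n}p_{\psi}(M)-1$ (pulling out $\tr(\psi P_{\vec 0_{2n}})^{2}=1$), the remaining terms contribute at most $(3/4)^{K}$ each, and $p_{\psi}(M)\leq1$ turns $\tfrac{2^{n}p_{\psi}(M)-1}{2^{n}-1}$ into $p_{\psi}(M)$; collecting the $(3/4)^{K}$ errors (at most $(2^{n}-1)(3/4)^{K}<2^{-n}$ by $K\geq5n$) yields $\Es{C\sim\mathrm{Cl}(n,1)}[\spanP(C\ket\psi)]\geq1-p_{\psi}(M)-2^{-n}$. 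Assembling the two contributions and using $Q(n,0)+Q(n,1)\leq\sum_{k}Q(n,k)=1$,
\begin{equation}
\avgspanP(\ket\psi)-\avgspanP(\stabn)\;\geq\;-Q(n,0)2^{-n}+Q(n,1)\bigl[1-p_{\psi}(M)\bigr]-Q(n,1)2^{-n}\;\geq\;Q(n,1)\bigl[1-p_{\psi}(M)\bigr]-2^{-n}.
\end{equation}

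I expect the hardest step to be this $k=1$ analysis: both making rigorous the claim that a random Lagrangian meets $M$ in a uniformly random line (the parabolic-subgroup transitivity argument) and locating the sharp pointwise inequality $\bigl(\tfrac{1+s}{2}\bigr)^{K}\leq s$, which is what lets the line-average collapse onto the stabilizer-fidelity proxy $p_{\psi}(M)$ rather than onto the weaker quantity $|M_{\psi}|/2^{n+1}$ that a cruder estimate would produce — the latter loses a factor $2$ and would fail to absorb the stabilizer value.
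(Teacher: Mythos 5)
Your proposal is correct and follows essentially the same route as the paper's proof: partition the Clifford average by the intersection dimension $k=\dim\bigl(C^{\dagger}(\mathcal{Z})\cap M\bigr)$, keep only $k=0,1$, apply the union bound of \Cref{lem:lower-bound-from-union-bound} conditioned on that dimension, and reduce the $k=1$ contribution to $\frac{1}{2^{n}-1}\sum_{\vec z\in M\setminus\{\vec 0_{2n}\}}\bigl[\tfrac{1+\tr(\psi P_{\vec z})^{2}}{2}\bigr]^{K}\leq p_{\psi}(M)+O(2^{-n})$, with $K\geq 5n$ absorbing all $(3/4)^{K}$ errors. The only differences are cosmetic: you use the pointwise bound $\bigl(\tfrac{1+s}{2}\bigr)^{K}\leq s$ on $[\tfrac12,1]$ (splitting $M$ into $M_{\psi}$ and its complement) where the paper uses $\bigl(\tfrac{1+s}{2}\bigr)^{K}\leq s+2^{-K}$ for all $s\in[0,1]$, and you spell out the uniformity of the intersection line over $M\setminus\{\vec 0_{2n}\}$, which the paper invokes implicitly in its averaging step.
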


\begin{proof}
 
Fix $M$ to be any Lagrangian subspace of $\mathbb{F}_{2}^{2n}$
such that $M_{\psi}\subseteq M$. Then, similar to the proof of \Cref{thm:stabilizer-value}, we partition the Clifford group as $\Cln = \bigcup_{k=0}^n \Clnk$ where
    \begin{equation}
        \mathrm{Cl}\left(n,k\right):=\left\{ C\in\mathrm{Cl}\left(n\right):\dim\left(C\left(M\right)\cap\mathcal{Z}\right)=k\right\} \,.
    \end{equation}
Then, we can write $\avgspanP\left(\psi\right)$ as
\begin{align}
\avgspanP\left(\psi\right) & =\frac{1}{\left|\mathrm{Cl}\left(n\right)\right|}\sum_{k=0}^{n}\sum_{C\in\mathrm{Cl}\left(n,k\right)}\spanP\left(C\ket{\psi}\right)\\
 & \geq\frac{1}{\left|\mathrm{Cl}\left(n\right)\right|}\sum_{C\in\mathrm{Cl}\left(n,0\right)}\spanP\left(C\ket{\psi}\right)+\frac{1}{\left|\mathrm{Cl}\left(n,1\right)\right|}\sum_{C\in\mathrm{Cl}\left(n,1\right)}\spanP\left(C\ket{\psi}\right)\,.\label{eq:zero-and-one-contributions}
\end{align}
That is, we restrict ourselves to the contributions coming from $k=0,1$.
Note that $\spanP\left(C\ket{\psi}\right)\geq0$ since it
is a probability, thus this restriction gives rise to a lower bound.
We will tackle the contributions from $\mathrm{Cl}\left(n,0\right)$
and from $\mathrm{Cl}\left(n,1\right)$, separately, by lower bounding
$\spanP\left(C\ket{\psi}\right)$ in each case. To obtain
these lower bounds, we can now employ the union bound from \Cref{lem:lower-bound-from-union-bound}. Crucially, we have already conditioned on $C$ belonging to either
$\mathrm{Cl}\left(n,0\right)$ or $\mathrm{Cl}\left(n,1\right)$.
Under this condition, the union bound actually gives rise to sharp
bounds.

Concretely, we will rewrite the union bound from \Cref{lem:lower-bound-from-union-bound}
as follows,
\begin{align}
\mathbb{P}_{K}\left(C\ket{\psi}\right) & \geq1-\sum_{\vec y\in\mathcal{Z}^{\times}}\,\left[\frac{1+\trace\left(C\ket{\psi}\bra{\psi}C^{\dagger}P_{\vec y}\right)^{2}}{2}\right]^{K}=1-\sum_{\vec y\in C^{\dagger}\left(\mathcal{Z}^{\times}\right)}\left[\frac{1+\trace\left(\psi P_{\vec y}\right)^{2}}{2}\right]^{K}\,,\label{eq:lower-bound-c-dagger-z}
\end{align}
where the last equality follows from  \Cref{fact:Clifford-mapping-dagger}.
We also note that
\begin{equation}
C\left(M\right)\cap\mathcal{Z}=M\cap C^{\dagger}\left(\mathcal{Z}\right)\,.
\end{equation}
We now treat the two sets $\mathrm{Cl}\left(n,0\right)$ and $\mathrm{Cl}\left(n,1\right)$, individually. We begin by lower bounding $\mathrm{Cl}\left(n,0\right)$:
\begin{equation}\label{eq:zero-order-contribution}
\frac{1}{\left|\mathrm{Cl}\left(n\right)\right|}\sum_{C\in\mathrm{Cl}\left(n,0\right)}\mathbb{P}_{K}\left(C\ket{\psi}\right)\geq Q\left(n,0\right)\times\left[1-\left(2^{n}-1\right)\left(\frac{3}{4}\right)^{K}\right].
\end{equation}

This can be obtained as follows: By definition, for all\textbf{ $C\in\mathrm{Cl}\left(n,0\right)$}, we have that
$M\cap C^{\dagger}\left(\mathcal{Z}\right)=\left\{ \vec 0_{2n}\right\} $,
i.e., only the identity $I^{\otimes n}$ lies in the intersection.
Thus, we have that $M\cap C^{\dagger}\left(\mathcal{Z}^{\times}\right)=\emptyset$.
Hence, for all $\vec y\in C^{\dagger}\left(\mathcal{Z}^{\times}\right)$,
we have $\trace\left(\psi P_{\vec y}\right)^{2}\leq1/2$ . Thus, we can
bound $\mathbb{P}_{K}\left(C\ket{\psi}\right)$ via \Cref{eq:lower-bound-c-dagger-z}
as
\begin{equation}
\mathbb{P}_{K}\left(C\ket{\psi}\right)\geq1-\left(2^{n}-1\right)\left[\frac{1+\frac{1}{2}}{2}\right]^{K}=1-\left(2^{n}-1\right)\left(\frac{3}{4}\right)^{K}\,.
\end{equation}
Now, for the sum $\sum_{C\in\mathrm{Cl}\left(n,0\right)}$, we find
\begin{equation*}
\frac{1}{\left|\mathrm{Cl}\left(n\right)\right|}\sum_{C\in\mathrm{Cl}\left(n,0\right)}\mathbb{P}_{K}\left(C\ket{\psi}\right)
\geq   \frac{\left|\mathrm{Cl}\left(n, 0\right)\right|}{\left|\mathrm{Cl}\left(n\right)\right|} 
\cdot \left[1-\left(2^{n}-1\right)\left(\frac{3}{4}\right)^{K}\right]
=Q\left(n,0\right)\left[1-\left(2^{n}-1\right)\left(\frac{3}{4}\right)^{K}\right].
\end{equation*}
Next we prove a lower bound for $\mathrm{Cl}\left(n,1\right)$:
\begin{equation}\label{eq:first-order-contribution}
\frac{1}{\left|\mathrm{Cl}\left(n\right)\right|}\sum_{C\in\mathrm{Cl}\left(n,1\right)}\mathbb{P}_{K}\left(C\ket{\psi}\right)\geq Q\left(n,1\right)\left[1-p_{\psi}\left(M\right)-\left(2^{n}-2\right)\left(\frac{3}{4}\right)^{K}-2^{-K}\right].
\end{equation}
This can be obtained as follows: For any $C\in\mathrm{Cl}\left(n,1\right)$, the intersection $M\cap C^{\dagger}\left(\mathcal{Z}\right)$
is 1-dimensional and hence takes the form $M\cap C^{\dagger}\left(\mathcal{Z}\right)=\left\{ \vec 0^{2n}, \vec z\right\} $
for some $\vec z\in M\subset\mathbb{F}_{2}^{2n}$. Using \Cref{eq:lower-bound-c-dagger-z},
we have
\begin{align}
\mathbb{P}_{K}\left(C\ket{\psi}\right) 
& \geq1-\left[\frac{1+\trace\left(\psi P_{\vec z}\right)^{2}}{2}\right]^{K}-\sum_{\vec y\in C^{\dagger}\left(\mathcal{Z}^{\times}\right)\setminus\left\{ \vec z\right\} }\left[\frac{1+\trace\left(\psi P_{\vec y}\right)^{2}}{2}\right]^{K}\,,\\
 & \geq1-\left[\frac{1+\trace\left(\psi P_{\vec z}\right)^{2}}{2}\right]^{K}-\left(2^{n}-2\right)\left(\frac{3}{4}\right)^{K}\,.
\end{align}
To get the second inequality, we used again that all terms in the
second sum over $C^{\dagger}\left(\mathcal{Z}^{\times}\right)\setminus\left\{ \vec z\right\} $
can be upper bounded by $\left(3/4\right)^{K}$ since $\trace\left(\psi P_{\vec y}\right)^{2}\leq1/2$
for all $\vec y\in C^{\dagger}\left(\mathcal{Z}^{\times}\right)\setminus\left\{ \vec z\right\} $.
Now, when summing over all $C\in\mathrm{Cl}\left(n,1\right)$, we have
\begin{equation*}
\frac{1}{\left|\mathrm{Cl}\left(n\right)\right|}\sum_{C\in\mathrm{Cl}\left(n,1\right)}\mathbb{P}_{K}\left(C\ket{\psi}\right)\geq\frac{\left|\mathrm{Cl}\left(n,1\right)\right|}{\left|\mathrm{Cl}\left(n\right)\right|}\left(1-\left(2^{n}-2\right)\left(\frac{3}{4}\right)^{K}-\frac{1}{2^{n}-1}\sum_{\vec z\in M\setminus\left\{ \vec 0_{2n}\right\} }\left[\frac{1+\trace\left(\psi P_{\vec z}\right)^{2}}{2}\right]^{K}\right)\,.
\end{equation*}
We can lower bound this expression by using that $\left(1+x\right)^{r}\leq1+\left(2^{r}-1\right)x$
for $x\in\left[0,1\right],r>1$, which yields
\begin{equation}
\left[\frac{1+\trace\left(\psi P_{\vec z}\right)^{2}}{2}\right]^{K}\leq\trace\left(\psi P_{\vec z}\right)^{2}+2^{-K},
\end{equation}
and so
\begin{align}
\frac{1}{2^{n}-1}\sum_{\vec z\in M\setminus\left\{ \vec 0_{2n}\right\} }\left[\frac{1+\trace\left(\psi P_{\vec z}\right)^{2}}{2}\right]^{K} & \leq\frac{1}{2^{n}-1}\sum_{\vec z\in M\setminus\left\{ \vec 0_{2n}\right\} }\trace\left(\psi P_{z}\right)^{2}+2^{-K}\\
 & \leq\frac{1}{2^{n}}\sum_{\vec z\in M}\trace\left(\psi P_{\vec z}\right)^{2}+2^{-K}\\
 & =p_{\psi}\left(M\right)+2^{-K}.
\end{align}
So, overall, we find
\begin{equation}
\frac{1}{\left|\mathrm{Cl}\left(n\right)\right|}\sum_{C\in\mathrm{Cl}\left(n,1\right)}\mathbb{P}_{K}\left(C\ket{\psi}\right)\geq Q\left(n,1\right)\left[1-p_{\psi}\left(M\right)-\left(2^{n}-2\right)\left(\frac{3}{4}\right)^{K}-2^{-K}\right].
\end{equation}
We can now combine all these facts and finish the proof.  We recall \Cref{eq:zero-and-one-contributions} and plug in
 \Cref{eq:zero-order-contribution} and \Cref{eq:first-order-contribution}
to obtain
\begin{align}
\overline{\mathbb{P}_{K}}\left(\ket\psi\right) & \geq\frac{1}{\left|\mathrm{Cl}\left(n\right)\right|}\sum_{C\in\mathrm{Cl}\left(n,0\right)}\mathbb{P}_{K}\left(C\ket{\psi}\right)+\frac{1}{\left|\mathrm{Cl}\left(n,1\right)\right|}\sum_{C\in\mathrm{Cl}\left(n,1\right)}\mathbb{P}_{K}\left(C\ket{\psi}\right)\\
 & \geq Q\left(n,0\right)\left[1-\left(2^{n}-1\right)\left(\frac{3}{4}\right)^{K}\right]+Q\left(n,1\right)\left[1-p_{\psi}\left(M\right)-\left(2^{n}-2\right)\left(\frac{3}{4}\right)^{K}-2^{-K}\right]\,.
\end{align}
Now, recall also that the stabilizer value is upper bounded by $\overline{\mathbb{P}_{K}}\left(\mathrm{Stab}\left(n\right)\right)<Q\left(n,0\right)$.
Hence, we obtain
\begin{align}
\overline{\mathbb{P}_{K}}\left(\ket\psi\right)-\overline{\mathbb{P}_{K}}\left(\mathrm{Stab}\left(n\right)\right) & >Q\left(n,1\right)\left[1-p_{\psi}\left(M\right)\right]-\left(\frac{3}{4}\right)^{K}Q\left(n,0\right)\left(2^{n}-1\right)\left[1+\frac{\left(2^{n}-2+2^{-K}\right)}{2^{n}}\right]\notag\\
 & \geq Q\left(n,1\right)\left[1-p_{\psi}\left(M\right)\right]-\left(\frac{3}{4}\right)^{K}2^{n}\cdot2Q\left(n,0\right),
\end{align}
where we used that 
\begin{equation}
Q\left(n,1\right)=Q\left(n,0\right)\frac{2^{n}-1}{2^{n}}.
\end{equation}
For $n\geq3$, $2\,Q\left(n,0\right)<1$ and for any $K=cn$ for some
integer $c$ with $c>\log_{4/3}\left(4\right)\approx4.82$, we have
$\left(\frac{3}{4}\right)^{K}2^{n}\leq2^{-n}$. So, for instance,
we can choose $c=5$. Then, for $n\geq3$, we have
\begin{equation}
\left(\frac{3}{4}\right)^{K}2^{n}\cdot2Q\left(n,0\right)\leq\frac{1}{2^{n}}\,,
\end{equation}
and we obtain our final bound
\begin{equation}
\overline{\mathbb{P}_{K}}\left(\ket\psi\right)-\overline{\mathbb{P}_{K}}\left(\mathrm{Stab}\left(n\right)\right)\geq Q\left(n,1\right)\left[1-p_{\psi}\left(M\right)\right]-2^{-n}\,.
\end{equation}
\end{proof}
From \Cref{fact:lower-bound-stabilizer-fidelity},
we have that for any Lagrangian subspace $M$, the stabilizer fidelity
is lower bounded via $F_{\mathrm{Stab}}\left(\psi\right)\geq p_{\psi}\left(M\right)$.
Hence, we immediately obtain the following corollary:
\begin{corollary}[Difference from stabilizer value]
\label{cor:difference-from-stabilizer-value}
Let $n\geq3,K\geq5n$
and let $\ket{\psi}$ be an $n$-qubit pure state with stabilizer
fidelity $F_{\mathrm{Stab}}\left(\ket\psi\right)$, then
\begin{equation}
\avgspanP\left(\ket\psi\right)-\avgspanP\left(\mathrm{Stab}\left(n\right)\right)\geq Q\left(n,1\right)\left[1-F_{\mathrm{Stab}}\left(\ket\psi\right)\right]-2^{-n}.
\end{equation}
\end{corollary}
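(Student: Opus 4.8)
The plan is to obtain this corollary as an immediate consequence of \Cref{thm:main-result-lower-bound}, by replacing the quantity $p_\psi(M)$ appearing there with the stabilizer fidelity $F_{\mathrm{Stab}}(\ket\psi)$. Recall that \Cref{thm:main-result-lower-bound} asserts that for \emph{any} Lagrangian subspace $M\subseteq\mathbb{F}_2^{2n}$ with $M_\psi\subseteq M$ — and such an $M$ always exists, since $M_\psi$ is isotropic of size at most $2^n$ by \Cref{fact:m-psi-isotropic} and hence can be completed to a Lagrangian subspace — one has
\begin{equation}
\avgspanP\left(\ket\psi\right)-\avgspanP\left(\mathrm{Stab}\left(n\right)\right)\geq Q\left(n,1\right)\left[1-p_{\psi}\left(M\right)\right]-2^{-n}.
\end{equation}
First I would fix any such Lagrangian subspace $M\supseteq M_\psi$.

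Next I would invoke \Cref{fact:lower-bound-stabilizer-fidelity}, which gives $F_{\mathrm{Stab}}(\ket\psi)\geq p_\psi(M)$ for every Lagrangian subspace $M$, in particular for the one just fixed. Subtracting both sides from $1$ yields $1-p_\psi(M)\geq 1-F_{\mathrm{Stab}}(\ket\psi)$, and since $Q(n,1)\geq 0$ this inequality is preserved upon multiplication by $Q(n,1)$. Substituting into the displayed bound then gives
\begin{equation}
\avgspanP\left(\ket\psi\right)-\avgspanP\left(\mathrm{Stab}\left(n\right)\right)\geq Q\left(n,1\right)\left[1-F_{\mathrm{Stab}}\left(\ket\psi\right)\right]-2^{-n},
\end{equation}
which is exactly the claimed inequality; the hypotheses $n\geq 3$ and $K\geq 5n$ carry over verbatim from \Cref{thm:main-result-lower-bound}.

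There is essentially no obstacle in this argument: all the real work is already contained in the proof of \Cref{thm:main-result-lower-bound}, and the corollary merely trades the proof-dependent quantity $p_\psi(M)$ — which depends on an arbitrary Lagrangian completion of $M_\psi$ — for the intrinsic, operationally meaningful stabilizer fidelity. One could even observe that the bound holds with $p_\psi(M)$ replaced by the minimum of $p_\psi(M)$ over all Lagrangian completions $M\supseteq M_\psi$, which would be at least as strong; but the form stated in terms of $F_{\mathrm{Stab}}$ is the one directly usable for stabilizer testing, since $\epsilon$-farness from the stabilizer set is precisely the condition $F_{\mathrm{Stab}}(\ket\psi)\leq 1-\epsilon$, whence the right-hand side becomes $\Omega(\epsilon)$ once the $2^{-n}$ correction is absorbed.
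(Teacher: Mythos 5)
Your proposal is correct and follows exactly the paper's route: the corollary is obtained from \Cref{thm:main-result-lower-bound} by applying \Cref{fact:lower-bound-stabilizer-fidelity} to replace $p_\psi(M)$ with $F_{\mathrm{Stab}}(\ket\psi)$, using $1-p_\psi(M)\geq 1-F_{\mathrm{Stab}}(\ket\psi)$ and the nonnegativity of $Q(n,1)$. Nothing is missing.
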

Recall from \Cref{cor:p-n-k-values} that $Q(n,1)$ is a large constant. Hence, \Cref{cor:difference-from-stabilizer-value} implies that for state $\ket{\psi}$ with stabilizer infidelity of at least $\epsilon$, $\avgspanP\left(\ket\psi\right)$ differs from the stabilizer value by $\Omega(\epsilon)$. This difference will hence be detected upon estimating $\avgspanP\left(\ket\psi\right)$ to precision $\epsilon$. Combining this insight with \Cref{lem:estimation} yields \Cref{thm:main-result-upper-bound}.

\section{Lower bound for single-copy stabilizer testing}\label{sec:lower-bound}

In this section, we will prove the following theorem:
\begin{theorem}[Lower bound for single-copy stabilizer testing]\label{thm:stabilizer-testing-lower-bound}
    Any single-copy algorithm for stabilizer testing to accuracy $0<\epsilon < 1-n^2/2^{n}$ requires at least $t=\Omega(\sqrt{n})$ copies.
\end{theorem}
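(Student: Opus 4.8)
The plan is to argue by contradiction via the Harrow-style PPT relaxation advertised in the introduction. Suppose $\mathcal{A}$ were a single-copy algorithm solving stabilizer testing to accuracy $\epsilon < 1 - n^2/2^n$ using $t$ copies. \textbf{Reduction.} A uniformly random stabilizer state is a stabilizer state, hence accepted by $\mathcal{A}$ with probability $\ge 2/3$; on the other hand a Haar-random pure state $\ket\phi$ satisfies $F_{\mathrm{Stab}}(\ket\phi)\le n^2/2^n$ with probability $1-2^{-\Omega(n)}$ — by a union bound over the $2^{n^2/2+O(n)}$ stabilizer states, using that $|\braket{S|\phi}|^2$ is $\mathrm{Beta}(1,2^n-1)$-distributed — so $\ket\phi$ is $\epsilon$-far and must be rejected with probability $\ge 2/3$. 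Since $\Es{\ket\phi}[\ket\phi\!\bra\phi^{\otimes t}] = \Pi_{\mathrm{sym}}/\binom{2^n+t-1}{t}$ lies within trace distance $O(t^2/2^n)$ of $\mathds{1}^{\otimes t}/2^{nt}$, it would follow that $\mathcal{A}$ distinguishes $\rho_\star := \Es{\ket S\sim\stabn}[\ket S\!\bra S^{\otimes t}]$ from $\mathds{1}^{\otimes t}/2^{nt}$ with advantage $\Omega(1)-O(t^2/2^n)=\Omega(1)$ for $t=O(\sqrt n)$. So it suffices to show no single-copy strategy achieves constant advantage for this distinguishing task unless $t=\Omega(\sqrt n)$.

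\textbf{Symmetrization and the PPT constraint.} Let $M$ be the ``accept'' POVM element on the $t$ copies. Both $\rho_\star$ and $\mathds{1}^{\otimes t}/2^{nt}$ commute with $C^{\otimes t}$ for every $C\in\Cln$, so twirling $M$ by the Clifford group leaves the advantage unchanged and places $M\in\mathrm{Comm}(\Cln,t)$; by Theorem 4.3 of \cite{grossSchurWeylDualityClifford2021a} one may then write $M=\sum_{T\in\Sigma_{t,t}}c_T\,R(T)$ (valid since $t\le n+1$). Twirling by unitaries preserves $0\preceq M^{\Gamma_S}\preceq I$, and since $\mathcal{A}$ is single-copy, $M$ is PPT: $0\preceq M^{\Gamma_S}\preceq I$ for every $S\subseteq[t]$. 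Now by \Cref{fact:stabilizer-powers-sandwich}, $\tr(R(T)\rho_\star)=1$ for every $T$, while by \Cref{fact:traces-of-RT}, $\tr(R(T))/2^{nt}=2^{-nl_T}$, which equals $1$ only for $T=e$. Hence the distinguishing advantage equals $\big|\sum_{T\ne e}c_T(1-2^{-nl_T})\big|\le\sum_{T\ne e}|c_T|$, and the task reduces to showing $\sum_{T\ne e}|c_T|=o(1)$ when $t=o(\sqrt n)$.

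\textbf{Bounding the coefficients.} Set $g_T:=\tr(R(T)^\dagger M)$, so $c=G^{-1}g$ with $G$ the Gram matrix of \Cref{fact:row-sum-bound}. Since partial transpose preserves the trace pairing, $g_T=\tr\!\big((R(T)^{\Gamma_S})^\dagger M^{\Gamma_S}\big)$ for every $S$, so $|g_T|\le\min_{S\subseteq[t]}\|R(T)^{\Gamma_S}\|_1$ using $\|M^{\Gamma_S}\|_\infty\le1$. For $T\in\Sigma_{t,t}\setminus O_t$ this is already $\le 2^{n(t-1)}$ by \Cref{fact:1-norm-bound-homeopathy} (take $S=\emptyset$). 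The key new ingredient I would establish is the analogous bound for the unitary part: for every $O\in O_t\setminus\{e\}$ there is a subset $S$ with $\|R(O)^{\Gamma_S}\|_1\le 2^{n(t-1)}$. This reduces to a statement about the single-qubit factor, since $R(O)^{\Gamma_S}=(r(O)^{\gamma_S})^{\otimes n}$; for permutations it is elementary (a partial transpose turns the corresponding product of swaps into a rank-one maximally-entangled projector on the transposed copies), but the non-permutation elements of $O_t$, which appear once $t\ge 6$, need a dedicated singular-value analysis. Granting this, $|g_T|\le 2^{n(t-1)}$ for all $T\ne e$ and $|g_e|=\tr(M)\le 2^{nt}$. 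Meanwhile $G=2^{nt}I+E$ with $E$ having zero diagonal and off-diagonal entries of magnitude $\le 2^{n(t-1)}$, so $\|E\|_{\mathrm{op}}\le|\Sigma_{t,t}|\,2^{n(t-1)}\le 2^{nt}\eta$ with $\eta:=2^{-n+(t^2+5t)/2}$, which is $\ll1$ exactly when $t^2+5t\ll 2n$, i.e. $t=O(\sqrt n)$; there $G$ is invertible with $\|G^{-1}-2^{-nt}I\|_{\mathrm{op}}\le 2^{-nt}\eta/(1-\eta)$. Feeding the bounds on $g$ into $c=G^{-1}g$ then gives $\sum_{T\ne e}|c_T|\le|\Sigma_{t,t}|\big(2^{-n}+\eta/(1-\eta)\big)=2^{-\Omega(n)}$ provided $t^2+5t\le n-\Theta(1)$, e.g. for $t\le\tfrac12\sqrt n$, contradicting the $\Omega(1)$ advantage and forcing $t=\Omega(\sqrt n)$.

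\textbf{Main obstacle.} Everything except the partial-transpose bound on $\{R(O)\}_{O\in O_t}$ is, modulo bookkeeping, a Clifford adaptation of Harrow's argument plus the diagonally-dominant Gram-matrix inversion that the regime $t=O(\sqrt n)$ makes available. The hard part will be the precise description of the singular-value spectra of the partial transposes $R(O)^{\Gamma_S}$ — in particular showing that every non-identity element of the stochastic orthogonal group $O_t$, including the exotic non-permutation generators present once $t\ge 6$, admits a partial transpose whose trace norm drops by a factor $2^n$ relative to $\|R(O)\|_1=2^{nt}$.
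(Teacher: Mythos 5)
Your overall route is the same as the paper's: reduce to distinguishing $\rho_S = \Es{\ket S}\left[\ket S\bra S^{\otimes t}\right]$ from $\mathds{1}^{\otimes t}/2^{nt}$, restrict without loss of generality to a PPT observable lying in $\mathrm{Comm}(\Cln,t)$, exploit $\tr(R(T)\rho_S)=1$ versus $\tr(R(T))2^{-nt}\le 2^{-n}$ for $T\neq e$, and control the expansion coefficients by combining a bound on $|\tr(R(T)^{\dagger}M)|$ with the approximate orthogonality of the $R(T)$. Your Gram-matrix inversion is a cosmetic variant of the paper's argument, which instead takes the coefficient of maximal magnitude and uses the off-diagonal row-sum bound of \Cref{fact:row-sum-bound}; note that because you do not first split off the trace part of $M$ (as the paper does, writing $M=2^{-nt}\tr(M)I+M'$), the entry $g_e\approx 2^{nt}$ feeds an extra factor of $|\Sigma_{t,t}|$ into your bound $\sum_{T\neq e}|c_T|$ --- harmless for the final $\Omega(\sqrt n)$ scaling, but worth fixing.

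The genuine gap is exactly the step you flag as the ``main obstacle'': the claim that every $O\in O_t\setminus\{e\}$ admits a subset $S\subseteq[t]$ with $\lVert R(O)^{\Gamma_S}\rVert_1\le 2^{n(t-1)}$. This is not bookkeeping; it is the central technical contribution of the paper's lower bound, and your proof is incomplete without it. The paper establishes it in two pieces. First, \Cref{lem:singular-values-partial-transpose} shows that $\bigl(r(O)^{\Gamma_S}\bigr)^{\dagger}r(O)^{\Gamma_S}$ equals $2^{2k}$ times a projector of rank $2^{t-2k}$, where $k=\dim\ker O_{S,S}$, using the fact (\Cref{lem:kernels-of-principal-submatrices}) that $\dim\ker O_{S,S}=\dim\ker O_{\bar S,\bar S}$ for orthogonal $O$; hence $\lVert R(O)^{\Gamma_S}\rVert_1=2^{n(t-k)}$ exactly. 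Second, one must show that every $O\neq e$ has some vanishing principal minor, so that $k\ge 1$ for a suitable choice of $S$; the paper proves this by an inductive argument showing that any binary matrix all of whose principal minors equal $1$ is permutation-conjugate to an upper triangular matrix, and an upper triangular element of $O_t$ is necessarily the identity (\Cref{cor:trace-norm-bound-partial-transpose}). Neither piece is routine, and the second is precisely where the non-permutation elements of $O_t$ appearing for $t\ge 6$ must be handled; your remark that for permutations the partial transpose becomes a maximally entangled rank-one projector does not extend to these elements without the kernel-of-principal-submatrix analysis.
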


This section is organized as follows: Firstly, in \Cref{ssec:reduction-to-maximally-mixed}, we will argue that the sample complexity lower bound stated in \Cref{thm:stabilizer-testing-lower-bound} can be derived from the many-versus-one distinguishing task of distinguishing uniformly random stabilizer states from the maximally mixed state. This task is analogous to the reduction of purity testing to the task of distinguishing Haar random states from the maximally mixed state, which was previously considered in \cite{chenExponentialSeparationsLearning2022a,harrowApproximateOrthogonalityPermutation2023,chenOptimalTradeoffsEstimating2024a}. The main difference is that the unitary group is replaced with the Clifford group. Based on this observation, in \Cref{ssec:bounding-the-bias}, we adapt the proof strategy of \cite{harrowApproximateOrthogonalityPermutation2023} to the Clifford group. In particular, we use the fact that the operators spanning the commutant $\mathrm{Comm}(\Cln, t)$ are approximately orthogonal with respect to the Hilbert-Schmidt inner product in the regime where $t\ll n$ and combine this with new insights about partial transposes of the operators $R(O)$ for $O\in O_t$ to derive lower bounds on distinguishing uniformly random states from the maximally mixed state. 
In \Cref{ssec:partial-transposes}, we collect our results about partial transposes of $R(O)$ for $O\in O_t$. This subsection is modular and might be of independent interest.
Finally, in \Cref{ssec:tree-representation-framework}, we comment on an alternative route to establishing a single-copy sample complexity lower bound via the tree representation framework (c.f. e.g. \cite{chenExponentialSeparationsLearning2022a}).

\subsection{Reduction to distinguishing random stabilizer states from the maximally mixed state}\label{ssec:reduction-to-maximally-mixed}

Our starting point is Le Cam's two-point method, that is, we consider distinguishing tasks between ensembles of states:

\begin{definition}[Distinguishing task from $t$ copies]
Let $\mu$ and $\nu$ be two ensembles of $n$-qubit quantum states. We consider the following two events to
happen with equal probability of 1/2: 
    \begin{itemize}
        \item The unknown state $\rho$ is sampled according to $\mu$.
        \item The unknown state $\rho$ is sampled according to $\nu$.
    \end{itemize}
   Given $t$ copies of $\rho$, the goal is to give a quantum algorithm that correctly distinguishes between these two events with probability $\geq 2/3$.
\end{definition}
 
Throughout, we fix the number of copies to be $t$ so that the algorithm has access to $\Ex[\rho^{\otimes t}]$ where the expectation is taken with respect to one of the ensembles $\mu,\nu$. For the purpose of this work, we consider the following three ensembles of states and corresponding states that the algorithm has access to:
\begin{enumerate}[label=\textbf{(\alph*)}]
    \item[(H)] Haar random $n$-qubit states, $\rho_H := \Es{\ket{\psi} \sim \mu_H}\left[\ket{\psi}\!\bra{\psi}^{\otimes{t}}\right]$
    \item[(S)] uniformly random $n$-qubit stabilizer states, $\rho_S := \Es{\ket{S} \sim \stabn}\left[\ket{S}\!\bra{S}^{\otimes{t}} \right]$
    \item[(I)] the maximally mixed $n$-qubit state,  $\rho_I := \mathds{1}^{\otimes t}/2^{nt}$.
\end{enumerate}
For each pair of ensembles, we can consider an associated distinguishing task. For instance, the pair $(H, S)$ corresponds to distinguishing Haar random states from uniformly random stabilizer states. This task is also the natural starting point for proving a lower bound on the sample complexity of stabilizer testing since it can be reduced to stabilizer testing. This observation is formalized via the following lemma:
\begin{lemma}\label{lem:reduction-to-haar-random-vs-stabilizer}
    Let $0<\epsilon < 1-n^2/2^{n}$. Then, any algorithm for stabilizer testing to accuracy $\epsilon$ can solve the distinguishing task of Haar random states versus uniformly random stabilizer states with probability $1-2^{-O(n^2)}$.
\end{lemma}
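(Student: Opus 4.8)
The plan is to use the stabilizer tester as a black box and simply report its verdict. Given a stabilizer testing algorithm $\mathcal{A}$ of accuracy $\epsilon$, I would build a distinguisher for the $(H,S)$ task as follows: on input $\rho^{\otimes t}$, run $\mathcal{A}$ on the $t$ copies and output ``$S$'' if $\mathcal{A}$ reports that the state is a stabilizer state, and ``$H$'' otherwise. Note that this reduction preserves the measurement model, so in particular it maps single-copy testers to single-copy distinguishers. On the stabilizer side correctness is immediate: if $\rho = \ket{S}\!\bra{S}$ with $\ket{S}\in\stabn$, then $\rho$ is genuinely a stabilizer state, so by the testing promise $\mathcal{A}$ accepts and the distinguisher is correct. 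Hence the whole content of the lemma lies on the Haar side, where it suffices to show that a Haar random $\ket{\psi}$ is $\epsilon$-far from every stabilizer state --- i.e.\ $F_{\mathrm{Stab}}(\ket{\psi}) \leq 1-\epsilon$ --- except with probability $2^{-O(n^2)}$; once this holds, $\mathcal{A}$ is promised to reject on such an input, so the reduction has produced a legitimate ``far'' instance and is correct whenever $\mathcal{A}$ is.

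To prove this concentration statement I would first recall the standard fact that for a fixed pure state $\ket{\phi}$ and a Haar random $\ket{\psi}$ in dimension $d = 2^n$, the overlap $\left|\braket{\phi|\psi}\right|^{2}$ is distributed as $\mathrm{Beta}(1,d-1)$, so that $\Pr_{\ket{\psi}}\!\left[\left|\braket{\phi|\psi}\right|^{2} \geq 1-\epsilon\right] = \epsilon^{\,d-1}$. A union bound over all stabilizer states then gives, using $|\stabn| = 2^{n}\,\mathcal{T}(n) = 2^{n}\prod_{i=1}^{n}(2^{i}+1) \leq 2^{\,n(n+3)/2 + 2}$ (Lagrangian subspaces times sign choices),
\begin{equation}
\Pr_{\ket{\psi}\sim\mu_H}\!\left[F_{\mathrm{Stab}}(\ket{\psi}) > 1-\epsilon\right] \;\leq\; |\stabn|\cdot\epsilon^{\,2^{n}-1}\,.
\end{equation}
It then remains to plug in the hypothesis $\epsilon < 1 - n^2/2^n$: applying $(1-x)^{m}\leq e^{-xm}$ with $x = n^2/2^n$ and $m = 2^n-1$ shows $\epsilon^{\,2^n-1} \leq e^{-n^2(1-2^{-n})} \leq 2^{-\Omega(n^2)}$, which overwhelms the $2^{\Theta(n^2)}$ counting factor and leaves the claimed $2^{-\Omega(n^2)}$.

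The one genuinely nontrivial point --- and what I would flag as the main obstacle --- is this balancing of two competing exponentials: the stabilizer states form an exponentially large net of size $2^{\Theta(n^2)}$, so the per-state overlap tail $\epsilon^{2^n-1}$ has to be pushed below $2^{-\Theta(n^2)}$, and the precise threshold at which this is achievable is $1-\epsilon = \Theta(n^2/2^n)$; an $\epsilon$ closer to $1$ than that would leave a Haar random state non-negligibly likely to be close to some stabilizer state, so the hypothesis of the lemma is essentially tight and one should not expect to weaken it. Everything else is routine --- the reduction itself, the cardinality estimate for $\stabn$, and bookkeeping $\mathcal{A}$'s success probability through the reduction (on the Haar side $\mathcal{A}$'s error only compounds multiplicatively with the negligible ``not far'' event, so the distinguisher still succeeds comfortably above any fixed constant, and the $1-2^{-O(n^2)}$ in the statement is exactly the probability that the reduction hands $\mathcal{A}$ a valid instance of the testing promise problem).
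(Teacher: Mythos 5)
Your proof is correct and follows essentially the same route as the paper: reduce to showing that a Haar random state is $\epsilon$-far from every stabilizer state except with probability $2^{-\Omega(n^2)}$, then union-bound the per-state overlap tail over the $2^{\Theta(n^2)}$ stabilizer states. The only (immaterial) difference is that you use the exact $\mathrm{Beta}(1,2^n-1)$ tail $\epsilon^{2^n-1}$ where the paper invokes the concentration bound $2\exp\left(-2^{n}(1-\epsilon)/2\right)$; both suffice under the hypothesis $1-\epsilon > n^2/2^n$.
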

\Cref{lem:reduction-to-haar-random-vs-stabilizer} is clear except for the subtlety, that the Haar random ensemble and the random stabilizer state ensemble overlap, i.e., when drawing a Haar random state, with a non-zero probability, the sampled state is $\epsilon$-close to a stabilizer state. Thus, to obtain \Cref{lem:reduction-to-haar-random-vs-stabilizer}, we bound this probability as follows:
\begin{fact} Let $0<\epsilon < 1- n^2/2^{n}$. Then,
    \begin{equation}
        \Pr_{\ket{\psi}}\left[\max_{\ket S\in\mathrm{Stab}\left(n\right)}\left|\braket{S|\psi}\right|^{2}\geq1-\epsilon\right]\leq 1-2^{-O(n^2)}\,.
    \end{equation}
\end{fact}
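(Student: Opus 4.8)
The plan is to prove the stronger bound $\Pr_{\ket\psi \sim \mu_H}\!\left[\max_{\ket S \in \stabn}|\braket{S|\psi}|^2 \ge 1-\epsilon\right] \le 2^{-\Omega(n^2)}$ --- which in particular yields the stated inequality, and is what is actually needed to get the $1 - 2^{-O(n^2)}$ success probability in \Cref{lem:reduction-to-haar-random-vs-stabilizer} --- via a union bound over stabilizer states played against the sharp single-overlap tail of the Haar measure. Write $d = 2^n$ and recall $N := |\stabn| = 2^n\prod_{i=1}^n(2^i+1) \le 2^{n(n+3)/2 + 3}$. First I would union-bound over the $N$ stabilizer states, bounding the left-hand side by $N\cdot\max_{\ket S}\Pr_{\ket\psi \sim \mu_H}[\,|\braket{S|\psi}|^2 \ge 1-\epsilon\,]$; by unitary invariance of $\mu_H$ the maximum is attained at any fixed pure state, say $\ket{0}^{\otimes n}$, so all $N$ terms coincide.

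Second, I would invoke the exact distribution of the overlap of a Haar random state with a fixed state: the vector of squared amplitudes of $\ket\psi$ in a fixed orthonormal basis is uniform on the probability simplex in $\mathbb{R}^d$, equivalently $X := |\braket{0^{\otimes n}|\psi}|^2$ follows a $\mathrm{Beta}(1,d-1)$ law with density $(d-1)(1-x)^{d-2}$ on $[0,1]$, so $\Pr[X \ge 1-\epsilon] = \epsilon^{d-1}$. This gives the intermediate bound $N\epsilon^{d-1}$. Now I would plug in the hypothesis $\epsilon < 1 - n^2/2^n = 1 - n^2/d$, obtaining $\epsilon^{d-1} \le (1 - n^2/d)^{d-1} \le e^{-n^2(d-1)/d} \le e^{-n^2/2}$, whence $N\epsilon^{d-1} \le 2^{n^2/2 + O(n)}\cdot 2^{-(\log_2 e)\,n^2/2} = 2^{-(\log_2 e - 1)\,n^2/2 + O(n)} = 2^{-\Omega(n^2)}$, using $\log_2 e > 1$. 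Plugging this into \Cref{lem:reduction-to-haar-random-vs-stabilizer} is then immediate: except with probability $2^{-\Omega(n^2)}$ a Haar random state is genuinely $\epsilon$-far from every stabilizer state, hence a legal ``far'' instance for the tester, while a random stabilizer state is always a legal ``stabilizer'' instance, so the tester distinguishes the two ensembles with the claimed probability.

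The one point that needs care --- and the reason the threshold has to be of the form $1-\epsilon \gtrsim n^2/2^n$ rather than the scale $\sim 1/2^n$ at which a single overlap already becomes atypical --- is the clash of two exponential scales: the union bound costs a factor $N = 2^{\Theta(n^2)}$, so the single-overlap tail $\epsilon^{d-1}\approx e^{-(1-\epsilon)d}$ must beat $2^{-cn^2}$ for a constant $c$ strictly larger than the $\tfrac12$ governing $\log_2 N \approx \tfrac12 n^2$. Since $\epsilon^{d-1} = e^{-(1-\epsilon)d(1+o(1))}$, this forces $(1-\epsilon)\,d = \omega(n^2)$ up to constants, i.e. exactly the stated range of $\epsilon$; everything else is a routine estimate.
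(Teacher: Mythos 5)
Your proof is correct and follows essentially the same route as the paper: a union bound over the $2^{O(n^2)}$ stabilizer states combined with a Haar tail bound for the overlap with a single fixed pure state, with the hypothesis $1-\epsilon > n^2/2^n$ ensuring the single-state tail beats the union-bound cost. The only cosmetic differences are that you compute the exact tail $\epsilon^{2^n-1}$ of the $\mathrm{Beta}(1,2^n-1)$ overlap law where the paper instead cites the concentration bound $2\exp\left(-2^{n-1}(1-\epsilon)\right)$, and that you correctly read the displayed inequality as intending the bound $2^{-\Omega(n^2)}$ on the probability itself (equivalently, $1-2^{-O(n^2)}$ on its complement), which is what the reduction lemma actually uses.
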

\begin{proof}
    Let $\ket{\phi}\in\mathbb{C}^{2^n}$ be a fixed $n$-qubit pure state. Then, the probability of a random state $\ket{\psi}$ being $\epsilon$-close in fidelity to the fixed state $\ket{\phi}$ is bounded as follows, see e.g. \cite[Example~55]{meleIntroductionHaarMeasure2024}: 
\begin{equation}
    \Pr_{\ket{\psi}}\left[\left|\braket{\phi|\psi}\right|^{2}\geq1-\epsilon\right]\leq2\exp\left(-\frac{2^n}{2}\left(1-\epsilon\right)\right)\,.
\end{equation}
The number of stabilizer states is $\left|\mathrm{Stab}\left(n\right)\right|=2^{n}\prod_{i=1}^{n}\left(2^{i}+1\right) \leq 2^{\frac{1}{2}\left(n^{2}+5n\right)}$. The result now follows from the union bound.
\end{proof} 

Next, we argue that, when considering single-copy algorithms, any sample complexity lower bound for testing uniformly random stabilizer states versus the maximally mixed state (the pair $(S, I)$) leads to a lower bound for the pair $(H, S)$. 
This essentially follows from a triangle inequality between the three pairs as we now explain: Consider a single-copy distinguishing algorithm for the pair $(H,S)$ using $t$ copies of the unknown state $\rho$ guessing $0$ or $1$, corresponding to the case $(H)$ and $(S)$, respectively. We can model this distinguishing algorithm via a two-outcome POVM $\{M_0, M_1\}$ acting on $\rho_H$ or $\rho_S$, respectively. Letting $M=M_0-M_1$, the bias achieved by the algorithm is given by $|\tr \left(M \left(\rho_H - \rho_S\right)\right)|$. In order to correctly distinguish between case $(H)$ and $(S)$ with probability $\geq 2/3$, the bias needs to be at least $2/3$. Our goal is to upper bound the bias $ |\tr \left(M \left(\rho_H - \rho_S\right)\right)|$ in terms of the number of copies $t$.

By the triangle inequality, we have
\begin{equation}
\label{eq:abc-triangle-inequa}
    |\tr \left(M \left(\rho_H - \rho_S\right)\right)| \leq | \tr \left(M \left(\rho_H - \rho_I\right)\right)| +  |\tr \left(M \left(\rho_S - \rho_I\right)\right)| \,.
\end{equation}
For the pair $(H, I)$, it is shown in \cite{harrowApproximateOrthogonalityPermutation2023} that any single-copy algorithm using at most $t$ copies of $\rho$ can achieve a bias of at most $| \tr \left(M \left(\rho_H - \rho_I\right)\right)|=O\left(t^2\,2^{-n/2}\right)$. For the regime of $t\leq n$, this contribution to the RHS is hence exponentially small in $n$. As we will show $|\tr \left(M \left(\rho_S - \rho_I\right)\right)|$ will be the dominant contribution to the RHS.  Hence, for our purposes, it suffices to focus on bounding the bias for the pair $(S,I)$, corresponding to distinguishing a uniformly random stabilizer state from the maximally mixed state.

We note that \cite{chenExponentialSeparationsLearning2022a} proved an even stronger lower bound of $t=\Omega(2^{n/2})$ on the number of copies for distinguishing Haar random states from the maximally mixed state (the pair $(H,I)$). This translates to a bound  $| \tr \left(M \left(\rho_H - \rho_I\right)\right)|= O\left(t \,2^{-n/2}\right)$ on the bias.

\subsection{Lower bound via the commutant of the Clifford tensor action}\label{ssec:bounding-the-bias}
In this section, we focus on bounding the bias $|\tr \left(M \left(\rho_S - \rho_I\right)\right)| $ where $\rho_S = \Es{\ket{S} }\left[\ket{S}\bra{S}^{\otimes{t}} \right]$ and $\rho_I = \mathds{1}^{\otimes t}/2^{nt}$. That is we focus on the task of distinguishing uniformly random stabilizer states from the maximally mixed state given $t$ copies of the unknown state. Our proof strategy closely resembles that of Ref.~\cite{harrowApproximateOrthogonalityPermutation2023}. The key observation that we will use is that any measurement POVM $\{M_0,M_1\}$ implementable by an LOCC protocol \cite{chitambarEverythingYouAlways2014} (corresponding to a single-copy measurement strategy), is also \emph{Positive Partial Transpose} (PPT), i.e.
\begin{equation}
     0\preceq M_i^{\Gamma_{S}}\preceq I\quad\forall S\subseteq[t]\,.
\end{equation}
This implies that the difference $M=M_0 - M_1$ satisfies $-I\preceq M^{\Gamma_S}\preceq I$. Here, the superscript $^{\Gamma_S}$ denotes taking a partial transpose with respect to a subset $S\subset [t]$. Furthermore, note that $\rho_S,\rho_I$ both commute with all $C^{\otimes t}$ for all $C\in \Cln$, i.e., $\rho_S,\rho_I \in \mathrm{Comm}(\Cln, t) $. The second key insight is that, hence, WLOG we can restrict ourselves to consider $M$ such that $\left[M,C^{\otimes t}\right]=0$ as well. To see this, consider 
\begin{equation}
    \trace\left(M\,\left(\rho_S - \rho_I\right)\right)
    =\Es{C} \left[\tr\left( MC^{\otimes t}\left(\rho_S - \rho_I\right)C^{\dagger \otimes t}\right)\right]
    = \trace\left(  \Es{C}\left[C^{\dagger \otimes t} MC^{\otimes t}\right]\:\left(\rho_S - \rho_I\right)\right)\,,
\end{equation}
where $\Es{C}\left[C^{\dagger \otimes t}(\cdot)C^{\otimes t}\right]$ projects onto the commutant $\mathrm{Comm}(\Cln, t)$. Hence, we will henceforth assume that $M\in \mathrm{Comm}(\Cln , t)$ and so $M$ can be expanded as $M=\sum_{T\in\Sigma_{t,t}}m_{T}R\left(T\right)$, where $R(T)$ are the operators spanning the commutant of the Clifford group, as discussed in \cref{ssec:commutant}.\\

We will now begin by showing an auxiliary bound on $\left|\trace\left(R\left(T\right) M\right)\right|$ for all $T\neq e$. To do so, we make use of an upper bound on $\| R\left(O\right)^{\Gamma_{S}}\| _{1} \leq 2^{n(t-1)}$ established in \Cref{cor:trace-norm-bound-partial-transpose} in the next subsection, namely \Cref{ssec:partial-transposes}. This is ultimately also how we make use of the PPT constraint on the measurement.
\begin{lemma}[auxiliary result]\label{lem:auxiliary-upper-bound}
  Let $M=\sum_{T\in\Sigma_{t,t}}m_{T} \,R(T)$ be such that it satisfies the  condition
    \begin{equation}\label{eq:ppt-constraint}
         -I\preceq M^{\Gamma_{S}}\preceq I\quad\forall S\subseteq[t]\,.
    \end{equation}
    Then, for all $T\in\Sigma_{t,t}$ such that $T\neq e$,
    \begin{equation}
\left|\trace\left(R\left(T\right) M\right)\right|
\leq2^{n\left(t-1\right)} \,.
    \end{equation}
\end{lemma}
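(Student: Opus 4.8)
The plan is to exploit the PPT condition $-I \preceq M^{\Gamma_S} \preceq I$ by pairing it against the operators $R(T)$ after a suitable partial transpose, converting a statement about $\mathrm{tr}(R(T)M)$ into a statement about $\mathrm{tr}$ of some partial transpose of $R(T)$ times $M^{\Gamma_S}$. Concretely, for any $S \subseteq [t]$, the partial transpose is an involution that respects the trace pairing: $\mathrm{tr}(AB) = \mathrm{tr}(A^{\Gamma_S} B^{\Gamma_S})$. So for every $S$,
\begin{equation}
    \left|\mathrm{tr}\left(R(T)M\right)\right| = \left|\mathrm{tr}\left(R(T)^{\Gamma_S} M^{\Gamma_S}\right)\right| \leq \left\Vert R(T)^{\Gamma_S}\right\Vert_1 \cdot \left\Vert M^{\Gamma_S}\right\Vert_\infty \leq \left\Vert R(T)^{\Gamma_S}\right\Vert_1\,,
\end{equation}
where the last step uses $-I \preceq M^{\Gamma_S} \preceq I$, i.e.\ $\|M^{\Gamma_S}\|_\infty \leq 1$, and the middle step is the Hölder/matrix-trace inequality $|\mathrm{tr}(AB)| \leq \|A\|_1\|B\|_\infty$. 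Since this holds for \emph{every} $S$, we may pick the $S$ that minimizes $\|R(T)^{\Gamma_S}\|_1$, and the lemma reduces to: for each $T \in \Sigma_{t,t}$ with $T \neq e$, there exists $S \subseteq [t]$ with $\|R(T)^{\Gamma_S}\|_1 \leq 2^{n(t-1)}$.

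For the main case $T = O \in O_t \setminus \{e\}$, this is exactly what \Cref{cor:trace-norm-bound-partial-transpose} (from the next subsection) delivers: the bound $\|R(O)^{\Gamma_S}\|_1 \leq 2^{n(t-1)}$ for an appropriate choice of $S$. For the remaining case $T \in \Sigma_{t,t} \setminus O_t$, we don't even need a partial transpose: take $S = \emptyset$, so $R(T)^{\Gamma_\emptyset} = R(T)$, and invoke \Cref{fact:1-norm-bound-homeopathy}, which states precisely $\|R(T)\|_1 \leq 2^{n(t-1)}$ for $T \in \Sigma_{t,t}\setminus O_t$. Combining the two cases, for every $T \neq e$ there is some $S$ with $\|R(T)^{\Gamma_S}\|_1 \leq 2^{n(t-1)}$, and plugging this into the displayed inequality completes the proof.

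The one point requiring care is the identity $\mathrm{tr}(AB) = \mathrm{tr}(A^{\Gamma_S}B^{\Gamma_S})$: the partial transpose is self-adjoint with respect to the Hilbert–Schmidt inner product and is an involution, so $\mathrm{tr}(AB) = \mathrm{tr}((A^{\Gamma_S})^{\Gamma_S} B) = \langle (A^{\Gamma_S})^\dagger, B^{\Gamma_S}\rangle_{\mathrm{HS}}$-type manipulations give the claim; this is standard and only needs to be stated. The genuine content — and hence the ``main obstacle'' — is entirely outsourced to the next subsection: establishing $\|R(O)^{\Gamma_S}\|_1 \leq 2^{n(t-1)}$ for $O \in O_t \setminus \{e\}$, which requires understanding the singular values of the partial transposes $R(O)^{\Gamma_S}$. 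Given that input, the lemma itself is a short two-line argument: the right choice of $S$, Hölder, and the PPT bound $\|M^{\Gamma_S}\|_\infty \leq 1$.
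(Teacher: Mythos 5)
Your proposal is correct and follows essentially the same route as the paper's proof: split into the cases $T\in O_t\setminus\{e\}$ (where the PPT constraint is paired with \Cref{cor:trace-norm-bound-partial-transpose} via $\tr(AB)=\tr(A^{\Gamma_S}B^{\Gamma_S})$ and H\"older) and $T\in\Sigma_{t,t}\setminus O_t$ (where $S=\emptyset$ and \Cref{fact:1-norm-bound-homeopathy} suffice). No gaps; the identity $\tr(AB)=\tr(A^{\Gamma_S}B^{\Gamma_S})$ you flag is indeed standard and is used implicitly in the paper as well.
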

\begin{proof}
We will treat $O_t\setminus\{e \}$ and the remaining part $\Sigma_{t,t}\setminus O_t$, separately. For $O_t\setminus\{e \}$, we will use the PPT constraint from \Cref{eq:ppt-constraint}. Using Hölder's inequality, we find for all $O \in O_t\setminus\{e \}$
    \begin{align}
\left|\trace\left(R\left(O\right)M\right)\right| 
& = \left|\trace\left(R\left(O\right)^{\Gamma_{S}}M^{\Gamma_{S}}\right)\right|\leq\left\Vert R\left(O\right)^{\Gamma_{S}}M^{\Gamma_{S}}\right\Vert _{1} \, ,\\
&\leq\left\Vert R\left(O\right)^{\Gamma_{S}}\right\Vert _{1}\left\Vert M^{\Gamma_{S}} \right\Vert _{\infty} \, ,\\
&\leq\left\Vert R\left(O\right)^{\Gamma_{S}}\right\Vert _{1} \, ,\\
&\leq2^{n\left(t-1\right)} \,.
    \end{align}
    where in the last line we used \Cref{cor:trace-norm-bound-partial-transpose}.

    Similarly, for all $T\in\Sigma_{t,t}\setminus O_t$, we can use $-I \preceq  M \preceq I$ and  \Cref{fact:1-norm-bound-homeopathy} to find
    \begin{equation}
\left|\trace\left(R\left(T\right)M\right)\right|\leq\left\Vert R\left(T\right)\right\Vert _{1}\left\Vert M\right\Vert _{\infty}\leq\left\Vert R\left(T\right)\right\Vert _{1}\leq2^{n\left(t-1\right)} \,.
    \end{equation}
\end{proof}

Next, we show how to establish a bound on the magnitudes $|m_T|$ for all $T\neq e$ when $M$ is \emph{traceless} and PPT:

\begin{lemma}[Upper bound on $|m_T|$]\label{lem:upper-bound-coefficients}
Let $M=\sum_{T\in\Sigma_{t,t}}m_{T} \,R(T)$ be such that $\tr(M)=0$ and it satisfies the condition
    \begin{equation}
         -I\preceq M^{\Gamma_{S}}\preceq I\quad\forall S\subseteq[t]\,.
    \end{equation} 
    Provided that $\tfrac{1}{2}(t^2  + 5t) + 1 \leq n$ we have that  $|m_T|\leq 2^{-n+1}$ for all $T\in\Sigma_{t,t}\setminus \{e\}$.
    
\end{lemma}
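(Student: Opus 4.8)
The plan is to read off the coefficients $(m_T)_{T\in\Sigma_{t,t}}$ as the solution of a linear system governed by the Gram matrix $G=G^{(n,t)}$, and then to invert that system by a Neumann series, using that $G$ is strongly diagonally dominant in the regime $\tfrac12(t^2+5t)+1\le n$. Concretely, pairing $R(T)^{\dagger}$ against $M=\sum_{T'}m_{T'}R(T')$ gives, for every $T\in\Sigma_{t,t}$,
\begin{equation}
\tr\!\left(R(T)^{\dagger}M\right)=\sum_{T'\in\Sigma_{t,t}}m_{T'}\,\tr\!\left(R(T)^{\dagger}R(T')\right)=\sum_{T'\in\Sigma_{t,t}}m_{T'}\,G_{T,T'}\,,
\end{equation}
so that, writing $v=(m_T)_T$ and $w=\big(\tr(R(T)^{\dagger}M)\big)_T$, one has $w=Gv$.

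The first substantive step is to bound $\|w\|_\infty$, and this is where the two hypotheses on $M$ are used. Since $R(e)=I$, tracelessness gives $w_e=\tr(M)=0$. For $T\ne e$, note that $R(T)^{\dagger}=R(T^{*})$ with $T^{*}:=\{(\vec y,\vec x):(\vec x,\vec y)\in T\}$ again a stochastic Lagrangian subspace and $T^{*}\ne e$; hence \Cref{lem:auxiliary-upper-bound}---whose proof is where the PPT condition, via the trace-norm bound on the partial transposes $R(O)^{\Gamma_S}$, enters---applies to $R(T^{*})$ and yields $|w_T|=|\tr(R(T^{*})M)|\le 2^{n(t-1)}$. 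Altogether $\|w\|_\infty\le 2^{n(t-1)}$.

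Next I invert $G$. Split $G=D+E$ with $D=2^{nt}\,I$ its diagonal part (each $G_{T,T}=\tr(R(T)^{\dagger}R(T))=|T|^{n}=2^{nt}$) and $E$ its off-diagonal part, and let $\|\cdot\|_\infty$ denote the operator norm induced by the vector $\ell_\infty$ norm, i.e.\ the maximum absolute row sum. Then \Cref{fact:row-sum-bound} gives
\begin{equation}
\|D^{-1}E\|_\infty=\frac{1}{2^{nt}}\max_{T}\sum_{T'\ne T}G_{T,T'}\le\frac{2^{\,n(t-1)+\frac12(t^2+5t)}}{2^{nt}}=2^{\,-n+\frac12(t^2+5t)}\le\tfrac12\,,
\end{equation}
the last inequality being exactly the hypothesis $\tfrac12(t^2+5t)+1\le n$. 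Therefore $I+D^{-1}E$ is invertible with $\|(I+D^{-1}E)^{-1}\|_\infty\le(1-\tfrac12)^{-1}=2$, so $G=D(I+D^{-1}E)$ is invertible and $v=G^{-1}w=(I+D^{-1}E)^{-1}D^{-1}w$ obeys $\|v\|_\infty\le 2\cdot 2^{-nt}\cdot\|w\|_\infty\le 2\cdot 2^{-nt}\cdot 2^{n(t-1)}=2^{-n+1}$, which gives $|m_T|\le 2^{-n+1}$ for every $T$, in particular for all $T\ne e$.

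I expect the only delicate point to be the $\|w\|_\infty$ bound: one must feed \Cref{lem:auxiliary-upper-bound} the correct operator ($R(T)^{\dagger}=R(T^{*})$, not $R(T)$), and it is precisely tracelessness that kills the entry $w_e$, which would otherwise be as large as $2^{nt}$ and render the estimate vacuous. The diagonal-dominance/Neumann-series part is routine once \Cref{fact:row-sum-bound} and \Cref{lem:auxiliary-upper-bound} are in hand; at most one should double-check that the relevant matrix norm is the max-row-sum norm (so that \Cref{fact:row-sum-bound} directly controls it) and that $G$ is genuinely invertible, which the bound $\|D^{-1}E\|_\infty<1$ itself provides.
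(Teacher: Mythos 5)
Your proof is correct, and it uses exactly the same two ingredients as the paper's proof --- the PPT-based bound $|\tr(R(T)M)|\le 2^{n(t-1)}$ for $T\neq e$ (\Cref{lem:auxiliary-upper-bound}) and the off-diagonal row-sum bound on the Gram matrix (\Cref{fact:row-sum-bound}), with tracelessness used only to control the $T=e$ entry --- but it packages them differently. The paper avoids inverting $G$: it picks $T=\arg\max_{T'}|m_{T'}|$, lower-bounds $|\tr(R(T)^{\dagger}M)|\ge 2^{nt}|m_T|\bigl(1-2^{-nt}\sum_{T'\neq T}G_{T,T'}\bigr)$ by the triangle inequality (using $|m_{T'}|\le|m_T|$ for all $T'$), and then splits into the cases $T=e$ (where tracelessness forces $m_e=0$, hence all coefficients vanish) and $T\neq e$ (where the auxiliary lemma closes the argument). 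Your version solves the full linear system $w=Gv$ at once via the Neumann series for $G=D(I+D^{-1}E)$, which is logically a little cleaner: it does not need the self-referential extremal-coefficient step, it establishes invertibility of $G$ as a byproduct of strict diagonal dominance, and it yields the bound for all $T$ including $T=e$. The constants come out identically, since $\|(I+D^{-1}E)^{-1}\|_\infty\le 2$ plays the role of the paper's factor $\bigl(1-2^{-n+\frac12(t^2+5t)}\bigr)^{-1}\le 2$. You also correctly flag and resolve a subtlety the paper glosses over: \Cref{lem:auxiliary-upper-bound} is stated for $\tr(R(T)M)$ while the proof needs $\tr(R(T)^{\dagger}M)$; your observation that $R(T)^{\dagger}=R(T^{*})$ with $T^{*}\in\Sigma_{t,t}\setminus\{e\}$ (and $T^{*}\in O_t$ iff $T\in O_t$) is the right fix.
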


\begin{proof}
    Let 
    $T= \underset{T'\in\Sigma_{t,t}}{\arg \max} |m_{T'}|$.
Then,
     \begin{align}
\left|\trace\left(R\left(T\right)^{\dagger}M\right)\right|
&=\left|\sum_{T'\in\Sigma_{t,t}}m_{T'}\,\trace\left(R\left(T\right)^{\dagger}R\left(T'\right)\right)\right| \\
&\geq2^{nt}\left|m_{T}\right|-\left|\sum_{T' \in\Sigma_{t,t},T'\neq T}m_{T'}\, \trace\left(R\left(T\right)^{\dagger}R\left(T'\right)\right)\right| \\
&\geq 2^{nt}\left|m_{T}\right|-\sum_{T'\in\Sigma_{t,t}, T'\neq T} 
 \underbrace{|m_{T'}|}_{\leq |m_{T}|} \cdot \underbrace{\left|\trace\left(R\left(T\right)^{\dagger}R\left(T'\right)\right)\right|}_{=G_{T,T'}} \\
&\geq 2^{nt}\left|m_{T}\right|\left(1-2^{-nt}\sum_{T'\in\Sigma_{t,t}, T'\neq T}G_{T,T'}\right)\\
&\geq 2^{nt} \, |m_{T}| \, \left(1 - 2^{-n+\frac{1}{2}(t^2 +5t)} \right) \,.
   \end{align}
Here, in the last step, we plugged in the bound on the row-sum of the Gram matrix from \Cref{fact:row-sum-bound}.
Now consider the case where the ${\arg \max}$ is the identity, i.e., $T=e$. In this case we have from the assumption $\tr(M)=0$ that $0 \geq 2^{nt} \, |m_{e}| \, (1 - 2^{-n+\frac{1}{2}(t^2 +5t)})$. For $\tfrac{1}{2}(t^2 +5t) + 1 \leq n$ this can only be satisfied when $m_e = 0$. Alternatively, when $T\neq e$  we can use the upper bound from \Cref{lem:auxiliary-upper-bound} to find
\begin{equation}
       2^{nt} \, |m_T| \, \left(1 - 2^{-n+\frac{1}{2}(t^2 +5t)} \right) \leq \left|\trace\left(R\left(T\right)^{\dagger}M\right)\right| \leq  2^{n(t-1)}\,. 
\end{equation}
Now using that $\frac{1}{2}(t^2 +5t) + 1 \leq n$ we have
\begin{equation}
    \left|m_{T}\right|\leq 2^{-n} \left(1 - 2^{-n+\frac{1}{2}(t^2 +5t)} \right)^{-1} \leq 2^{-n+1} \,.
\end{equation}
\end{proof}
We are now in the position to prove our main result of this section, the bound on the bias $ |\tr \left(M \left(\rho_S - \rho_I\right)\right)|$.
\begin{theorem}[Bound on bias]
\label{thm:bias-bound-main-theorem}
Let $\{M_0, M_1\}$ be a PPT measurement, let $M=M_0-M_1$ and $\frac{1}{2}(t^2 +5t) + 2 \leq n/2$, and let $\rho_S = \Es{\ket{S} }\left[\ket{S}\bra{S}^{\otimes{t}} \right]$ and $\rho_I=\mathds{1}^{\otimes t}/2^{nt}$, then
    \begin{equation}
         |\tr \left(M \left(\rho_S - \rho_I\right)\right)|  
         \leq  2^{-n/2}\,.
    \end{equation}
\end{theorem}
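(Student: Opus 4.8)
The plan is to do all the bookkeeping inside the commutant. Since $\rho_S$ and $\rho_I$ both commute with every $C^{\otimes t}$, twirling $M$ by $C^{\otimes t}$ (as in \Cref{ssec:bounding-the-bias}) leaves $\tr(M(\rho_S-\rho_I))$ unchanged; it also preserves the PPT property, because each $C^{\otimes t}$ factorizes across the $t$ copies, so the twirl conjugates $M^{\Gamma_S}$ by a (copy-wise complex-conjugated) product unitary and then averages. I would therefore assume $M\in\mathrm{Comm}(\Cln,t)$ and expand $M=\sum_{T\in\Sigma_{t,t}}m_T R(T)$. Because $\tr(\rho_S-\rho_I)=0$, I am free to subtract from $M$ any multiple of the identity $R(e)=\mathds{1}^{\otimes t}$: subtracting $c R(e)$ with $c=\tr(M)/2^{nt}$, which lies in $[-1,1]$ since $-I\preceq M\preceq I$, produces a \emph{traceless} $\tilde M=M-cR(e)$ that has the same coefficients as $M$ on every $T\neq e$ and satisfies $-2I\preceq \tilde M^{\Gamma_S}=M^{\Gamma_S}-cI\preceq 2I$. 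Hence $M'=\tilde M/2$ is traceless and PPT in the exact sense of \Cref{lem:upper-bound-coefficients}; that lemma applies since the hypothesis $\tfrac{1}{2}(t^2+5t)+2\leq n/2$ forces $\tfrac{1}{2}(t^2+5t)+1\leq n$ (and we are in the regime $n\geq t-1$), and it yields $|m_T|=2|m'_T|\leq 2^{-n+2}$ for all $T\in\Sigma_{t,t}\setminus\{e\}$.

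Next I would compute the bias directly. By \Cref{fact:stabilizer-powers-sandwich}, $\tr(R(T)\rho_S)=\Es{\ket S}\big[\bra S^{\otimes t}R(T)\ket S^{\otimes t}\big]=1$ for every $T\in\Sigma_{t,t}$, so $\tr(M\rho_S)=\sum_T m_T$. On the other hand $\rho_I=R(e)/2^{nt}$ and, by \Cref{fact:traces-of-RT}, $\tr(R(T))=2^{n(t-l_T)}$ with $l_e=0$ and $l_T\geq1$ for $T\neq e$, so $\tr(M\rho_I)=\sum_T m_T 2^{-nl_T}$. Subtracting, the $T=e$ terms cancel and
\begin{equation}
\tr\big(M(\rho_S-\rho_I)\big)=\sum_{T\in\Sigma_{t,t}\setminus\{e\}}m_T\big(1-2^{-nl_T}\big)\,.
\end{equation}
Taking absolute values, using $0\leq 1-2^{-nl_T}\leq 1$, the coefficient bound, and $|\Sigma_{t,t}|\leq 2^{\frac{1}{2}(t^2+5t)}$ from \Cref{fact:cardinality-sigma-tt}, gives
\begin{equation}
\big|\tr\big(M(\rho_S-\rho_I)\big)\big|\leq |\Sigma_{t,t}|\cdot 2^{-n+2}\leq 2^{\frac{1}{2}(t^2+5t)-n+2}\,,
\end{equation}
and the hypothesis $\tfrac{1}{2}(t^2+5t)+2\leq n/2$ is exactly what makes this at most $2^{-n/2}$.

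The real technical content — the bound $|m_T|\lesssim 2^{-n}$ for $T\neq e$ — is already packaged in \Cref{lem:upper-bound-coefficients}, where the PPT relaxation and the partial-transpose norm bound $\|R(O)^{\Gamma_S}\|_1\leq 2^{n(t-1)}$ do the work. So the only step here that needs a careful sentence is the reduction to the traceless case: one must check that subtracting the multiple of the identity preserves the PPT constraint up to the harmless factor $2$ and does not change the coefficients $m_T$ for $T\neq e$ (which uses linear independence of the $R(T)$, valid in this parameter regime). Everything else is the short direct computation above.
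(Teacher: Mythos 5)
Your proposal is correct and follows essentially the same route as the paper: twirl $M$ into $\mathrm{Comm}(\Cln,t)$, split off the trace part (the factor of $2$ from $|c|\leq 1$ handled identically), invoke \Cref{lem:upper-bound-coefficients} for $|m_T|\leq 2^{-n+2}$ on $T\neq e$, and combine with $\tr(R(T)\rho_S)=1$, $\tr(R(T)\rho_I)\leq 2^{-n}$ and \Cref{fact:cardinality-sigma-tt}. Your explicit checks that the twirl preserves PPT and that $R(e)^{\Gamma_S}=\mathds{1}$ are slightly more careful than the paper's write-up but do not change the argument.
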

\begin{proof}
Without loss of generality we assume that $M \in \mathrm{Comm}(\Cln, t)$ so that we can write $M=\sum_{T\in\Sigma_{t,t}}m_{T}R\left(T\right)$. We begin by separating out the traceless part of $M$ by writing $M = 2^{-nt}\,\tr(M)I + M'$ with $\tr(M') = 0$. Note that by the definition of $M$ in terms of the POVM elements $M_0,M_1$, we have $|2^{-nt}\, \tr (M)|\leq 1$ and so using the triangle inequality we have that $-I \preceq M'/2 \preceq I$. Using that $\tr(\rho_S-\rho_I)=0$ and the triangle inequality, we have
    \begin{equation}
        |\tr \left(M \left(\rho_S - \rho_I\right)\right)| =|\tr \left(M
        '\left(\rho_S - \rho_I\right)\right)|  \leq \sum_{T\in\Sigma_{t,t}, T\neq e} |m'_{T}|\cdot |\trace\left(R\left(T\right)\left(\rho_S - \rho_I\right)\right)|\,,
    \end{equation}
    where $M' = \sum_{T\in \Sigma_{t,t}}m'_T R(T)$ defines the parameters $m'_T$.
    Plugging in the expressions for $\rho_S, \rho_I$, we find that
    \begin{align}
    \trace\left(R\left(T\right) \rho_S \right) 
        &=  \Es{\ket{S} \sim \stabn}\left[ \trace\left(R\left(T\right)\ket{S}\bra{S}^{\otimes{t}} \right) \right] =1 \, ,\\
        \trace\left(R\left(T\right) \rho_I \right) 
        &=\frac{\trace\left(R\left(T\right)\right)}{2^{nt}} \leq 2^{-n} \qquad \forall T\in\Sigma_{t,t}, T\neq e \, .
    \end{align}
        In the first line, we used that $\bra{S}^{\otimes t} R(T)\ket{S}^{\otimes t} = 1$ for all $T\in\Sigma_{t,t}$ (see \Cref{fact:stabilizer-powers-sandwich}). In the second line, we used that $\tr\left(R\left(T\right)\right)=2^{n\left(t-l\right)}$ with $l=0$ only for the identity element $T=e$ (see \Cref{fact:traces-of-RT}).
   It follows that 
   \begin{equation}
       |\tr \left(M \left(\rho_S - \rho_I\right)\right)| \leq \sum_{T\in\Sigma_{t,t}, T\neq e} |m'_{T}| \cdot |1-2^{-n}| \leq \sum_{T\in\Sigma_{t,t}, T\neq e} |m'_{T}|\,.
   \end{equation}
Now we can apply \Cref{lem:upper-bound-coefficients} to $M'/2$. This gives
    \begin{equation}
        |\tr \left(M \left(\rho_S - \rho_I\right)\right)|  \leq \sum_{T\in\Sigma_{t,t}, T\neq e} |m'_{T}| \leq  2^{-n+2}\cdot |\Sigma_{t,t}| \leq 2^{-n+2 + \frac{1}{2}( t^2+5t)}\leq 2^{-n/2},
    \end{equation}
    by the assumption $\tfrac{1}{2}(t^2 +5t)+2  \leq n/2$. This finishes the proof. 
\end{proof}

As we explained in \Cref{ssec:reduction-to-maximally-mixed}, this bound on the bias immediately implies the following corollary:
\begin{corollary}[Single-copy lower bound for random stabilizer vs. maximally mixed]
    Any single-copy algorithm for distinguishing the maximally mixed state $\mathds{1}/2^n$ from  random $n$-qubit stabilizer states, with probability at least 2/3, requires at least $t=\Omega(\sqrt{n})$ many copies of the unknown state.
\end{corollary}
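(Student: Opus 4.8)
The plan is to derive this corollary immediately from the bias bound of \Cref{thm:bias-bound-main-theorem} combined with the two-point (Le Cam) setup already established in \Cref{ssec:reduction-to-maximally-mixed}. First I would recall that a single-copy measurement strategy acting on $t$ copies of the unknown state is, by definition, an LOCC protocol across the $t$ registers, and that every POVM implementable by LOCC is Positive Partial Transpose; hence the two-outcome POVM $\{M_0,M_1\}$ describing the algorithm's guess satisfies $0\preceq M_i^{\Gamma_S}\preceq I$ for all $S\subseteq[t]$, so that $M:=M_0-M_1$ obeys $-I\preceq M^{\Gamma_S}\preceq I$ for all $S$. In other words, $\{M_0,M_1\}$ is exactly a PPT measurement in the sense required by \Cref{thm:bias-bound-main-theorem}.

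Next I would use the elementary fact that an algorithm distinguishing the two equiprobable ensembles $\rho_S$ and $\rho_I$ with success probability at least $2/3$ must achieve bias $|\tr(M(\rho_S-\rho_I))|\geq 2\cdot\tfrac23-1=\tfrac13$. On the other hand, \Cref{thm:bias-bound-main-theorem} guarantees that whenever $\tfrac12(t^2+5t)+2\leq n/2$ the bias is at most $2^{-n/2}$. Since $2^{-n/2}<\tfrac13$ for all $n$ larger than a small absolute constant, the hypothesis $\tfrac12(t^2+5t)+2\leq n/2$ must therefore fail for any successful algorithm; equivalently $t^2+5t+4>n$, which rearranges to $t>\tfrac12\bigl(\sqrt{4n+9}-5\bigr)=\Omega(\sqrt n)$. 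For the finitely many small values of $n$ below that constant the claim is vacuous, so this case is absorbed into the $\Omega(\cdot)$.

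Since the entire technical weight sits in \Cref{thm:bias-bound-main-theorem} (which rests in turn on \Cref{lem:upper-bound-coefficients}, the Gram-matrix row-sum estimate of \Cref{fact:row-sum-bound}, and the partial-transpose trace-norm bound from \Cref{ssec:partial-transposes}), there is essentially no further obstacle in the corollary itself. The only points to be careful about are correctly invoking the single-copy $\Rightarrow$ LOCC $\Rightarrow$ PPT chain so that the PPT hypothesis of \Cref{thm:bias-bound-main-theorem} is legitimately met, and converting the inequality $\tfrac12(t^2+5t)+2>n/2$ into the clean asymptotic statement $t=\Omega(\sqrt n)$.
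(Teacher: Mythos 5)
Your proposal is correct and follows essentially the same route as the paper, which likewise derives the corollary immediately from \Cref{thm:bias-bound-main-theorem} together with the single-copy $\Rightarrow$ LOCC $\Rightarrow$ PPT observation and the Le Cam setup of \Cref{ssec:reduction-to-maximally-mixed}. The only minor slip is the constant in the bias threshold: with $M=M_0-M_1$ the success probability is $\tfrac12+\tfrac14\tr\left(M(\rho_S-\rho_I)\right)$, so success $\geq 2/3$ forces $|\tr(M(\rho_S-\rho_I))|\geq 2/3$ rather than $1/3$, but this does not affect the conclusion since $2^{-n/2}$ is below either constant for all sufficiently large $n$.
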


Furthermore, by virtue of \Cref{lem:reduction-to-haar-random-vs-stabilizer} and the triangle inequality in \Cref{eq:abc-triangle-inequa} between the three ensembles $(H,S,I)$, \Cref{thm:bias-bound-main-theorem} also implies our main result, \Cref{thm:stabilizer-testing-lower-bound}.

\subsection{Partial transposes of \texorpdfstring{$R(O)$}{R(O)}}\label{ssec:partial-transposes}
In this section, we study partial transposes of $R(O)$ for $O\in O_t$. Let $S\subset [t]$, then $S$ and its complement $\bar{S}$ in $[t]$ form a partition of $[t]$. The operators $R(O)$ act on $t$ subsystems. We consider taking partial transposes with respect to a subset $S\subset [t]$ of subsystems and will denote this operation by $^{\Gamma_S}$. Concretely, for $S\subset [t]$, we denote by $R(O)^{\Gamma_S}$ its partial transpose with the respect to the subsystems indexed by $S$.

We start our exposition with a simple relationship of kernels of principal submatrices of binary orthogonal matrices. Recall that for $S,S'\subseteq [t]$, $O_{S, S'}$ denotes the matrix obtained by keeping only the rows with indices in $S$ and columns with indices in $S'$. In particular, $O_{S,S}$ denotes a $|S| \times |S|$- principal submatrix of $O$. 
\begin{lemma}[Kernels of principal submatrices of $O\in O_t$]\label{lem:kernels-of-principal-submatrices}
    For any $O\in O_{t}$ and any $S\subset \left[t\right]$, $\dim\left(\ker\,O_{S,S}\right)=\dim\left(\ker\,O_{\bar{S},\bar{S}}\right)$.
\end{lemma}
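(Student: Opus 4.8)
The plan is to block-decompose $O$ with respect to the partition $[t] = S \sqcup \bar{S}$ and re-express both nullities as dimensions of intersections of coordinate subspaces, then play them off against each other using the orthogonality of $O$. Write $A := O_{S,S}$, $B := O_{S,\bar{S}}$, $C := O_{\bar{S},S}$, $D := O_{\bar{S},\bar{S}}$. First I would record that every $O \in O_t$ is invertible with $O^{-1} = O^{T}$ over $\F$: evaluating the defining relation on $\mathbf{e}_i$ and on $\mathbf{e}_i + \mathbf{e}_j$ forces $(O^{T}O)_{ij} = \delta_{ij}$, so $O^{T}O = O O^{T} = I$. Let $W_S := \{\mathbf{w} \in \Ft : \mathrm{supp}(\mathbf{w}) \subseteq S\}$ and define $W_{\bar{S}}$ analogously, so $W_S^{\perp} = W_{\bar{S}}$, $\dim W_S = |S|$ and $\dim W_{\bar{S}} = t-|S|$. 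For $\mathbf{v} \in \F^{|S|}$, zero-padding $\mathbf{v}$ to a vector of $W_S$ gives an element whose image under $O$ has $S$-block equal to $A\mathbf{v}$; hence this padding maps $\ker A$ bijectively onto $W_S \cap O^{-1}(W_{\bar{S}})$, so $\dim\ker O_{S,S} = \dim\bigl(W_S \cap O^{-1}W_{\bar{S}}\bigr)$. Symmetrically, $\dim\ker O_{\bar{S},\bar{S}} = \dim\bigl(W_{\bar{S}} \cap O^{-1}W_S\bigr)$.

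It then remains a dimension count. Using $O^{-1} = O^{T}$ and $\dim(X\cap Y) = \dim X + \dim Y - \dim(X+Y)$, I get $\dim\ker O_{S,S} = \dim(W_S \cap O^{T}W_{\bar{S}}) = |S| + (t-|S|) - \dim(W_S + O^{T}W_{\bar{S}}) = t - \dim(W_S + O^{T}W_{\bar{S}})$. Since $O^{T}$ is again orthogonal it commutes with taking orthogonal complements, so $(W_S + O^{T}W_{\bar{S}})^{\perp} = W_S^{\perp} \cap O^{T}(W_{\bar{S}}^{\perp}) = W_{\bar{S}} \cap O^{T}W_S$, whence $\dim(W_S + O^{T}W_{\bar{S}}) = t - \dim(W_{\bar{S}} \cap O^{T}W_S)$. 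Combining, $\dim\ker O_{S,S} = \dim(W_{\bar{S}} \cap O^{T}W_S) = \dim(W_{\bar{S}} \cap O^{-1}W_S) = \dim\ker O_{\bar{S},\bar{S}}$, which is the claim.

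I do not expect a genuine obstacle here; the whole argument is elementary $\F$-linear algebra. The only points that need care are cleanly invoking invertibility and $O^{-1} = O^{T}$ from the definition of the stochastic orthogonal group, and keeping straight which transpose/complement appears in each step. An alternative route is the \emph{nullity theorem} for inverses, $\mathrm{nullity}(O_{S,S}) = \mathrm{nullity}\bigl((O^{-1})_{\bar{S},\bar{S}}\bigr)$ for any invertible $O$; together with $O^{-1} = O^{T}$ and transpose-invariance of nullity this gives the statement at once, but proving that theorem amounts to essentially the same computation, so I would favour the self-contained subspace-intersection argument above.
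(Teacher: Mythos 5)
Your proof is correct, but it takes a genuinely different route from the paper's. The paper works directly with the block form of $O^{T}O = I_t$: from the off-diagonal block equation $O_{S,\bar{S}}^{T}O_{S,S} = O_{\bar{S},\bar{S}}^{T}O_{\bar{S},S}$ and the diagonal one $O_{S,S}^{T}O_{S,S} + O_{\bar{S},S}^{T}O_{\bar{S},S} = I_{S,S}$, it shows that $\vec v \mapsto O_{\bar{S},S}\vec v$ is an injective linear map from $\ker O_{S,S}$ into $\ker O_{\bar{S},\bar{S}}^{T}$, and then gets the reverse inequality by swapping $S$ and $\bar{S}$. You instead identify $\ker O_{S,S}$ with $W_S \cap O^{-1}W_{\bar{S}}$ via zero-padding and then use the duality $(W_S + O^{T}W_{\bar{S}})^{\perp} = W_{\bar{S}} \cap O^{T}W_S$ together with $\dim V + \dim V^{\perp} = t$ (which holds for the standard form on $\Ft$ even though it is degenerate on subspaces); this is essentially the nullity theorem specialized to $O^{-1}=O^{T}$, and it is arguably more conceptual, making it transparent that only invertibility plus $O^{-1}=O^{T}$ is used. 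One caveat on your preliminary step: polarizing the defining relation of $O_t$ on $\mathbf{e}_i$ and $\mathbf{e}_i+\mathbf{e}_j$ to conclude $(O^{T}O)_{ij}=\delta_{ij}$ only works if that relation is read mod $4$ with integer arithmetic on $0/1$ vectors (as in the condition defining $\Sigma_{t,t}$, which is the intended reading); taken naively over $\F$ the cross term $2\,c_i\cdot c_j$ vanishes identically and the argument gives nothing. The paper sidesteps this by simply asserting $O^{T}O=I_t$ as known.
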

\begin{proof}
    We will only use the orthogonality of $O$, namely that $O^T O = I_t$. In block form, this reads
    \begin{equation}
    \begin{pmatrix}O_{S,S}^{T} & O_{\bar{S},S}^{T}\\
O_{S,\bar{S}}^{T} & O_{\bar{S},\bar{S}}^{T}
\end{pmatrix}\begin{pmatrix}O_{S,S} & O_{S,\bar{S}}\\
O_{\bar{S},S} & O_{\bar{S},\bar{S}}
\end{pmatrix}=\begin{pmatrix}I_{S,S} & 0\\
0 & I_{\bar{S},\bar{S}}.
\end{pmatrix}
    \end{equation}
    In total, this block matrix equation gives rise to four equations. The first pair of equations, we will examine is the following:
    \begin{align}
        O_{S,S}^{T}O_{S,S}+O_{\bar{S},S}^{T}O_{\bar{S},S}
        &=I_{S,S}, \\
        O_{S,\bar{S}}^{T}O_{S,S}
        &=O_{\bar{S},\bar{S}}^{T}O_{\bar{S},S} .
    \end{align}
    Suppose there exists a vector $\vec v\neq \vec 0$ such that $\vec v\in \ker O_{S,S} $, then by the second equation, $O_{\bar{S},\bar{S}}^{T}O_{\bar{S},S}\vec v = \vec 0$ and so $O_{\bar{S},S} \vec v\in \ker O_{\bar{S},\bar{S}}^{T}$. Then, the first equation implies that $O_{\bar{S},S}^{T}O_{\bar{S},S} \vec v=\vec v \neq \vec 0$, and so $O_{\bar{S},S}\vec v \neq \vec 0$. Hence,  $|\ker O_{\bar{S},\bar{S}}^{T}|=|\ker O_{\bar{S},\bar{S}}|\geq|\ker O_{S,S}|$.
    The remaining two equations are actually the same equations just with $S$ and $\bar{S}$ exchanged. By the same logic, they imply $|\ker O_{\bar{S},\bar{S}}| \leq |\ker O_{S,S}|$. Hence, we conclude that the cardinalities $|\ker O_{\bar{S},\bar{S}}|$ and $|\ker O_{S,S}|$ must actually be equal. Finally, $\ker O_{S,S}$ is a subspace of $\Ft$ so that $|\ker O_{S,S}| = 2^{\dim \left( \ker O_{S,S} \right)}$.
    
\end{proof}

The next lemma is a generalization of \cite[Lemma 4]{harrowApproximateOrthogonalityPermutation2023} from the permutation operators to the operators $R(O)$ (which include the permutation operators since $\mathcal{S}_t \subseteq O_t$).

\begin{lemma}[Singular values of partial transposes of $R(O)$]\label{lem:singular-values-partial-transpose}
    Choose a set $S\subset [t]$. For any $O\in O_{t}$, let $k=\dim\left(\ker\,O_{S,S}\right)$. Then $R\left(O\right)^{\Gamma_{S}}$ has $2^{n(t-2k)}$ non-zero singular values each equal to $2^{kn}$. Consequently we have
    \begin{equation}
        \left\Vert R\left(O\right)^{\Gamma_{S}}\right\Vert _{1}=2^{n(t-k)}\,.
    \end{equation}
\end{lemma}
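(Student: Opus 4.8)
The plan is to work directly with the explicit "matrix-unit" form of $r(O)$. Recall $r(O) = \sum_{\vec x \in \Ft} \ket{O\vec x}\bra{\vec x}$ as an operator on $(\mathbb{C}^2)^{\otimes t}$ (viewing $\mathbb{F}_2^t$ as the computational basis of this space), and $R(O) = r(O)^{\otimes n}$. Because taking the partial transpose $^{\Gamma_S}$ with respect to the copies indexed by $S$ commutes with the tensor-power structure over the $n$ qubits, it suffices to analyze $r(O)^{\Gamma_S}$ and then tensor: $R(O)^{\Gamma_S} = (r(O)^{\Gamma_S})^{\otimes n}$, and the singular values multiply. Concretely, writing $\vec x = (\vec x_S, \vec x_{\bar S})$ according to the partition $[t] = S \sqcup \bar S$, transposing the $S$-subsystems swaps bra and ket indices there, so
\begin{equation}
r(O)^{\Gamma_S} = \sum_{\vec x \in \Ft} \ket{\vec x_S,\, (O\vec x)_{\bar S}}\bra{(O\vec x)_S,\, \vec x_{\bar S}}.
\end{equation}

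Next I would compute $r(O)^{\Gamma_S}(r(O)^{\Gamma_S})^{\dagger}$ explicitly as a sum over pairs $\vec x, \vec x'$ with matching bra index, i.e. $(O\vec x)_S = (O\vec x')_S$ and $\vec x_{\bar S} = \vec x'_{\bar S}$. The condition $\vec x_{\bar S} = \vec x'_{\bar S}$ together with $(O(\vec x - \vec x'))_S = 0$ means $\vec w := \vec x - \vec x'$ lies in the subspace $\{\vec w : \vec w_{\bar S} = 0,\ (O\vec w)_S = 0\}$. Identifying $\{\vec w : \vec w_{\bar S} = 0\}$ with $\mathbb{F}_2^S$, this is exactly $\ker O_{S,S}$ (the map $\vec w_S \mapsto (O\vec w)_S$ on such vectors is the principal submatrix $O_{S,S}$). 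So the number of $\vec x'$ sharing a bra index with a given $\vec x$ is $|\ker O_{S,S}| = 2^k$. A parallel computation shows the resulting operator is $2^k$ times a projection onto a subspace of dimension equal to the number of distinct ket-index pairs $(\vec x_S, (O\vec x)_{\bar S})$, which by the analogous counting equals $2^{t-k'}$ where $k' = \dim\ker O_{\bar S,\bar S}$ — and here \Cref{lem:kernels-of-principal-submatrices} supplies the crucial fact $k = k'$. Hence $r(O)^{\Gamma_S}(r(O)^{\Gamma_S})^{\dagger} = 2^k \cdot \Pi$ with $\mathrm{rank}(\Pi) = 2^{t-k}$, so $r(O)^{\Gamma_S}$ has $2^{t-k}$ singular values each equal to $\sqrt{2^k \cdot (2^k)}$... wait — more carefully, one checks the squared singular values are $2^{2k}$ with multiplicity, forcing $2^{t-2k}$ of them once one also tracks that each nonzero eigenvalue of $r(O)^{\Gamma_S}(r(O)^{\Gamma_S})^\dagger$ equals $2^{2k}$ and the rank of this Gram operator is $2^{t-2k}$. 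Tensoring over $n$ qubits multiplies both the number ($2^{n(t-2k)}$) and the value ($2^{kn}$).

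The main obstacle I anticipate is getting the bookkeeping of the two subspaces right: distinguishing the "bra-collision" subspace $\ker O_{S,S} \cong \{\vec w_{\bar S}=0, (O\vec w)_S = 0\}$ from the "ket-collision" subspace, showing the Gram operator $r(O)^{\Gamma_S}(r(O)^{\Gamma_S})^\dagger$ is genuinely a scalar multiple of a projection (this requires checking the off-diagonal structure vanishes appropriately, which uses that $O$, being invertible, makes the ket-indices range over a flat), and then correctly pinning the rank to $2^{t-2k}$ rather than $2^{t-k}$. This is exactly where \Cref{lem:kernels-of-principal-submatrices} is needed, so that the $S$-side and $\bar S$-side kernel dimensions agree and the counting closes up consistently. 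Once the singular value profile is established, $\|R(O)^{\Gamma_S}\|_1 = 2^{n(t-2k)} \cdot 2^{kn} = 2^{n(t-k)}$ is immediate.
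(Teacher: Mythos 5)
Your overall route is the paper's: reduce to $n=1$ via $R(O)^{\Gamma_S}=(r(O)^{\Gamma_S})^{\otimes n}$, write out the partial transpose in the computational basis, analyze the Gram operator, invoke \Cref{lem:kernels-of-principal-submatrices}, and count nonzero singular values via the trace. But there is a genuine gap at the central step, and your intermediate claim is in fact false. The operator $X:=r(O)^{\Gamma_S}\bigl(r(O)^{\Gamma_S}\bigr)^{\dagger}$ is \emph{not} $2^k$ times a projection of rank $2^{t-k}$: if it were, its nonzero eigenvalues would be $2^k$ and the singular values of $r(O)^{\Gamma_S}$ would be $2^{k/2}$, contradicting the lemma (and your own final answer). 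The off-diagonal terms indexed by $\vec w_S\in\ker O_{S,S}$ do not vanish; they survive and couple distinct ket labels. Concretely, each ket label $(\vec x_S,(O\vec x)_{\bar S})$ occurs with multiplicity $|\ker O_{\bar S,\bar S}|=2^k$, and the set of $2^{t-k}$ distinct labels is partitioned into cosets of the $k$-dimensional subspace $W=\{(\vec w_S,\,O_{\bar S,S}\vec w_S):\vec w_S\in\ker O_{S,S}\}$; on each coset $X$ acts as $2^k$ times the $2^k\times 2^k$ all-ones matrix. That block has a single nonzero eigenvalue $2^k$, so $X$ has eigenvalue $2^{2k}$ with multiplicity $2^{t-k}/2^k=2^{t-2k}$. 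Your "wait" correction asserts exactly this conclusion but supplies no derivation, and the derivation is the whole content of the step.

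The clean way to close the gap (and the paper's way) is to avoid diagonalizing $X$ by hand: show instead that $X^2=\left|\ker O_{S,S}\right|\cdot\left|\ker O_{\bar S,\bar S}\right|\cdot X=2^{2k}X$, which is a direct computation on the double sum, and \emph{this} is where \Cref{lem:kernels-of-principal-submatrices} enters (to convert $|\ker O_{S,S}|\,|\ker O_{\bar S,\bar S}|$ into $2^{2k}$), not in counting distinct kets as you propose. Then $X$ is $2^{2k}$ times a projector, $\tr X=2^t$ forces the rank to be $2^{t-2k}$, and tensoring over the $n$ qubits gives the stated singular value profile and $\|R(O)^{\Gamma_S}\|_1=2^{n(t-2k)}\cdot 2^{kn}=2^{n(t-k)}$.
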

\begin{proof}
    Recall that $R(O) = r(O)^{\otimes n}$ and so it suffices to prove the claim for $n=1$. Let $O\in O_{t}$, $S\subset [t]$ and consider the partial transpose $r(O)^{\Gamma_S}$. For simplicity of notation, in the following, we will arrange the subsystems so that those with indices in $S$ appear first, followed by those with indices in $\bar{S}$.  Since $r(O)=\sum_{\vec x}\ket{O \vec x}\bra{\vec x}$, its partial transpose $r(O)^{\Gamma_S}$ is given by
    \begin{equation}
        r(O)^{\Gamma_S}  = \sum_{\vec x \in \Ft}\ket{\vec x_{S}}\bra{\left(O \vec x\right)_{S}}\otimes\ket{\left(O \vec x\right)_{\bar{S}}}\bra{\vec x_{\bar{S}}}\,.
    \end{equation}
    Here, $\vec x_{S}$ denotes the vector of length $|S|$ obtained from $\vec x$ by keeping only the elements whose corresponding indices are in $S$.
    Since we are interested in the singular values of $r(O)^{\Gamma_S} $, we examine $X:=\left(r\left(O\right)^{\Gamma_{S}}\right)^{\dagger}r\left(O\right)^{\Gamma_{S}}$ which is given by 
\begin{equation}
X:=\left(r\left(O\right)^{\Gamma_{S}}\right)^{\dagger}r\left(O\right)^{\Gamma_{S}}
= \sum_{\vec x, \vec y}\ket{\vec x_{S}}\bra{\vec y_{S}}\braket{\left(O \vec x\right)_{S} | \left(O \vec y\right)_{S}}\otimes\ket{\left(O \vec x\right)_{\bar{S}}}\bra{\left(O \vec y\right)_{\bar{S}}}\braket{\vec x_{\bar{S}}| \vec y_{\bar{S}}}.
\end{equation}
We have
\begin{align}
    \left(O \vec x\right)_{S}
    &=O_{S,S}\vec x_{S}+O_{S,\bar{S}} \vec x_{\bar{S}}, \\
    \left(O \vec x\right)_{\bar{S}}
    &=O_{\bar{S},S}\vec x_{S}+O_{\bar{S},\bar{S}} \vec x_{\bar{S}},
\end{align}
and similarly for $\vec y$. Hence, the overlap $\braket{\left(O \vec x\right)_{S} | \left(O\vec y\right)_{S}}$ is non-zero only if
\begin{equation}
     O_{S,S}\vec x_{S}+O_{S,\bar{S}} \vec x_{\bar{S}} = O_{S,S}\vec y_{S}+O_{S,\bar{S}} \vec y_{\bar{S}}\,.
\end{equation}
Taking into account that $\braket{\vec x_{\bar{S}}| \vec y_{\bar{S}}}$ enforces $\vec x_{\bar{S}} = \vec y_{\bar{S}}$, we hence find that $\braket{\left(O \vec x\right)_{S} | \left(O\vec y\right)_{S}}$ is non-zero only if
$\vec x_{S}+\vec y_{S}\in \ker O_{S,S}$.
Hence, 
\begin{equation}
X =
\sum_{\vec x_{S}}\sum_{\vec z_{S}\in \ker O_{S,S}} \ket{\vec x_{S}}\bra{\vec x_{S}+\vec z_{S}}  \otimes\sum_{\vec x_{\bar{S}}}\ket{O_{\bar{S},S}\vec x_{S}+O_{\bar{S},\bar{S}}\vec x_{\bar{S}}}\bra{O_{\bar{S},S}\left(\vec x_{S}+\vec z_{S}\right)+O_{\bar{S},\bar{S}}\vec x_{\bar{S}}}.
\end{equation}
Now, let us now take the square of $X=\left(r\left(O\right)^{\Gamma_{S}}\right)^{\dagger}r\left(O\right)^{\Gamma_{S}}$ in order to show that $X$ is proportional to a projector.
\begin{align}
    X^2 &=\sum_{\vec x_{S},\vec y_{S}}\: \sum_{\vec z_{S},\vec z'_{S}\in\ker O_{S,S}}\ket{\vec x_{S}}\bra{\vec y_{S}+\vec z'_{S}} \, \braket{\vec x_{S}+\vec z_{S}|\vec y_{S}} \nonumber \\
&\quad \otimes \sum_{\vec x_{\bar{S}}}\sum_{\vec y_{\bar{S}}}\ket{O_{\bar{S},S}\vec x_{S}+O_{\bar{S},\bar{S}}\vec x_{\bar{S}}}\bra{O_{\bar{S},S}\left(\vec y_{S}+\vec z'_{S}\right)+O_{\bar{S},\bar{S}}\vec y_{\bar{S}}}  \\
 &\quad \times \braket{O_{\bar{S},S}\left(\vec x_{S}+\vec z_{S}\right)+O_{\bar{S},\bar{S}}\vec x_{\bar{S}}|O_{\bar{S},S}\vec y_{S}+O_{\bar{S},\bar{S}}\vec y_{\bar{S}}}. \nonumber
\end{align}
We get non-vanishing contributions only when $\vec y_{S}= \vec x_{S}+\vec z_{S}$ and $\vec x_{\bar{S}}+ \vec y_{\bar{S}} \in \ker O_{\bar{S},\bar{S}}$ are satisfied and so we find:
\begin{align}
     X^2  &=\sum_{\vec x_{S},\vec y_{S}}\: \sum_{\vec z_{S},\vec z'_{S}\in\ker O_{S,S}}\ket{\vec x_{S}}\bra{\vec x_{S}+ \vec z_{S}+\vec z'_{S}} \nonumber \\
&\quad \otimes \sum_{\vec x_{\bar{S}}}\sum_{\vec r_{\bar{S}}\in \ker O_{\bar{S},\bar{S}}}\ket{O_{\bar{S},S}\vec x_{S}+O_{\bar{S},\bar{S}}\vec x_{\bar{S}}}\bra{O_{\bar{S},S}\left(\vec 
x_{S}+\vec z_{S}+\vec z'_{S}\right)+O_{\bar{S},\bar{S}}\left(\vec x_{\bar{S}}+\vec r_{\bar{S}}\right)} \\
&=\left|\ker O_{S,S}\right|\left|\ker O_{\bar{S},\bar{S}}\right| X \\
&= 2^{2\dim\left(\ker O_{S,S}\right)} X.
\end{align}
Here, in the last step, we used \Cref{lem:kernels-of-principal-submatrices}.
Hence $X$ is proportional to a projection and its eigenvalues lie in $\left\{ 0,2^{2\dim\left(\ker O_{S,S}\right)}\right\}$ and so the singular values of $r(O)^{\Gamma_S}$ lie in $\left\{ 0,2^{\dim\left(\ker O_{S,S}\right)}\right\}$. Lastly, to determine how many non-zero singular values there are, we compute the rank of $X$, i.e., we compute $\tr(X)$:
\begin{align}
    \tr(X) 
    &= \sum_{\vec x_{S},\vec x_{\bar{S}}}\sum_{\vec z_{S}\in \ker O_{S,S}} \tr\left( \ket{\vec x_{S}}\bra{\vec x_{S}+\vec z_{S}}  \otimes\ket{O_{\bar{S},S}\vec x_{S}+O_{\bar{S},\bar{S}}\vec x_{\bar{S}}}\bra{O_{\bar{S},S}\left(\vec x_{S}+\vec z_{S}\right)+O_{\bar{S},\bar{S}}\vec x_{\bar{S}}}\right) \\
     &= \sum_{\vec x_{S},\vec x_{\bar{S}}}\trace\left(\ket{\vec x_{S}}\bra{\vec x_{S}}\right)\trace\left(\ket{O_{\bar{S},S}\vec x_{S}+O_{\bar{S},\bar{S}}\vec x_{\bar{S}}}\bra{O_{\bar{S},S}\vec x_{S}+O_{\bar{S},\bar{S}}\vec x_{\bar{S}}}\right)\\
    &= 2^t\,.
\end{align}
On the other hand, $\tr(X)$ equals the sum of the non-zero eigenvalues of $X$. This implies that $X$ has $2^{t-2\dim\left(\ker O_{S,S}\right)}$ non-zero eigenvalues and so $r(O)^{\Gamma_S}$ has $2^{t-2\dim\left(\ker O_{S,S}\right)}$ non-zero singular values.
\end{proof}
\Cref{lem:singular-values-partial-transpose} provides a characterization of the singular values of $R(O)^{\Gamma_S}$ and hence the trace-norm $\lVert R\left(O\right)^{\Gamma_{S}}\rVert_{1} $  in terms of $\dim\left(\ker O_{S,S}\right)$. For our purposes, we are interested in proving upper bounds on $\lVert R\left(O\right)^{\Gamma_{S}}\rVert_{1}$ for all $O\in O_t$ except the identity $O=e$. Hence, we will now show that for all $O\neq e$ one can always choose the set $S$ on which the partial transpose is applied, such that $\dim\left(\ker O_{S,S}\right)$ is non-zero, i.e., $\ker O_{S,S}$ is non-trivial. To prove this, it is sufficient to show that $O$ has at least a single zero principal minor, i.e., there exists $S\subset [t]$ such that $\det\left(O_{S,S}\right)=0$. The following proposition implies this sufficient condition as a corollary:
\begin{proposition}[All principal minors equal 1 implies conjugate to upper triangular]
  For any binary matrix $A\in\mathbb{F}_2^{t\times t}$ such that $\det\left(A_{S,S}\right)=1$ for all $S\in\left[t\right]$, there exists $\pi\in\mathcal{S}_{t}$ such that $\pi A\pi^{-1}$ is upper triangular.
\end{proposition}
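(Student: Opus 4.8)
The plan is to translate the statement into graph-theoretic language: to the matrix $A$ we associate the directed graph $G_A$ on the vertex set $[t]$ having an arc $i\to j$ precisely when $i\neq j$ and $A_{ij}=1$. Since every $1\times 1$ principal minor equals $1$, the diagonal of $A$ is all-ones; hence $A$ is conjugate by a permutation to an upper-triangular matrix exactly when there is an ordering $v_1,\dots,v_t$ of $[t]$ with $A_{v_a v_b}=0$ whenever $a>b$ — that is, exactly when $G_A$ admits a topological ordering, which happens if and only if $G_A$ is acyclic. So the whole proposition reduces to showing that $G_A$ contains no directed cycle, the permutation $\pi$ being the one that relabels the vertices into such a topological order.

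Suppose for contradiction that $G_A$ has a directed cycle, and pick one of minimum length, say $i_1\to i_2\to\cdots\to i_k\to i_1$ with the $i_a$ pairwise distinct and $k\geq 2$ (there are no loops). Minimality forbids chords: if $A_{i_a i_b}=1$ for some $a\neq b$ with $b\not\equiv a+1\pmod k$, then $i_a\to i_b\to i_{b+1}\to\cdots\to i_a$ (indices mod $k$) is a simple directed cycle of length between $2$ and $k-1$, contradicting minimality. Consequently, letting $S=\{i_1,\dots,i_k\}$ and ordering the rows and columns of the principal submatrix $B:=A_{S,S}$ as $i_1,\dots,i_k$, we get $B_{ab}=0$ unless $a=b$ or $b\equiv a+1\pmod k$; moreover the diagonal entries of $B$ are $1$ (again $1\times 1$ minors) and the entries $B_{a,\,a+1\bmod k}$ are $1$ (these are the cycle arcs). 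Hence $B=I_k+C_k$ over $\mathbb{F}_2$, where $C_k$ is the $k\times k$ cyclic permutation matrix.

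To finish, observe that each column of $B=I_k+C_k$ contains exactly two $1$'s, so over $\mathbb{F}_2$ the sum of all rows of $B$ is the zero vector; the rows are therefore linearly dependent and $\det B=0$, contradicting the hypothesis $\det\left(A_{S,S}\right)=1$. Thus $G_A$ is acyclic, and conjugating $A$ by the permutation that lists a topological order of $G_A$ first yields an upper-triangular matrix, as claimed. I expect the only genuinely delicate point to be the no-chord argument for a minimum-length cycle, together with keeping the index arithmetic mod $k$ straight (note in particular that for $k=2$ there are no chord pairs at all and $B$ is directly the all-ones $2\times 2$ matrix); once $B$ is identified with $I_k+C_k$, the determinant computation over $\mathbb{F}_2$ is immediate.
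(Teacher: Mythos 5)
Your proof is correct, and it takes a genuinely different route from the paper's. The paper argues by induction on $t$: it writes $A$ in block form with the top-left $(t-1)\times(t-1)$ block already triangularized by the induction hypothesis, then uses the $2\times 2$ principal minor condition $1+\tilde w_{t-1}\tilde v_{t-1}=1$ to split into two cases and clean up the last row with one more permutation. You instead pass to the digraph of the off-diagonal support and show it is acyclic: a minimum-length directed cycle is chordless, so the corresponding principal submatrix is exactly $I_k+C_k$ with $C_k$ the cyclic shift, whose rows sum to zero over $\mathbb{F}_2$, contradicting $\det(A_{S,S})=1$. Both arguments are sound. Your version is shorter, avoids the case analysis, and isolates precisely which principal minors are actually needed (the $1\times 1$ ones and those indexed by chordless cycles); it also makes transparent that the statement is the standard equivalence ``permutation-similar to upper triangular $\iff$ the support digraph is acyclic.'' The paper's induction, by contrast, is self-contained linear algebra and constructs the permutation $\pi$ explicitly step by step, whereas you obtain $\pi$ from a topological sort — equally constructive, just phrased combinatorially. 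The one point worth stating carefully in a final write-up is the chord argument: for a chord $i_a\to i_b$ the cycle $i_a\to i_b\to i_{b+1}\to\cdots\to i_a$ has length $1+((a-b)\bmod k)\le k-1$ precisely because $b\not\equiv a+1\pmod k$, and is simple because its vertices are a subset of the original cycle's; you flag this yourself and handle $k=2$ separately, so the argument is complete.
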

\begin{proof}
 We proceed by induction on $t$. For $t=1,2$, the claim can be easily verified. Now, consider a $t \times t$-matrix $A_t\in\mathbb{F}_2^{t\times t}$ such that $\det\left(A_{S,S}\right)=1$ for all $S\in\left[t\right]$. Since the $|S|=1$-minors correspond to the diagonal elements of $A_t$, all diagonal elements $a_{ii}=1$ for $i\in [t]$. We now write $A_t$ in block form singling out the last row and column as follows:
    \begin{equation}
        A_{t}=\left(\begin{array}{c|c}
A_{t-1} & \vec w\\
\hline \vec v^{T} & 1
\end{array}\right) \,.
    \end{equation}
    By the induction hypothesis, there exists some $\pi_1 \in \mathcal{S}_{t-1}\subseteq \mathcal{S}_t$ such that $\pi_1 A_{t-1} \pi_1^{-1}$ is upper triangular. Applying this permutation to $A_t$, we obtain
    \begin{equation*}
        \pi_1 A_{t}\pi_1^{-1}=\left(\begin{array}{c|c}
        \pi_1 & \vec 0\\
        \hline \vec 0 & 1
        \end{array}\right)\left(\begin{array}{c|c}
        A_{t-1} & \vec w\\
        \hline \vec v^{T} & 1
        \end{array}\right)\left(\begin{array}{c|c}
        \pi_1^{-1} & \vec 0\\
        \hline \vec 0 & 1
        \end{array}\right)=\left(\begin{array}{c|c}
        U_{t-1} & \pi_1 \vec w\\
        \hline \left(\pi_1 \vec v\right)^{T} & 1
        \end{array}\right)=:\left(\begin{array}{c|c}
        U_{t-1} & \tilde{\vec w}\\
        \hline \tilde{\vec v}^{T} & 1
        \end{array}\right) \,,
    \end{equation*}
where $U_{t-1}=\pi_1 A_{t-1} \pi_1^{-1}$ is a $(t-1)\times(t-1)$ upper triangular matrix. 

    Now, after applying this permutation, the matrix is almost upper triangular, except for the last row which still has $\tilde{\vec v}^{T}$ in it. Our goal is to remove any non-zero elements of $\tilde{\vec v}^T$ from that row by conjugation with additional permutations. In particular, we will now show that there exists a $\pi_2\in \mathcal{S}_t$ such that $\pi_2\pi_1 A_t \left(\pi_2\pi_1 \right)^{-1}$ is fully upper triangular.

    We start by examining the principal submatrix of $\pi_1 A_{t} \pi_1^{-1}$ corresponding to $S=\{t-1,t\}$. The corresponding principal minor is given by
    \begin{equation}
        \left|\begin{array}{cc}
1 & \tilde{w}_{t-1}\\
\tilde{v}_{t-1} & 1
\end{array}\right|=1+\tilde{w}_{t-1}\tilde{v}_{t-1}\,.
    \end{equation}
    By assumption, this principal minor is equal to 1, and so at least one of $\tilde{v}_{t-1}$ and $\tilde{w}_{t-1}$ must be zero.
    Hence, we now treat these two cases, separately:
\begin{enumerate}
    \item If $\tilde{v}_{t-1}=0$, then $ \pi_1 A_{t}\pi_1^{-1}$ is of the form 
            \begin{equation}
                 \pi_1 A_{t}\pi_1^{-1} = \left(\begin{array}{c|cc}
        U_{t-2} & \vec{u}_{[t-2]} & \tilde{\vec w}_{[t-2]}\\
        \hline \vec{0} & 1 & \tilde{w}_{t-1}\\
        \tilde{\vec v}_{[t-2]} & 0 & 1
        \end{array}\right).
            \end{equation}
            Now, consider the $(t-1)\times (t-1)$-principal submatrix of this matrix obtained by deleting the $(t-1)$-th row and column, namely
            \begin{equation}
                \left(\pi_1 A_{t}\pi_1^{-1}\right)_{S,S}=\left(\begin{array}{cc}
        U_{t-2} & \tilde{\vec w}_{[t-2]}\\
        \tilde{\vec v}_{[t-2]} & 1
        \end{array}\right),
            \end{equation}
            for $S=[t]\setminus\{t-1\}$. Since the principal minors of a matrix are invariant under conjugation by permutations, all principal minors of this matrix are equal to 1 by assumption. Hence, we can apply the induction hypothesis to this matrix to obtain a $\pi_2$ that will put this matrix into upper triangular form.
    \item If $\tilde{v}_{t-1}=1$, then $\tilde{w}_{t-1}=0$. In this case, we first apply the transposition $\sigma=(n-1,n)$ to get
    \begin{equation}
        \sigma\pi_1 A_{t}\left(\sigma\pi_1\right)^{-1}=\left(\begin{array}{c|cc}
U_{t-2} & \tilde{\vec w}_{[t-2]} & \vec{u}_{[t-2]}\\
\hline \tilde{\vec v}_{[t-2]} & 1 & 1\\
0 & 0 & 1
\end{array}\right).
    \end{equation}
    Now, consider the $(t-1)\times (t-1)$-principal submatrix of this matrix obtained by deleting the last row and column, namely
     \begin{equation}
        \left(\sigma\pi_1 A_{t}\left(\sigma\pi_1\right)^{-1}\right)_{S,S}
        =\left(\begin{array}{cc}
U_{t-2} & \tilde{\vec w}_{[t-2]} \\
 \tilde{\vec v}_{[t-2]} & 1 
\end{array}\right),
    \end{equation}
    for $S=[t-1]$. Then, we can apply the induction hypothesis to this matrix, to obtain a $\kappa\in\mathcal{S}_{t-1}$. Overall, we set $\pi_2 = \kappa \sigma$.
\end{enumerate}
    In both cases, the resulting matrix $\pi_2\pi_1 A_t \left(\pi_2\pi_1 \right)^{-1}$ is upper triangular as required.
\end{proof}
Note that for orthogonal matrices $O$ such that $O^T O = I$, $O$ being upper triangular implies that $O$ is the identity, i.e., $O=e$. Hence, we obtain the following corollary:
\begin{corollary}\label{cor:trace-norm-bound-partial-transpose}
For all $O\in O_{t}\setminus\left\{ e\right\}$, there exists $S\in\left[t\right]$ such that $\| R\left(O\right)^{\Gamma_{S}}\| _{1} \leq 2^{n(t-1)} $.
\end{corollary}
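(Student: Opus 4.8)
The plan is to read off the corollary from three facts that have just been assembled: \Cref{lem:singular-values-partial-transpose}, which gives $\lVert R(O)^{\Gamma_S}\rVert_1 = 2^{n(t-k)}$ with $k = \dim(\ker O_{S,S})$; the preceding Proposition, stating that a binary matrix all of whose principal minors equal $1$ is conjugate by a permutation to an upper triangular matrix; and the remark that an orthogonal upper triangular matrix over $\mathbb{F}_2$ is the identity. The strategy is therefore to show that for every $O \in O_t \setminus \{e\}$ one can choose $S \subseteq [t]$ with $O_{S,S}$ singular, and then invoke \Cref{lem:singular-values-partial-transpose}.

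First I would argue by contraposition. Fix $O \in O_t$ and suppose $\det(O_{S,S}) = 1$ for all $S \subseteq [t]$. By the Proposition there is $\pi \in \mathcal{S}_t$ with $\pi O \pi^{-1}$ upper triangular. Permutation matrices are orthogonal and $O^T O = I_t$, so $\pi O \pi^{-1}$ is again orthogonal; an upper triangular orthogonal matrix over $\mathbb{F}_2$ must be the identity (its columns are orthonormal, so a triangular unit-norm column structure forces all off-diagonal entries to vanish), whence $\pi O \pi^{-1} = e$ and therefore $O = e$. Consequently, for any $O \neq e$ there exists $S \subseteq [t]$ with $\det(O_{S,S}) = 0$; this $S$ is necessarily nonempty and proper, since $\det(O) = 1$ excludes $S = [t]$.

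For such an $S$, the submatrix $O_{S,S}$ is singular over $\mathbb{F}_2$, so $k := \dim(\ker O_{S,S}) \geq 1$, and \Cref{lem:singular-values-partial-transpose} gives
\begin{equation}
    \lVert R(O)^{\Gamma_S}\rVert_1 = 2^{n(t-k)} \leq 2^{n(t-1)} \,,
\end{equation}
which is the claimed bound.

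I do not anticipate a genuine obstacle here — the corollary is essentially bookkeeping on top of the Proposition and \Cref{lem:singular-values-partial-transpose} — but the one link worth spelling out is the passage from ``conjugate to upper triangular'' back to ``equals $e$'', i.e.\ that an orthogonal upper triangular matrix over $\mathbb{F}_2$ is the identity; this is the step that couples the purely combinatorial Proposition to the group $O_t$. A minor additional point is to confirm that the set $S$ produced is admissible for a partial transpose, which is automatic since it is a nonempty proper subset of $[t]$ (using $\det(O)=1$ and the convention that the empty principal minor equals $1$).
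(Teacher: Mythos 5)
Your proposal is correct and follows essentially the same route as the paper: contraposition via the Proposition (all principal minors equal to $1$ implies permutation-conjugate to upper triangular), the observation that an orthogonal upper triangular matrix over $\mathbb{F}_2$ is the identity, and then \Cref{lem:singular-values-partial-transpose} with $k=\dim(\ker O_{S,S})\geq 1$. The extra checks you spell out (conjugation by a permutation preserves orthogonality; the chosen $S$ is a nonempty proper subset since $\det(O)=1$) are exactly the bookkeeping the paper leaves implicit.
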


\subsection{On lower bounding via the tree representation framework}\label{ssec:tree-representation-framework}

In this section, we discuss a potential alternative route to showing a single-copy lower bound for stabilizer testing, namely the tree representation framework. While exploring this route, we found to our surprise, that there exist \emph{post-selective} single-copy algorithms for distinguishing stabilizer states from the maximally mixed state using only $O(1)$ many copies. This is in stark contrast to the case of purity testing considered in \cite{chenExponentialSeparationsLearning2022a}, as we explain in more detail now.\\

A range of single-copy lower bounds have been obtained by means of the \textit{tree representation framework} (see e.g. \cite{chenExponentialSeparationsLearning2022a, chenOptimalTradeoffsEstimating2024a} for detailed expositions). In summary, in this framework, the quantum learning/testing algorithm is modelled as a tree where nodes correspond to measurements and edges correspond to measurement outcomes.
By a post-selective single-copy algorithm, we mean a single-copy algorithm that can post-select on obtaining certain measurement outcomes and we discount any copies used that resulted in undesired measurement outcomes. In the tree representation framework this corresponds to algorithms that can deterministically follow a desired path (as long as it has nonzero probability) from the root to a leaf rather than probabilistically walking down the tree.\\

Particularly relevant for us is the $t=\Omega(2^{n/2})$ sample complexity lower bound on purity testing \cite{chenExponentialSeparationsLearning2022a, chenOptimalTradeoffsEstimating2024a} proved via this framework because of the similarities between purity testing and stabilizer testing mentioned throughout the manuscript. Here, purity testing refers to distinguishing between pure states and maximally mixed states, given copies of an unknown state that is promised to be one of the two. The key technical lemma behind this bound is the following:
\begin{lemma}[Lemma 5.12. in \cite{chenExponentialSeparationsLearning2022a}]
    Let $\Pi_{\mathrm{sym}}$ be the projector on the symmetric subspace, $\mathrm{Sym}_t((\mathbb{C}^2)^{\otimes n}) = \mathrm{span}\{\ket{\psi}^{\otimes t} :  \ket{\psi} \in (\mathbb{C}^2)^{\otimes n}\}$. Then, for any collection of pure states $\ket{\psi_1},\dots, \ket{\psi_t}\in (\mathbb{C}^2)^{\otimes n}$, it holds that
    \begin{equation}
        \tr \left(\Pi_{\mathrm{sym}} \bigotimes_{i=1}^t \ket{\psi_i}\bra{\psi_i} \right) \geq 1/t! \,.
    \end{equation}
    \label{lem:purity-testing-lemma}
\end{lemma}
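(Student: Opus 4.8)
The plan is to rewrite $\tr\!\big(\Pi_{\mathrm{sym}}\bigotimes_i\ket{\psi_i}\!\bra{\psi_i}\big)$ as a permanent of the Gram matrix of the states and then invoke a classical matrix inequality. The first step is a purely combinatorial identity. Start from $\Pi_{\mathrm{sym}}=\tfrac{1}{t!}\sum_{\pi\in\mathcal{S}_t}W_\pi$, where $W_\pi$ permutes the $t$ tensor factors, and use the standard trace identity $\tr\!\big(W_\pi\bigotimes_{i=1}^t\rho_i\big)=\prod_{\text{cycles }c\text{ of }\pi}\tr\!\big(\prod_{i\in c}\rho_i\big)$. Specializing to $\rho_i=\ket{\psi_i}\!\bra{\psi_i}$, a cycle $(i_1\,i_2\,\cdots\,i_\ell)$ contributes $\braket{\psi_{i_1}|\psi_{i_2}}\braket{\psi_{i_2}|\psi_{i_3}}\cdots\braket{\psi_{i_\ell}|\psi_{i_1}}$, so collecting factors — and using that summing over all of $\mathcal{S}_t$ assembles the permanent — gives
\[
\tr\!\Big(\Pi_{\mathrm{sym}}\bigotimes_{i=1}^t\ket{\psi_i}\!\bra{\psi_i}\Big)=\frac{1}{t!}\sum_{\pi\in\mathcal{S}_t}\prod_{i=1}^t G_{i,\pi(i)}=\frac{1}{t!}\,\mathrm{perm}(G),
\]
where $G$ is the $t\times t$ matrix with $G_{ij}=\braket{\psi_i|\psi_j}$.

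The second step records the structure of $G$: it is positive semidefinite, being a Gram matrix, and all its diagonal entries equal $1$, since each $\ket{\psi_i}$ is a unit vector. The third step is then to invoke the classical inequality of Marcus: for any positive semidefinite Hermitian matrix $A$ one has $\mathrm{perm}(A)\ge\prod_i A_{ii}$. Applied to $G$ this yields $\mathrm{perm}(G)\ge1$, and combining with Step~1 we obtain $\tr\!\big(\Pi_{\mathrm{sym}}\bigotimes_i\ket{\psi_i}\!\bra{\psi_i}\big)\ge1/t!$, which is the claim. If one prefers a self-contained argument, $\mathrm{perm}(G)\ge1$ can be proved by induction on $t$: the inductive step uses the refined bound that $\mathrm{perm}(A)$ is at least $A_{tt}$ times the permanent of the principal submatrix of $A$ obtained by deleting its last row and column; this refined bound follows from the Schur-complement positivity of $A$ together with the nonnegativity of permanents of positive semidefinite matrices, and since every principal submatrix of $G$ is again positive semidefinite with unit diagonal, the induction closes at the base case $\mathrm{perm}(G)=1$ for $t=1$.

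The first step is routine, if a little bookkeeping-heavy; the only care needed is with the cycle orientation and complex-conjugation conventions in the trace identity, none of which affects the final answer since summing over all $\pi$ is symmetric under $\pi\mapsto\pi^{-1}$ and $G$ is Hermitian. The substantive part — and the main obstacle — is the permanent inequality in Step~3: the bound $\mathrm{perm}(A)\ge\prod_i A_{ii}$ for positive semidefinite $A$ (equivalently $\mathrm{perm}(G)\ge1$ for unit-diagonal positive semidefinite $G$) is exactly where positivity of the states is genuinely used, and reproving it from scratch rather than citing Marcus requires actual work: a naive Laplace-type expansion of the permanent along a row produces minors that are no longer permanents of positive semidefinite matrices, so the induction must be routed through the Schur-complement form above.
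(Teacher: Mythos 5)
The paper does not prove this lemma --- it is imported verbatim as Lemma 5.12 of the cited reference \cite{chenExponentialSeparationsLearning2022a} and used as a black box in the discussion of \Cref{ssec:tree-representation-framework} --- so there is no in-paper proof to compare against. Your argument is correct and is, in fact, essentially the standard proof from that reference: the identity $\tr\bigl(\Pi_{\mathrm{sym}}\bigotimes_i\ket{\psi_i}\!\bra{\psi_i}\bigr)=\tfrac{1}{t!}\,\mathrm{perm}(G)$ with $G_{ij}=\braket{\psi_i|\psi_j}$ is the standard cycle-expansion computation, and the reduction to Marcus's inequality $\mathrm{perm}(A)\ge\prod_i A_{ii}$ for positive semidefinite $A$ is exactly where the content lies. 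One caveat on your ``self-contained'' alternative: the inductive step $\mathrm{perm}(A)\ge A_{tt}\,\mathrm{perm}(A_{[t-1],[t-1]})$ does not follow merely from ``nonnegativity of permanents of PSD matrices'' --- expanding along the last row and column leaves a quadratic form $\sum_{i,j}b_i\overline{b_j}\,\mathrm{perm}\bigl(A'(j|i)\bigr)$ in the off-diagonal block, and one needs the stronger classical fact that the matrix of permanental minors of a PSD matrix is itself PSD (this is the substance of Marcus's proof). So either cite Marcus outright, as you do in your main route, or supply that additional ingredient; as written the inductive sketch has a gap, but the primary argument is sound.
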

To understand the significance of this bound, suppose that contrary to the lemma, there existed a collection of states such that $\tr \left(\Pi_{\mathrm{sym}} \bigotimes_{i=1}^t \ket{\psi_i}\bra{\psi_i} \right) = 0$. Then, no learning algorithm could produce these states as outcomes from measuring $\ket{\psi}^{\otimes t}$ when the unknown state is pure. In other words, these states (as measurement outcomes) are inconsistent with arising from $\ket{\psi}^{\otimes t}$ but fully consistent with measurements on the maximally mixed state $\mathds{1}^{\otimes t}/2^{nt}$. This would in turn imply the existence of a post-selective algorithm for purity testing which simply post-selects on obtaining this collection of states as outcomes.
From this chain of thought, we conclude that \Cref{lem:purity-testing-lemma} actually implies that the $t=\Omega(2^{n/2})$ single-copy lower bound for purity testing holds, even when considering post-selective single-copy algorithms.\\

The key finding of this section is that, unlike purity testing, post-selective algorithms for distinguishing stabilizer states from the maximally mixed state exist using just $O(1)$ copies. As a consequence, proving lower bounds on stabilizer testing via the tree representation framework seems daunting. Specifically, we provide collections of states with zero overlap with any $\ket{S}^{\otimes t}$, meaning these states are inconsistent with single-copy measurements on any stabilizer state. To achieve this, instead of the symmetric subspace, we focus on the subspace spanned by the $t$-th order tensor powers of stabilizer states,
\begin{equation}
    \mathrm{STAB}_{n,t} = \mathrm{span}\{\ket{S}^{\otimes t} \, : \, \ket{S} \in \stabn \}.
\end{equation}
This space has featured already in Section 5.2 of \cite{grossSchurWeylDualityClifford2021a} where it was shown that 
\begin{equation}
    \Pi_{O} := \frac{1}{|O_t|} \sum_{O\in O_t} R(O),
\end{equation}
is the orthogonal projector onto $\mathrm{STAB}_{n,t}$. For general $t$, this space is not well-understood, in particular, we do not know a basis for it in the regime $t\leq n$. However for  $t\leq 5$, we understand it, since in this case, $\Pi_{O} = \Pi_{\mathrm{sym}}$ simply because for $t\leq 5$, $O_t = S_t$. We now show that our understanding breaks down immediately after this point. \\

Consider the case of $n=1$ and $t=6$: we have that 
\begin{equation}
    \stab (1)=\left\{ \ket 0,\ket 1,\ket +,\ket -,\ket i,\ket{-i}\right\} \,.
\end{equation}
Now consider the state 
\begin{equation}
    \ket{\phi}=\ket 0\ket 1\ket +\ket -\ket i\ket{-i} \,.
\end{equation}
It is easy to see (but remarkable) that $\ket{\phi}$ is orthogonal to $\mathrm{STAB}_{1,t}$ and so $\Pi_{O}\ket{\phi} = 0$, providing a counter-example to a stabilizer version of \Cref{lem:purity-testing-lemma}. In fact, this basic example can be used as a building block to construct examples for arbitrary $n$ as follows:
\begin{lemma}
\label{lem:orthogonal-example-states}
    Let $t=6$, then consider the collection of states
    \begin{equation}
    \label{eq:padded-example}
        \left\{ \ket{S_{1}},\dots,\ket{S_{6}}\right\} =\left\{ \ket 0,\ket 1,\ket +,\ket -,\ket i,\ket{-i}\right\} \otimes\ket{0^{n-1}} \,.
    \end{equation}
    Then,
    \begin{equation}
        \tr \left(\Pi_{O}\bigotimes_{i=1}^{6}\ket{S_{i}}\bra{S_{i}} \right) = 0 \,.
    \end{equation}
\end{lemma}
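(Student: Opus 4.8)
The plan is to use that $\Pi_{O}$ is the \emph{orthogonal} projector onto $\mathrm{STAB}_{n,t}=\mathrm{span}\{\ket S^{\otimes t}:\ket S\in\stabn\}$. Writing $\ket\Psi:=\bigotimes_{i=1}^{6}\ket{S_{i}}$, we have $\tr\big(\Pi_{O}\ket\Psi\bra\Psi\big)=\bra\Psi\Pi_{O}\ket\Psi=\lVert\Pi_{O}\ket\Psi\rVert^{2}\geq0$, and this vanishes exactly when $\ket\Psi\perp\ket S^{\otimes 6}$ for every $\ket S\in\stabn$. Rather than verifying this orthogonality directly --- the $n$-qubit stabilizer states do not factorize nicely over the copies --- I would average over $O_{6}$: since $\Pi_{O}=\frac{1}{|O_{6}|}\sum_{O\in O_{6}}R(O)$ and $R(O)=r(O)^{\otimes n}$ has a product structure, it suffices to evaluate $\bra\Psi R(O)\ket\Psi$ for each $O$.

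First I would reorganize the $6$ copies of the $n$-qubit system into $n$ ``sites'', each holding $6$ qubits; this is precisely the tensor ordering for which $R(O)=r(O)^{\otimes n}$ acts as one copy of $r(O)$ per site. Under this reordering $\ket\Psi=\ket\phi\otimes\big(\ket{0^{6}}\big)^{\otimes(n-1)}$, where $\ket\phi=\ket 0\ket 1\ket +\ket -\ket i\ket{-i}$ sits on the first site and every other site carries $\ket 0^{\otimes 6}$. Hence
\begin{equation}
    \bra\Psi R(O)\ket\Psi=\bra\phi r(O)\ket\phi\cdot\big(\bra{0^{6}}r(O)\ket{0^{6}}\big)^{n-1}.
\end{equation}
Now $\ket 0$ is a single-qubit stabilizer state, so \Cref{fact:stabilizer-powers-sandwich} applied with $n=1$ (where $R(\cdot)=r(\cdot)$) and $T=O\in O_{6}\subset\Sigma_{6,6}$ gives $\bra{0^{6}}r(O)\ket{0^{6}}=1$ for all $O$; equivalently, $r(O)=\sum_{\vec x\in\mathbb{F}_{2}^{6}}\ket{O\vec x}\bra{\vec x}$ and $O\vec 0=\vec 0$ immediately give $\bra{\vec 0}r(O)\ket{\vec 0}=\braket{\vec 0|O\vec 0}=1$. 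Therefore $\bra\Psi R(O)\ket\Psi=\bra\phi r(O)\ket\phi$ for every $O\in O_{6}$, and averaging,
\begin{equation}
    \tr\big(\Pi_{O}\ket\Psi\bra\Psi\big)=\bra\phi\Big(\tfrac{1}{|O_{6}|}\textstyle\sum_{O\in O_{6}}r(O)\Big)\ket\phi,
\end{equation}
which is exactly the overlap of $\ket\phi\bra\phi$ with the projector onto $\mathrm{STAB}_{1,6}$. In other words, the ``padding'' $\ket{0^{n-1}}$ is stripped off and the problem reduces to the $n=1$ case.

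The last step is to check that $\ket\phi\perp\mathrm{STAB}_{1,6}$. Since $\stab(1)=\{\ket 0,\ket 1,\ket +,\ket -,\ket i,\ket{-i}\}$ spans $\mathrm{STAB}_{1,6}$ via its tensor sixth powers, it suffices to show $\braket{\phi|s^{\otimes 6}}=0$ for each of these six states. Writing $\ket\phi=\ket{\phi_{1}}\cdots\ket{\phi_{6}}$ with $(\phi_{1},\dots,\phi_{6})=(0,1,+,-,i,-i)$, the overlap factorizes as $\prod_{j=1}^{6}\braket{\phi_{j}|s}$, and for each $\ket s$ one factor vanishes because $\ket s$ is orthogonal to its partner among the $\phi_{j}$ ($\ket 0\perp\ket 1$, $\ket +\perp\ket -$, $\ket i\perp\ket{-i}$). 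Hence $\Pi_{O}\ket\phi=0$ and the chain above yields $\tr\big(\Pi_{O}\bigotimes_{i=1}^{6}\ket{S_{i}}\bra{S_{i}}\big)=0$. The only point needing a little care --- not really an obstacle --- is the tensor-factor bookkeeping of the reordering into $n$ sites and the verification that the $\ket{0^{n-1}}$ padding decouples; the key fact that makes this work is that $\ket 0^{\otimes 6}$, being a tensor power of a stabilizer state, satisfies $\bra{0^{6}}r(O)\ket{0^{6}}=1$ for all $O\in O_{6}$, so each padded site contributes a factor of $1$.
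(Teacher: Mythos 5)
Your proposal is correct and follows essentially the same route as the paper: factor $R(O)=r(O)^{\otimes n}$ over the $n$ sites, use $r(O)\ket{0^{6}}=\ket{0^{6}}$ (equivalently $O\vec 0=\vec 0$) to strip off the $\ket{0^{n-1}}$ padding, and reduce to the orthogonality of $\ket\phi=\ket 0\ket 1\ket +\ket -\ket i\ket{-i}$ to $\mathrm{STAB}_{1,6}$. Your only addition is spelling out that last orthogonality via the factorized overlaps $\prod_{j}\braket{\phi_{j}|s}$, which the paper asserts without detail; that is a welcome, correct elaboration.
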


\begin{proof}
Let $\ket{\phi}=\ket 0\ket 1\ket +\ket -\ket i\ket{-i}$, then using that $R\left(O\right)=r\left(O\right)^{\otimes n}$, we have
\begin{align}
\tr \left(\Pi_{O}\bigotimes_{i=1}^{6}\ket{S_{i}}\bra{S_{i}} \right) 
&=
\frac{1}{\left|O_{t}\right|}\sum_{O\in O_{t}}\trace\left(R\left(O\right)\bigotimes_{i=1}^{6}\ket{S_{i}}\bra{S_{i}}\right) \, ,\\ 
& =\frac{1}{\left|O_{t}\right|}\sum_{O\in O_{t}}\trace\left(r\left(O\right)\ket{\phi}\bra{\phi}\right) \cdot \trace\left(r\left(O\right)^{\otimes n-1}\left(\ket{0^{n-1}}\bra{0^{n-1}}\right)^{\otimes6}\right) \, ,\\
 & =\frac{1}{\left|O_{t}\right|}\sum_{O\in O_{t}}\trace\left(r\left(O\right)\ket{\phi}\bra{\phi}\right) \,, \\
 &= 0 \,.
\end{align}
where we used that $r\left(O\right)^{\otimes n-1}\ket{0^{n-1}}^{\otimes6}=\ket{0^{n-1}}^{\otimes6}$
for all $O\in O_{t}$ (recall from \Cref{fact:stabilizer-powers-sandwich} that $R\left(O\right)\ket S^{\otimes t}=\ket S^{\otimes t }$
for all $t$ and stabilizer states $\ket S\in\mathrm{Stab}\left(n\right)$).
\end{proof}

We can further generalize this example by noting for all $C\in \Cln$, and any collection of states $\ket{\psi_1},\dots, \ket{\psi_t}\in (\mathbb{C}^2)^{\otimes n}$, it holds that
\begin{equation}
\sum_{O\in O_{t}}\trace\left(R\left(O\right)\bigotimes_{i=1}^{t}\ket{\psi_{i}}\bra{\psi_{i}}\,\right)=\sum_{O\in O_{t}}\trace\left(R\left(O\right)\bigotimes_{i=1}^{t}C\ket{\psi_{i}}\bra{\psi_{i}}C^{\dagger}\,\right) \,.
\end{equation}
This follows immediately from the fact that 
$\left[R\left(O\right),C^{\otimes t}\right]=0$ for all $C \in \Cln$ and all $O \in O_t$. Hence, the example extends to an entire Clifford orbit of the collection given in \Cref{eq:padded-example}. We note that there are examples of states (even at $t=6$) that are not of this form, and we do not understand the general structure of states orthogonal to $\mathrm{Stab}(n,t)$. We leave this as a topic of future research.

\section*{Acknowledgements}
We thank Daniel Liang, Matthis Lehmkühler, Jonas Haferkamp, Felipe Montealegre, and Robert Huang for helpful discussions. MH acknowledges funding by the BMBF (MUNIQC-Atoms). JH acknowledges funding from the Dutch Research Council (NWO) through Veni No.VI.Veni.222.331 and the Quantum Software Consortium (NWO Zwaartekracht Grant No.024.003.037).

\bibliographystyle{alphaurl}
\bibliography{refs}

\appendix

\section{Appendix}\label{appendix:proof-of-relation-r-p}

\subsection{Proof of Lemma \ref{lem:relation-r-p}\label{subsec:Proof-of-relation-lemma}}

In this section, we will prove \Cref{lem:relation-r-p} which
we restate here for convenience:
\begin{lemma}[Subspace weight correspondence between $r_{\psi}$ and $p_{\psi}$]
\label{lem:relation-r-p-restated}Let $\ket \psi$ be a pure $n$-qubit
quantum state. Given a subspace $H\subseteq\mathbb{F}_{2}^{n}$, consider
its orthogonal complement  $H^{\perp}$ (with respect to the standard
inner product on $\mathbb{F}_{2}^{n}$). Then,
\begin{equation}
r_{\psi}\left(H\right)=\left|H\right|\,p_{\psi}\left(\vec 0_{n}\times H^{\perp}\right),
\end{equation}
where $\vec 0_n\times H^{\perp}\subseteq\mathcal{Z}$.
\end{lemma}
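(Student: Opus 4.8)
The plan is to reduce the statement to a short Fourier-analytic computation over $\Fn$. Write $q(\vec c) := |\braket{\vec c | \psi}|^{2}$ for the distribution obtained from a single computational-basis measurement of $\ket\psi$. By the definition of computational difference sampling, the output is the mod-$2$ sum of two independent samples from $q$, so $r_{\psi}$ is the autocorrelation of $q$:
\begin{equation}
r_{\psi}(\vec a) = \sum_{\vec c \in \Fn} q(\vec c)\, q(\vec c + \vec a)\,.
\end{equation}
Before combining things, I would record two elementary facts. First, since $P_{(\vec 0_n,\vec b)} = i^{\vec 0_n\cdot\vec b}X^{\vec 0_n}Z^{\vec b} = Z^{\vec b}$, the definition of the characteristic distribution together with $\tr(\psi Z^{\vec b}) = \sum_{\vec c}(-1)^{\vec b\cdot\vec c}q(\vec c)$ gives
\begin{equation}
p_{\psi}(\vec 0_n, \vec b) = \tfrac{1}{2^{n}}\tr(\psi Z^{\vec b})^{2} = \tfrac{1}{2^{n}}\Big(\sum_{\vec c \in \Fn} (-1)^{\vec b \cdot \vec c}\, q(\vec c)\Big)^{2}\,,
\end{equation}
so that $2^{n} p_{\psi}(\vec 0_n, \vec b)$ is the square of the $\vec b$-th character sum of $q$. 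Second, for any subspace $H \subseteq \Fn$ one has the indicator identity $\mathds{1}[\vec a \in H] = |H^{\perp}|^{-1}\sum_{\vec b \in H^{\perp}} (-1)^{\vec a \cdot \vec b}$, which follows from $(H^{\perp})^{\perp} = H$ and the standard fact that $\sum_{\vec b \in H^{\perp}}(-1)^{\vec a \cdot \vec b}$ equals $|H^{\perp}|$ when $\vec a \in H$ and vanishes otherwise.

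Then I would just put these together. Starting from $r_{\psi}(H) = \sum_{\vec a \in H} r_{\psi}(\vec a)$, insert the character-sum expression for $\mathds{1}[\vec a\in H]$, exchange the order of summation, and substitute $\vec d = \vec c + \vec a$ so that the sum over $\vec a$ factorizes:
\begin{align}
r_{\psi}(H)
&= \frac{1}{|H^{\perp}|} \sum_{\vec b \in H^{\perp}} \sum_{\vec a \in \Fn} (-1)^{\vec a \cdot \vec b}\, r_{\psi}(\vec a)
= \frac{1}{|H^{\perp}|} \sum_{\vec b \in H^{\perp}} \sum_{\vec c, \vec d \in \Fn} (-1)^{(\vec c + \vec d)\cdot \vec b}\, q(\vec c)\, q(\vec d) \nonumber\\
&= \frac{1}{|H^{\perp}|} \sum_{\vec b \in H^{\perp}} \Big(\sum_{\vec c \in \Fn} (-1)^{\vec b \cdot \vec c}\, q(\vec c)\Big)^{2}
= \frac{2^{n}}{|H^{\perp}|} \sum_{\vec b \in H^{\perp}} p_{\psi}(\vec 0_n, \vec b)\,.
\end{align}
Using $|H|\cdot|H^{\perp}| = 2^{n}$ and $p_{\psi}(\vec 0_n \times H^{\perp}) = \sum_{\vec b \in H^{\perp}} p_{\psi}(\vec 0_n, \vec b)$, this is exactly $|H|\, p_{\psi}(\vec 0_n \times H^{\perp})$, as claimed.

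There is no genuine obstacle here; the only care needed is the bookkeeping of the discrete Fourier transform over $\Fn$. The one substantive point worth highlighting is that the autocorrelation structure of $r_{\psi}$ is precisely what makes the $\vec a$-sum against a character $(-1)^{\vec a\cdot\vec b}$ collapse to the \emph{square} of a single Fourier coefficient of $q$, and that this square equals $2^{n}$ times a value of $p_{\psi}$ supported on $\mathcal{Z}$ — which is exactly why only the weight of $p_{\psi}$ on $\vec 0_n \times H^{\perp}$, rather than on some larger subset of $\mathbb{F}_2^{2n}$, enters the final expression. (Alternatively, one could start from the relation $r_{\psi}(\vec a) = \sum_{\vec b} p_{\psi}(\vec a, \vec b)$ of \cite{grewalEfficientLearningQuantum2024d} and reduce instead to the "subgroup Fourier" identity $p_{\psi}(H\times\Fn) = |H|\,p_{\psi}(\vec 0_n\times H^{\perp})$, but this costs essentially the same computation, so the direct route above is the cleanest.)
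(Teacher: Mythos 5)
Your proof is correct. Every step checks out: the autocorrelation formula $r_{\psi}(\vec a)=\sum_{\vec c}q(\vec c)\,q(\vec c+\vec a)$ follows directly from the definition of computational difference sampling, the identity $p_{\psi}(\vec 0_n,\vec b)=\tfrac{1}{2^{n}}\bigl(\sum_{\vec c}(-1)^{\vec b\cdot\vec c}q(\vec c)\bigr)^{2}$ is immediate from $P_{(\vec 0_n,\vec b)}=Z^{\vec b}$, and the final chain of equalities (indicator identity, change of variables $\vec d=\vec c+\vec a$, and $|H|\,|H^{\perp}|=2^{n}$) is sound.

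Your route is genuinely different from the paper's, and arguably cleaner. The paper proves this lemma by importing two external results: Proposition 8.4 of \cite{grewalEfficientLearningQuantum2024d}, i.e.\ $r_{\psi}(\vec a)=\sum_{\vec b}p_{\psi}(\vec a,\vec b)$, and the self-Fourier-duality of the characteristic distribution, $\hat{p}_{\psi}(\vec v,\vec w)=\tfrac{1}{2^{n}}p_{\psi}(\vec w,\vec v)$, from Proposition 17 of \cite{grewalLowStabilizerComplexityQuantumStates2023a}; it then expands $p_{\psi}(\vec a,\vec b)$ in its Fourier series and collapses the sums with the same character-sum orthogonality over $H$ and $H^{\perp}$ that you use. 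You instead work directly from the operational definition of $r_{\psi}$ as the autocorrelation of the computational-basis distribution $q$, so the only piece of the characteristic distribution you ever need is its restriction to $\mathcal{Z}$, where it is manifestly the squared Fourier spectrum of $q$. This buys self-containedness (no citation of the marginal relation or of the duality of $p_{\psi}$ on all of $\mathbb{F}_2^{2n}$) and makes it structurally transparent why only the weight of $p_{\psi}$ on $\vec 0_n\times H^{\perp}$ enters. What the paper's route buys is modularity: the marginal relation $r_{\psi}(\vec a)=\sum_{\vec b}p_{\psi}(\vec a,\vec b)$ is a statement the authors also want to highlight on its own (it appears in their technical overview), and the duality $\hat{p}_{\psi}(\vec v,\vec w)=\tfrac{1}{2^{n}}p_{\psi}(\vec w,\vec v)$ is a stronger structural fact about $p_{\psi}$ on all of $\mathbb{F}_2^{2n}$, of which your identity for $p_{\psi}(\vec 0_n,\vec b)$ is the special case $\vec w=\vec 0_n$. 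Either proof is complete; yours trades generality of the intermediate facts for economy.
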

Our starting point for proving this lemma is the following relation
between $r_{\psi}$ and $p_{\psi}$ proved in \cite{grewalEfficientLearningQuantum2024d}:
\begin{lemma}[Proposition 8.4. in \cite{grewalEfficientLearningQuantum2024d}]
\label{lem:correspondence-r-p-individual}
Let $\ket \psi$ be a pure
$n$-qubit quantum state and let $\vec a\in\mathbb{F}_{2}^{n}$. Then 
\begin{equation}
r_{\psi}\left(\vec a\right)=\sum_{\vec b\in\mathbb{F}_{2}^{n}}p_{\psi}\left(\vec a,\vec b\right)\,.
\end{equation}
\end{lemma}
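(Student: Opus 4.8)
The plan is to prove the identity by expanding both sides directly in the computational basis and linking them via the orthogonality of characters of $\Fn$ (Parseval's identity). Write $\psi_{\vec s} = \braket{\vec s | \psi}$ for the computational-basis amplitudes and $q(\vec s) = |\psi_{\vec s}|^2$ for the associated measurement distribution. Since computational difference sampling measures two independent copies in the computational basis and outputs the mod-$2$ sum of the outcomes, the left-hand side is simply the autocorrelation of $q$,
\begin{equation}
r_{\psi}\left(\vec a\right) = \sum_{\vec s \in \Fn} q\left(\vec s\right)\, q\left(\vec s + \vec a\right)\,,
\end{equation}
where I have used that in $\Fn$ the constraint $\vec s_1 + \vec s_2 = \vec a$ forces $\vec s_2 = \vec s_1 + \vec a$. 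The goal is to show that the right-hand side $\sum_{\vec b} p_{\psi}\left(\vec a, \vec b\right)$ collapses to exactly this expression.

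First I would compute the Pauli expectation entering $p_{\psi}$. Since $P_{\left(\vec a, \vec b\right)} = i^{\vec a \cdot \vec b} X^{\vec a} Z^{\vec b}$ is Hermitian, we have $\trace\left(\psi P_{\left(\vec a, \vec b\right)}\right)^{2} = \left|\trace\left(\psi P_{\left(\vec a, \vec b\right)}\right)\right|^{2}$. Expanding $\ket{\psi}$ in the computational basis and using $Z^{\vec b}\ket{\vec s} = \left(-1\right)^{\vec b \cdot \vec s}\ket{\vec s}$ together with $X^{\vec a}\ket{\vec s} = \ket{\vec s + \vec a}$ gives
\begin{equation}
\trace\left(\psi P_{\left(\vec a, \vec b\right)}\right) = i^{\vec a \cdot \vec b} \sum_{\vec s \in \Fn} \overline{\psi_{\vec s + \vec a}}\,\psi_{\vec s}\,\left(-1\right)^{\vec b \cdot \vec s}\,.
\end{equation}
The global phase $i^{\vec a \cdot \vec b}$ is annihilated upon taking the modulus, leaving (up to the factor $2^{-n}$ in $p_\psi$) the squared Fourier coefficient at frequency $\vec b$ of the function $f_{\vec a}\left(\vec s\right) := \overline{\psi_{\vec s + \vec a}}\,\psi_{\vec s}$.

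The key step is then to sum over $\vec b$ and invoke character orthogonality on $\Fn$, namely $\sum_{\vec b \in \Fn} \left(-1\right)^{\vec b \cdot \left(\vec s + \vec s'\right)} = 2^{n}\,\delta_{\vec s, \vec s'}$. Applying this (equivalently, Parseval's identity) yields
\begin{equation}
\sum_{\vec b \in \Fn} \left|\trace\left(\psi P_{\left(\vec a, \vec b\right)}\right)\right|^{2} = 2^{n} \sum_{\vec s \in \Fn} \left|f_{\vec a}\left(\vec s\right)\right|^{2} = 2^{n} \sum_{\vec s \in \Fn} q\left(\vec s + \vec a\right)\, q\left(\vec s\right)\,,
\end{equation}
where the last equality uses $\left|f_{\vec a}\left(\vec s\right)\right|^{2} = \left|\psi_{\vec s + \vec a}\right|^{2}\left|\psi_{\vec s}\right|^{2} = q\left(\vec s + \vec a\right)q\left(\vec s\right)$. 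Dividing by $2^{n}$, as dictated by the definition $p_{\psi}\left(\vec x\right) = 2^{-n}\trace\left(\psi P_{\vec x}\right)^{2}$, reproduces precisely the autocorrelation expression for $r_{\psi}\left(\vec a\right)$, completing the proof.

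This argument is a direct computation, so I do not expect a genuine obstacle; the only points requiring care are tracking the phase $i^{\vec a \cdot \vec b}$ and confirming that it is killed by the modulus (which follows from the Hermiticity of $P_{\left(\vec a, \vec b\right)}$), and correctly recognizing the inner sum over $\vec s$ as a Fourier transform on $\Fn$ so that Parseval applies cleanly.
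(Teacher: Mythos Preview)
Your argument is correct: the computation of $r_\psi$ as the autocorrelation of the computational-basis distribution, the expansion of $\trace(\psi P_{(\vec a,\vec b)})$ in the computational basis, and the use of character orthogonality to collapse the sum over $\vec b$ all go through exactly as you describe. Note, however, that the paper does not give its own proof of this statement at all; it is quoted as Proposition~8.4 of \cite{grewalEfficientLearningQuantum2024d} and used as a black box in the proof of \Cref{lem:relation-r-p}. Your direct Parseval-type computation is precisely the standard way to establish this identity and is what one would expect the original reference to contain.
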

Furthermore, we will need two more additional ingredients. First,
we will need the following fact:
\begin{fact}
\label{fact:projection_on_complement}
For any subspace $H\subseteq\mathbb{F}_{2}^{n}$
and a fixed $\vec v\in\mathbb{F}_{2}^{n}$,
\begin{equation}
\sum_{\vec a\in H}\left(-1\right)^{\vec a\cdot \vec v}=\begin{cases}
\left|H\right| & \vec v\in H^{\perp}\\
0 & \mathrm{else}
\end{cases}\,.
\end{equation}
\end{fact}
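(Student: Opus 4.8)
The plan is to recognize the left-hand side as a character sum over the finite abelian group $(H,+)$ and invoke the standard orthogonality relation for characters. The key observation is that, for fixed $\vec v$, the map $\chi_{\vec v}\colon H\to\{\pm 1\}$ given by $\chi_{\vec v}(\vec a)=(-1)^{\vec a\cdot\vec v}$ is a group homomorphism: this is because $\vec a\mapsto \vec a\cdot\vec v$ is $\mathbb{F}_2$-linear in $\vec a$ and $(-1)^{x+y}=(-1)^{x}(-1)^{y}$ as integers. So the sum in question is simply $\sum_{\vec a\in H}\chi_{\vec v}(\vec a)$, and everything reduces to deciding whether this character is trivial.

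First I would dispose of the case $\vec v\in H^{\perp}$. By the definition of the orthogonal complement, $\vec a\cdot\vec v=0$ for all $\vec a\in H$, hence $\chi_{\vec v}$ is identically $1$ on $H$ and the sum equals $|H|$. Next, for $\vec v\notin H^{\perp}$, there exists $\vec a_{0}\in H$ with $\vec a_{0}\cdot\vec v=1$. Since translation $\vec a\mapsto \vec a+\vec a_{0}$ is a bijection of $H$ onto itself, reindexing gives
\begin{equation}
\sum_{\vec a\in H}(-1)^{\vec a\cdot\vec v}
=\sum_{\vec a\in H}(-1)^{(\vec a+\vec a_{0})\cdot\vec v}
=(-1)^{\vec a_{0}\cdot\vec v}\sum_{\vec a\in H}(-1)^{\vec a\cdot\vec v}
=-\sum_{\vec a\in H}(-1)^{\vec a\cdot\vec v},
\end{equation}
so the sum is its own negative over $\mathbb{Z}$ and therefore vanishes.

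There is essentially no obstacle here: this is a textbook orthogonality identity and the argument is a few lines. The only mild subtlety worth flagging is that the restriction of the standard inner product to $H$ may be degenerate, but the proof never uses non-degeneracy — only the $\mathbb{F}_{2}$-linearity of $\vec a\mapsto\vec a\cdot\vec v$ and the fact that $H$ is closed under addition — so it goes through for an arbitrary subspace $H$.
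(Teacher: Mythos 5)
Your proof is correct. The paper states this as a standard fact without providing a proof, and your argument — the case $\vec v\in H^{\perp}$ giving $|H|$ trivially, and the translation trick $\vec a\mapsto\vec a+\vec a_{0}$ showing the sum equals its own negative when $\vec v\notin H^{\perp}$ — is exactly the canonical character-orthogonality argument one would supply.
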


Secondly, we need the fact that the Fourier transform of the characteristic
distribution $p_{\psi}$ with respect to the standard inner product
is given by $\hat{p}_{\psi}\left(\vec v,\vec w\right)=\frac{1}{2^{n}}p_{\psi}\left(\vec w,\vec v\right)$
as proved in Proposition 17 in \cite{grewalLowStabilizerComplexityQuantumStates2023a}.
\begin{lemma}[Proposition 17 in \cite{grewalLowStabilizerComplexityQuantumStates2023a}]
Let $\ket \psi$ be an $n$-qubit pure state and let $\hat{p}_{\psi}\left(\vec v, \vec w\right)$
denote the Fourier transform of $p_{\psi}\left(\vec a,\vec b\right)$ with respect
to the standard inner product on $\mathbb{F}_{2}^{n}$, such that
\begin{equation}
p_{\psi}\left(\vec a, \vec b\right)=\sum_{\vec v, \vec w\in\mathbb{F}_{2}^{n}}\hat{p}_{\psi}\left(\vec v,\vec w\right)\left(-1\right)^{\left(\vec a, \vec b\right)\cdot\left(\vec v, \vec w\right)}\,.
\end{equation}
Then,
\begin{equation}
\hat{p}_{\psi}\left(\vec v,\vec w\right)=\frac{1}{2^{n}}p_\psi\left(\vec w, \vec v\right).
\end{equation}
\end{lemma}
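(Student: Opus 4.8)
The plan is to invert the stated Fourier relation and recognize the inverse transform as a two-copy (swap-operator) expectation value. The forward transform reads $p_{\psi}(\vec a, \vec b) = \sum_{\vec v, \vec w}\hat{p}_{\psi}(\vec v, \vec w)(-1)^{\vec a\cdot\vec v + \vec b\cdot\vec w}$ over $\mathbb{F}_2^{2n}$, so orthogonality of the characters $(-1)^{\vec x\cdot\vec y}$ gives the inversion $\hat{p}_{\psi}(\vec v, \vec w) = 2^{-2n}\sum_{\vec a, \vec b}p_{\psi}(\vec a, \vec b)(-1)^{\vec a\cdot\vec v + \vec b\cdot\vec w}$. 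I would then write $p_{\psi}(\vec a, \vec b) = 2^{-n}\tr(\psi P_{(\vec a, \vec b)})^2 = 2^{-n}\tr\bigl((\psi\otimes\psi)(P_{(\vec a, \vec b)}\otimes P_{(\vec a, \vec b)})\bigr)$, which turns the whole sum into $\hat{p}_{\psi}(\vec v, \vec w) = 2^{-2n}\,\tr\bigl((\psi\otimes\psi)\,\Omega\bigr)$, where $\Omega := 2^{-n}\sum_{\vec a, \vec b}(-1)^{\vec a\cdot\vec v + \vec b\cdot\vec w}\,P_{(\vec a, \vec b)}\otimes P_{(\vec a, \vec b)}$. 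The claim thus reduces to showing $\tr\bigl((\psi\otimes\psi)\,\Omega\bigr) = \tr(\psi P_{(\vec w, \vec v)})^2 = 2^n\,p_{\psi}(\vec w, \vec v)$.

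The key computational step is to absorb the sign $(-1)^{\vec a\cdot\vec v + \vec b\cdot\vec w}$ into a conjugation. Setting $W := X^{\vec w}Z^{\vec v}$, a short calculation with the Pauli commutation relations $Z^{\vec v}X^{\vec a} = (-1)^{\vec a\cdot\vec v}X^{\vec a}Z^{\vec v}$ (and its $X$-analogue) yields $W P_{(\vec a, \vec b)}W^{\dagger} = (-1)^{\vec a\cdot\vec v + \vec b\cdot\vec w}P_{(\vec a, \vec b)}$, since the phase $i^{\vec a\cdot\vec b}$ is unaffected by conjugation. Hence $\Omega = (W\otimes I)\bigl(2^{-n}\sum_{\vec x}P_{\vec x}\otimes P_{\vec x}\bigr)(W^{\dagger}\otimes I)$, and invoking the standard identity $2^{-n}\sum_{\vec x\in\mathbb{F}_2^{2n}}P_{\vec x}\otimes P_{\vec x} = \mathrm{SWAP}$ for the phaseless Pauli basis, this becomes $\Omega = (W\otimes I)\,\mathrm{SWAP}\,(W^{\dagger}\otimes I)$. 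Using cyclicity of the trace together with $\tr\bigl(\mathrm{SWAP}\,(A\otimes B)\bigr) = \tr(AB)$, I obtain $\tr\bigl((\psi\otimes\psi)\,\Omega\bigr) = \tr\bigl(W^{\dagger}\psi W\,\psi\bigr) = \tr\bigl(\psi\,W\psi W^{\dagger}\bigr)$.

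The step where purity is essential — and the only place it enters — is the evaluation of $\tr(\psi\,W\psi W^{\dagger})$: for a pure $\psi = \ket{\psi}\!\bra{\psi}$ this factorizes as $\lvert\bra{\psi}W\ket{\psi}\rvert^2$, whereas for a mixed state it would not collapse to $\lvert\tr(\psi W)\rvert^2$. Since $W = i^{-\vec w\cdot\vec v}P_{(\vec w, \vec v)}$ differs from a phaseless Pauli only by a global phase, and $\tr(\psi P_{(\vec w, \vec v)})$ is real (phaseless Paulis being Hermitian), the phase drops out and $\lvert\bra{\psi}W\ket{\psi}\rvert^2 = \tr(\psi P_{(\vec w, \vec v)})^2 = 2^n\,p_{\psi}(\vec w, \vec v)$. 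Substituting back gives $\hat{p}_{\psi}(\vec v, \vec w) = 2^{-2n}\cdot 2^n\,p_{\psi}(\vec w, \vec v) = 2^{-n}\,p_{\psi}(\vec w, \vec v)$, as claimed; note that the argument swap $(\vec v, \vec w)\mapsto(\vec w, \vec v)$ matches $W = X^{\vec w}Z^{\vec v}$ carrying the Pauli label $(\vec w, \vec v)$, a useful sanity check. I expect the only mildly delicate points to be the bookkeeping of Pauli phases in the conjugation identity and making explicit that purity is exactly what licenses the factorization in the final step.
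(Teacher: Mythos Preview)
Your proof is correct. The paper does not actually supply its own proof of this lemma; it simply cites it as Proposition~17 of \cite{grewalLowStabilizerComplexityQuantumStates2023a} and uses it as a black box in the proof of \Cref{lem:relation-r-p-restated}. So there is nothing to compare against in this paper.

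For what it is worth, your argument is clean and self-contained: inverting the Fourier transform, lifting the sum to two copies, absorbing the character $(-1)^{\vec a\cdot\vec v+\vec b\cdot\vec w}$ into a conjugation by $W=X^{\vec w}Z^{\vec v}$, invoking the swap identity $2^{-n}\sum_{\vec x}P_{\vec x}\otimes P_{\vec x}=\mathrm{SWAP}$, and then using purity to collapse $\tr(\psi W\psi W^{\dagger})$ to $|\bra{\psi}W\ket{\psi}|^{2}$. All the bookkeeping checks out, including the argument swap $(\vec v,\vec w)\mapsto(\vec w,\vec v)$ coming from the label of $W$. You are also right that purity is used exactly once, in that last factorization.
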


With these ingredients, we are ready to prove \Cref{lem:relation-r-p-restated}:

\begin{proof}[Proof of Lemma \ref{lem:relation-r-p-restated}]
 By \Cref{lem:correspondence-r-p-individual}, for the subspace
$H\subseteq\mathbb{F}_{2}^{n}$, we have
\begin{equation}
r_{\psi}\left(H\right)=\sum_{\vec a\in H}\sum_{\vec b\in\mathbb{F}_{2}^{n}}p_{\psi}\left(\vec a, \vec b\right)\,,
\end{equation}
and plugging in the Fourier transform $p_{\psi}\left(\vec a, \vec b\right)=\sum_{\vec v,\vec w}\hat{p}_\psi \left(\vec v, \vec w\right)\left(-1\right)^{\vec a\cdot \vec v+ \vec b\cdot \vec w}$
we find
\begin{equation}
r_{\psi}\left(H\right)=\sum_{\vec v\in\mathbb{F}_{2}^{n}}\sum_{\vec w\in\mathbb{F}_{2}^{n}}\hat{p}_\psi\left(\vec v,\vec w\right)\sum_{\vec a\in H}\left(-1\right)^{\vec a\cdot v}\sum_{\vec b\in\mathbb{F}_{2}^{n}}\left(-1\right)^{\vec b\cdot \vec w}.
\end{equation}
Now, by  \Cref{fact:projection_on_complement}, we have that
\begin{equation}
\sum_{\vec b\in\mathbb{F}_{2}^{n}}\left(-1\right)^{\vec b\cdot \vec w}=\begin{cases}
2^{n} & \vec w = \vec 0_n,\\
0 & \mathrm{else}.
\end{cases}
\end{equation}
and furthermore,
\begin{equation}
\sum_{\vec a\in H}\left(-1\right)^{\vec a\cdot \vec v}=\begin{cases}
\left|H\right| & v\in H^{\perp},\\
0 & \mathrm{else}.
\end{cases}\,
\end{equation}
Hence, using $\hat{p}_\psi \left(\vec v, \vec w\right)=\frac{1}{2^{n}}p_\psi \left(\vec w, \vec v\right)$,
we find that
\begin{align}
r_{\psi}\left(H\right) & =\left|H\right|\sum_{v\in v\in H^{\perp}}p_\psi\left(\vec 0_{n},v\right)\\
 & =\left|H\right|\,p_{\psi}\left(\vec 0_{n}\times H^{\perp}\right)\,.
\end{align}
as claimed.
\end{proof}

\end{document}